\pdfoutput=1
\documentclass[]{article}
\usepackage{cite}
\usepackage{amssymb,amsmath,amsthm}
\usepackage[a4paper, scale = 0.8]{geometry}
\usepackage[colorlinks,linkcolor=blue,anchorcolor=blue,citecolor=red]{hyperref}
\usepackage[dvipdfm]{graphicx}
\usepackage{float}
\usepackage{subfigure}



\newtheorem{myTheorem}{Theorem}
\newtheorem{myDefinition}[myTheorem]{Definition}
\newtheorem{myLemma}[myTheorem]{Lemma}
\newtheorem{myRemark}[myTheorem]{Remark}
\newtheorem{myCorollary}[myTheorem]{Corollary}
\newtheorem{myExample}[myTheorem]{Example}
\newtheorem{myProblem}[myTheorem]{Problem}
\newtheorem{myProposition}[myTheorem]{Proposition}

\newcommand{\Ito}{It\^{o}}
\newcommand{\Gateaux}{{G\^{a}teaux}}
\newcommand{\Doleans}{Dol\'{e}ans}
\newcommand{\RadonNikodym}{Radon-Nikod\'{y}m }

\DeclareMathOperator{\Var}{Var}
\DeclareMathOperator{\Std}{Std}
\DeclareMathOperator{\Cov}{Cov}

\DeclareMathOperator{\myd}{d\!}

\DeclareMathOperator{\lev}{lev}

\newcommand{\myangle}[1]{\left\langle {#1}\right\rangle}

\title{Continuous-Time Risk Contribution of the Terminal Variance and its Related Risk  Budgeting Problem}
\author{ Guangyan JIA\footnote{Zhongtai Securities Institute for Financial Studies, Shandong University, Jinan, P.R. China. Email: jiagy@sdu.edu.cn},  Mengjin ZHAO\footnote{Zhongtai Securities Institute for Financial Studies, Shandong University, Jinan, P.R. China. Corresponding author. Email: zhaomj@mail.sdu.edu.cn}}
\date{\today}

\begin{document}
\maketitle

\begin{abstract}
    To achieve robustness of risk across different assets, risk parity investing rules, a particular state of risk contributions, have grown in popularity over the previous few decades.
    To generalize the concept of risk contribution from the single-period case to the continuous-time case,  we characterize the terminal variance's marginal risk contribution as a process through the {\Gateaux} differential and {\Doleans} measure. 
    Meanwhile, the risk contributions we extend here have the aggregation property. 
    Total risk can be represented as the aggregation of individual risk contributions among different assets and $(t,\omega)\in[0,T]\times \Omega$. 
    Subsequently, as an inverse target --- allocating risk, the risk budgeting problem of how to obtain policies whose risk contributions coincide with pre-given risk budgets in the continuous-time case is also explored in this paper. 
    These risk-budgeted policies are related to stochastic convex optimizations parametrized by associated risk budgets.
    We also give the economic interpretations on this optimization objective.
    On the application side, we put three examples to state the connections with other strategies regarding risk contribution/budgeting.
\end{abstract}

\section{Background}
\label{Section:Background}

Since the first ten years of this century, risk budgeting models have gotten much attention from academics and practitioners.
These portfolio construction methods do not require expected returns as inputs and calculate asset positions by appointing risk contributions under a predefined risk measure.

Just as Pearson stated in \cite{pearson_risk_2002}, risk budgeting acts as a 
\begin{quotation}
	`process of measuring and decomposing risk, using the measures in asset-allocation decisions, assigning portfolio managers risk budgets defined in terms of these measures, and using these risk budgets in monitoring the asset allocations and portfolio managers'.
\end{quotation}
Specifically, we need to explain the term risk contribution before considering the term budget.
This concept implies how we decompose the total risk into individual ones in some manners.
The contributions to risk can be classified into two types: risk contributions to the individual assets in the portfolio or specified risk factors.
In the second type, the returns of assets are often expressed in a linear model by factors such as market, momentum and etc.
However, this linear structure has significant limitations, and there is no definite conclusion as to which factors are recognized as effective.
We are studying the first type in this paper, which focuses only on the assets.
For a specific asset, the risk contribution is the quantity times the marginal growth of risk.
The quantity is the weights or shares on this asset.
The marginal growth of risk is calculated as the derivative of total risk with respect to the position of this asset.
When the predefined risk measure is positive homogeneous, the total risk coincides with the aggregation of individual risk contributions.
So we can use this tool, risk contribution,  to calculate how much the asset contributes to the total risk.
In this heuristic manner, people can construct a risk-budgeted portfolio in terms of risk, not weights, to avoid concentration on the risk.
Among those budget-based models, risk parity is one of the most common cases, equalizing individual risk contributions. 
We will give a formal description of related concepts such as risk contribution in Section~\ref{Section: Contribution}. The following example will be helpful to understand these concepts quickly.

\begin{myExample}
	\label{example:single-period}
	Suppose there are $d$ assets with their $\mathbb R^d$-valued random returns $r$ and the positive definite covariance matrix is denoted as $\Lambda \in \mathbb R^{d\times d}$.  We use $w\in \mathbb R^d$ to denote the weights on the assets, which describes how the total capital is allocated. To state the risk contribution of each asset, we set the standard error as the risk measure to quantify the total risk. The measure $f$ satisfies the Euler property
	\begin{equation}
	\label{fml:eular_in_example}
	f(w) := \sqrt{w^\top \Lambda w} = \sum_{i = 1}^d w_i \dfrac{\partial f}{\partial w_i}.
	\end{equation}
	Then the marginal risk contribution of the $i$-th assets is defined as $\dfrac{\partial f}{\partial w_i}$ and the risk contribution is defined as $w_i \dfrac{\partial f}{\partial w_i}$. Given a risk budget vector $\beta \in \mathbb R_+^d$ which represents the pre-determined relative risk level of the underlying assets, the $\beta$-budgeted portfolio should satisfy $(w_i \dfrac{\partial f}{\partial w_i})/(w_j \dfrac{\partial f}{\partial w_j}) = \beta_i /\beta_j$ for arbitrary $i,j$. As a special case, the risk contribution of risk parity portfolio is equally distributed, namely $w_i \dfrac{\partial f}{\partial w_i} = w_j \dfrac{\partial f}{\partial w_j}$ for arbitrary $i,j$. One can find the optimal solution $w^\star$ through some algorithms to minimize the heuristic cost function
	\begin{equation}
	\label{fml:loss_in_example}
	J(w) = \sum_{i,j}\left (w_i \dfrac{\partial f}{\partial w_i}/\beta_i - w_j \dfrac{\partial f}{\partial w_j}/\beta_j\right )^2
	\end{equation}
	or some other similar functions. 
	
	Here is a concrete calculation with the covariance matrix 
	$\Lambda = \begin{bmatrix}
		0.0900 & 0.0480 & 0.0225 \\ 
		0.0480 & 0.0400 & 0.0090 \\ 
		0.0225 & 0.0090 & 0.0225 \\ 
			\end{bmatrix}$. 
	Figure~\ref{fig:single-priod-example} gives the comparison on the weights and the risk contributions among equally weighted, risk parity, and minimized variance portfolios. 
	It shows that the minimized variance portfolio is aggressive with a negative weight on the first asset and implies the concentration in terms of risk contribution on the latter two assets. 
	However, because of the existence of correlation, the equally weighted strategy also shows the risk concentration on the first asset.
	Compared with the other two strategies, the risk parity strategy behaves more robust both on the weight term and the risk contribution term.
	\begin{figure}[H]
		\centering 
		\subfigure[Weights of Assets]{
			\includegraphics[width = 3.2 in]{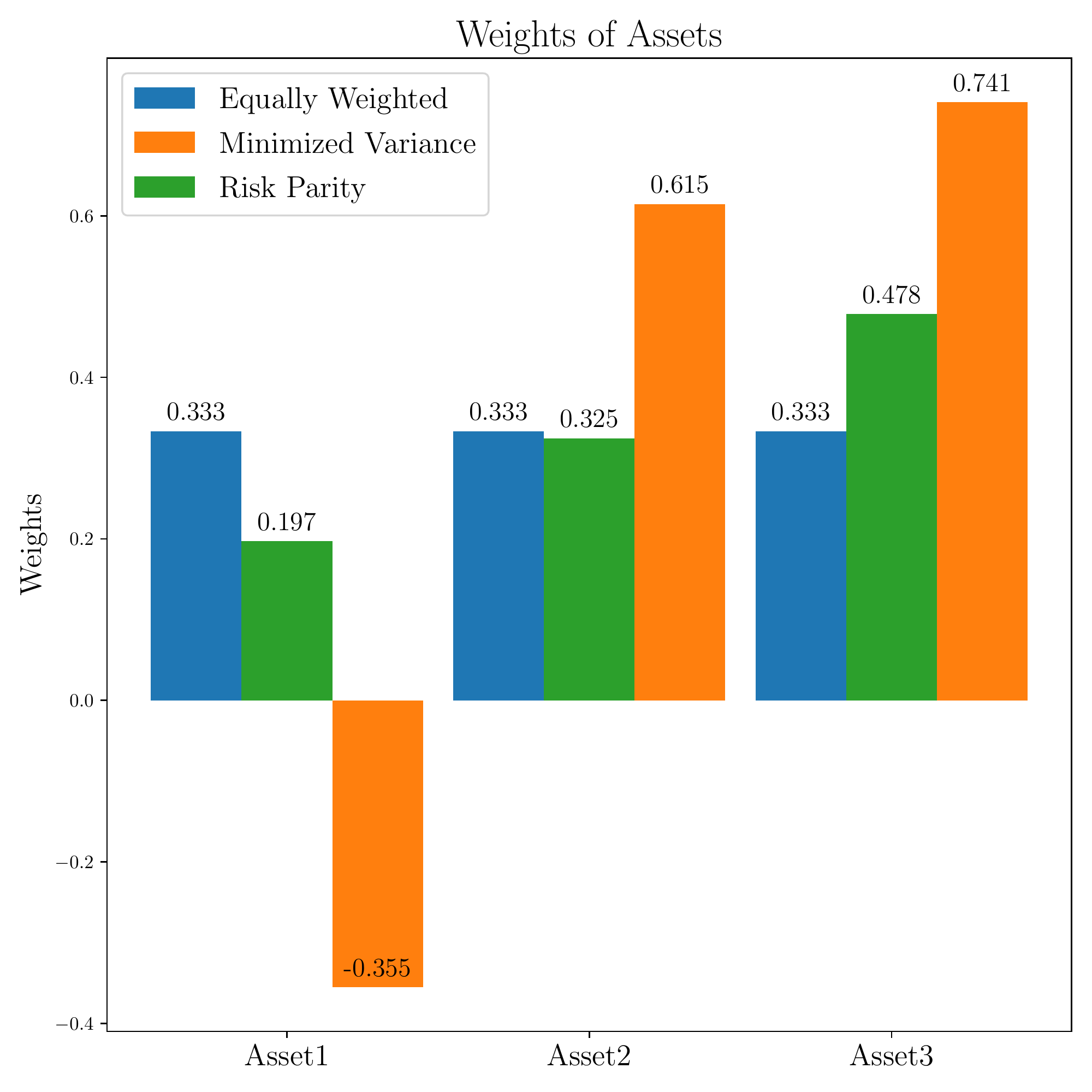}}
		\subfigure[Risk Contribution of Assets]{
			\includegraphics[width = 3.2 in]{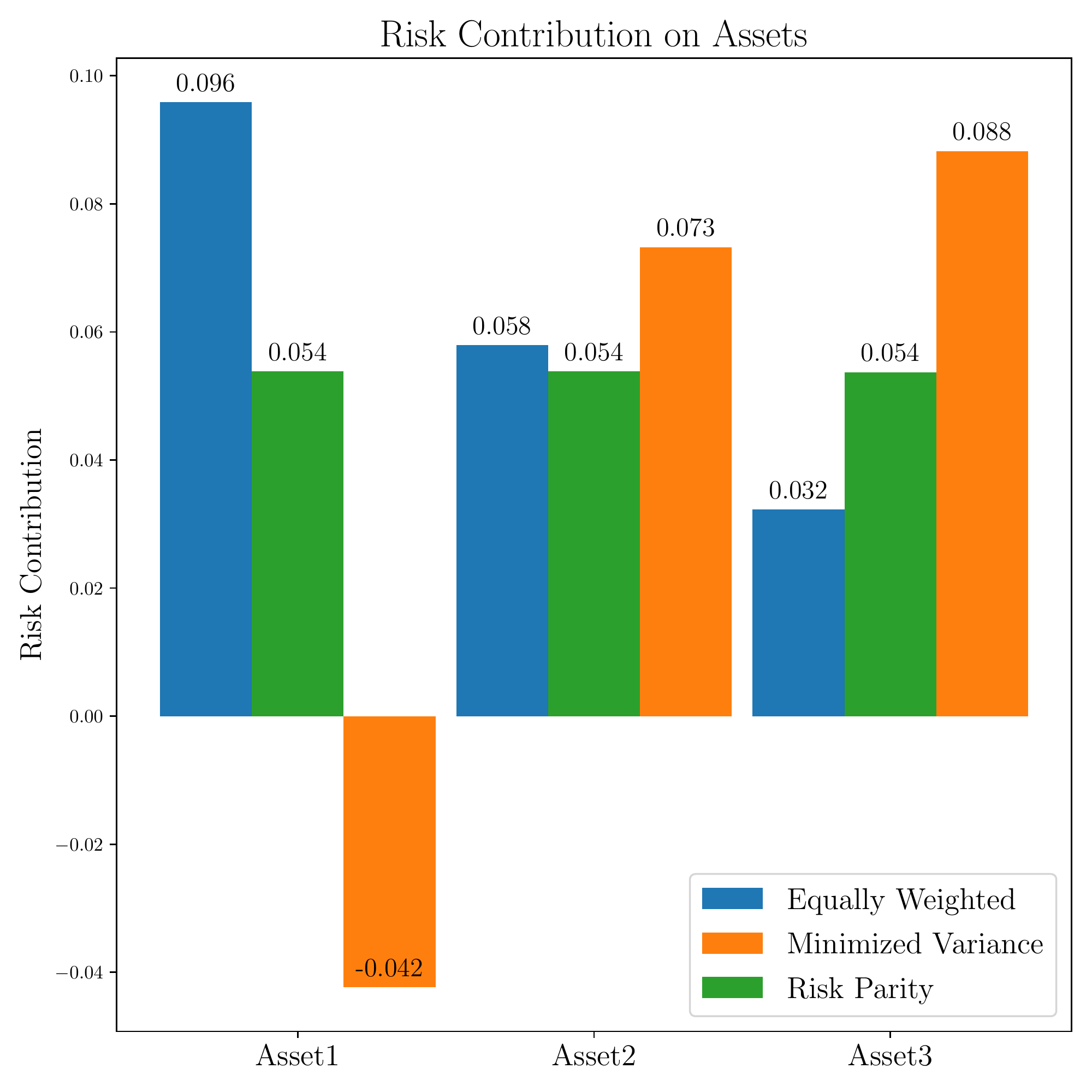}}
		\caption{Comparison of Three Strategies}
		\label{fig:single-priod-example}
		\end{figure} 
\end{myExample}

As a risk-based strategy, the idea of risk budgeting can date back to Markowitz's Nobel-prize-winning result, which quantifies the risk and gives the optimal solution of the mean-variance tradeoff framework.
Along with Markowitz's pioneering work, Merton's analytic answer to the efficient frontier \cite{merton1972analytic} is regarded as the essential foundation for portfolio design in a single period.
After this, kinds of literature on multi-period portfolio selections were dominated for years by the maximizing expected utility framework, and we recommend these works \cite{samuelson1975lifetime,merton1992Continuous}.
Similar to the expectation-utility framework, the continuous-time mean-variance problem was obtained by \cite{li_optimal_2000, zhou2000continuous} using the stochastic linear-quadratic framework with an embedding technique.
These works are more about maximizing utility or balancing return and risk.
Sensitivity analyses were performed \cite{best1991sensitivity1, best1991sensitivity2} on mean-variance portfolio selections that were substantially quadratic programming, meaning a loss of robustness when meeting non-stable inputs.
However, the minimization of the total risk always implies the concentration on weights.
Based on the market features and underlying assets, \cite{fernholz_stochastic_2002} shows that portfolios with diversification can beat the market portfolio in the stochastic language.
In light of policy constraints, one may seek a compromise between diversification and variance minimization objectives.
$1/N$-weighted, namely the equally weighted portfolios, can be considered an absolute diversification policy without knowledge on risk; and Markowitz's global variance minimized portfolio based on its covariance matrix is verified to take concentrations on those low-correlation assets.
Mixtures, such as portfolios with maximal diversification, equal risk contribution, and inverse-volatility weights, come between these two extremes.
The term risk parity was coined by \cite{qian2005risk}, and the analytical properties of risk parity portfolios were explored in \cite{maillard2010properties}.
They showed that the long-only risk parity portfolio is unique and lies between the minimized variance and equally weighted portfolios.
Risk parity and risk-budgeted portfolios are also widespread in practice.
Bridgewater Associates were the first to launch a risk parity fund on the asset management side. 
Equal risk contributions indices are also available (EURO iSTOXX 50 Equal Risk, for example).

The reasons for using risk budgeting strategies are diverse.
Paying more attention to the total risk, portfolio construction techniques based on decreasing risk often give results concentrating on low-risk assets.
Consequently, the mean-variance framework of Markowitz is not robust when the inputs are unstable.
Given this situation, people prefer to give regular constraints on the positions.
The second reason is the low-risk anomaly.
Empirical literature states that stocks with the lowest risks have higher returns than stocks with the sub-lowest risks.
This phenomenon occurs in the cross-section and in the time axis.
Risk parity strategies may be motivated through the low-risk anomaly.
The last reason can be attributed to the information capacity.
Forecasting the future returns is not easy, and no expectation of future returns is needed in the risk budgeting world.
Equal distribution on risk can give true diversification and thus should prevent the portfolio from losing better than other portfolios.

As the result of single-period investments, the concept risk contribution acts as a tool to detect if the portfolio is concentrated in risk.
Respectively, continuous-time investments should also have this tool to measure the concentration on their risk.
Unlike the single-period case, the concept of risk contribution may depend not only on assets but also on the stochastic intervals, i.e., the subsets of $\Omega \times [0,T]$.
For example, the single-period case can not answer this question: how much does the $i$-th asset contribute to the total risk when $S^i$ values in $[s_1,s_2]$ with $t\in [t_1,t_2]$?
Respectively, the risk budgets are supposed to be allocated over $\Omega \times [0,T]$.
In other words, the changes in the measurability of assets' dynamics, risk contributions, and risk budgets are essential to risk contribution/budgeting problems, and these changes do not appear in the single-period case.
We notice that the concentration occurs in the cross-section and on the time axis. 
We think continuous-time risk budgeted investments are attractive to catch the extra benefits from the low-risk anomaly phenomenon.
As far as we know, there is no such technique to define the risk contribution in the continuous-time case. It is interesting to generalize (marginal) risk contribution in the framework of stochastic analysis.

Inspired by the work of \cite{zhou2000continuous}, where the continuous-time mean-variance strategy is studied in a stochastic control framework, and that of \cite{maillard2010properties}, where the single-period risk budgeting problem can be solved by a variance optimization with some constraints within the covariance matrix, we want to develop the risk budgeting theory in the continuous-time case.
As a prerequisite, the concept of continuous-time (marginal) risk contribution should be adequately defined.
As the echo of how single-period risk contribution/budgeting works, this paper addresses the following points:
\begin{itemize}
	\item (Defining Risk Contribution) Without the structure of terminal wealth in the form of a covariance matrix, how can the risk contribution of policies be quantified? Additionally, does the risk aggregation property hold in the continuous-time case in which the policy is no longer a constant?
	\item (Posing Risk Budgeting) Assuming that risk contributions and budgets are stochastic processes, can we formulate and solve the inverse problem of how to align the desired policy with the pre-determined risk budget?
	\item (Connecting) What is the relationship between continuous-time and single-period solutions?
\end{itemize}
Also, we put three examples to show how the risk contribution/budget is linked to some known results in the continuous-time case.

The remainder of this paper is structured as follows.
In Section~\ref{Section: Contribution}, we will give some necessary axiomatic introduction to these concepts appearing in Example~\ref{example:single-period}.
Then we use the {\Gateaux} derivative of the terminal variance to derive a continuous process and define it as the marginal risk contribution.
Immediately afterward, the risk contribution was also appropriately defined.
In Section~\ref{Section: Budgeting}, we provide a convex stochastic optimization problem similar to the result \cite{maillard2010properties} presented in the single-period case, and the optimal policy to the problem happens to match the pre-given risk budget process.
In addition, we give some explanations for this optimization.
Moreover, classic single-period solutions are interpreted here as the solutions to those optimizations associated with the projection of the original risk budget process onto the naive $\sigma$-algebra.
We build a policy congruent with the work in \cite{moreira2017volatility} using regularised risk-parity optimization in Subsection~\ref{Section: JF} to demonstrate how the risk budgeting strategy behaves on the time axis.
Their volatility-timing strategy, motivated by the factor Sharpe ratio, reflects the robustness of the risk basket on the time axis and essentially is a risk parity strategy in our risk budgeting view.
To illustrate, we will cover risk contributions, risk-parity budgeting, and naive projections for the SABR model as an example in Subsection~\ref{Section: SABR}.
Coefficients in the SABR model setup can essentially influence the risk contribution of investment policies.
The influence of the coefficients states the connection between risk contribution and options of the underlying assets.
In Subsection~\ref{Section: Continuous-time MV Example}, we will give the risk contribution of continuous-time mean-variance strategies admitted by \cite{zhou2000continuous} and the result shows that there are concentrations in terms of risk contribution for these mean-variance strategies.
Finally, in Section~\ref{Section: Concluding}, we conclude with a discussion of the enhancements from the classic case and suggest several potential limitations as open problems.

\section{Characterize Risk Contribution} 
\label{Section: Contribution}
\subsection{A Formal Revisiting of Single-Period Risk Contribution/Budgeting}
This section extends the concept of risk contribution for the terminal variance to the continuous-time case, in which the risk contribution is characterized as a continuous process. 
In order to address fundamental questions regarding both the single-period case and the continuous-time case, it is imperative to have a prior and formal understanding of the tenets of risk contribution. 
To this goal, we address three concepts: risk measure, marginal risk contribution, and risk contribution.

Let $\rho(X)$ be the risk measure of the random variable $X$ on the probability space $(\Omega, \mathcal {F}, \mathbb {P})$, which is the position of the investment. 
Throughout this paper, we consider the variance, or standard deviation, as the risk measure.
In order to be acceptable in terms of a risk allocation principle, we give the axiomatic definition of the deviation risk measure, of which standard deviation is a particular case.
\begin{myDefinition}[Deviation Risk Measure,\cite{rockafellar_generalized_2006}]
	\label{def:DevMeasure}
	A deviation risk measure is a functional $\rho_D:L^2(\Omega;\mathbb {P}) \to [0,\infty]$ satisfying:
	\begin{enumerate}
		\item [D1] $\rho_D(h X) = h \rho_D(X)$ for all $h >0$;
		\item [D2] $\rho_D(X + Y)\leq \rho_D(X) + \rho_D(Y)$ for $X, Y \in L^2(\Omega;\mathbb {P})$;
		\item [D3] $\rho_D(X+C) = \rho_D(X)$ for $X\in L^2(\Omega;\mathbb {P})$ and $C\in \mathbb R$;
		\item [D4]$\rho_D(C) = 0$ for $C$ an arbitrary constant, and $\rho_D(X)>0$ for $X$ any non-constant random variable.
	\end{enumerate}
\end{myDefinition}

Suppose that there are $d$ assets in the market with the $\mathbb R^d$-valued random vector $r$ acting as the value of those assets and with the shares $\mathbb R^d$-valued $u$ investors hold. 
Then we have the random variable $X = u^\top r$ as the associated investment outcoming.
Hence we can talk about the sensitivity analysis of the risk measure.
This idea is based on marginal analysis and was proposed by \cite{litterman1996hot}.
Indeed, for a differentiable risk measure $\rho$, the marginal growth with respect to the $i$-th asset is calculated as $ \dfrac{\partial \rho(u^\top r)}{\partial u_i}$ and implies an infinitesimal property
\begin{equation}
\label{eq:infinitesimal-property}
\rho (w^\top r + \theta \delta_i) = \rho(w^\top r ) + \theta  \dfrac{\partial \rho(u^\top r)}{\partial u_i} + o(\theta).
\end{equation}
If the risk measure is positive homogeneous, then the function $u\mapsto \rho(u^\top r)$ is also positive homogeneous and implies the Euler homogeneity principle
\begin{equation}
\label{fml:EularDecomposition}
\rho(u^\top r) = \myangle{u,\dfrac{\partial \rho(u^\top r)}{\partial u}} = \sum_{i = 1}^d u_i\dfrac{\partial \rho(u^\top r)}{\partial u_i}.
\end{equation}
\cite{kalkbrener_axiomatic_2005} shows that this Euler allocation principle is the only risk allocation method in an axiom system for capital allocation.
Realizing these things,  we can show the definition of risk contribution.
\begin{myDefinition}[Risk Contribution/Budgeting - Single Period]
	\label{def:S-Contribution}
	Let the function $u\mapsto \rho(u^\top r)$ be a continuously differentiable risk measure(not necessary a deviation risk measure).
	\begin{itemize}
		\item Marginal risk contribution of portfolio $u$ in a vector style is defined by
		      $$
			      c^u:=\dfrac{\partial \rho (v^\top r)}{\partial v}\Big | _{v = u}
		      $$
		      where the $i$-th element is the marginal risk contribution on $i$-th asset.
		\item Risk contribution of portfolio $u$ in a vector style is\footnote{We denote $a^{(i)}$ as the $i$-th element of the vector $a$. With $\mathbb R^d$-valued $a$ and $b$, vector $a\odot b$ is defined by $a\odot b := [a^{(1)}b^{(1)},\dots, a^{(d)}b^{(d)}]^\top$.}
		      $$
			      k^u := u\odot c =  u\odot \dfrac{\partial \rho (v^\top r)}{\partial v}\Big | _{v = u}
		      $$
		      of which the $i$-th element is the risk contribution on the $i$-th asset.
	\end{itemize}
	Particularly,  a policy $u$ is called risk-parity if $k^u = u\odot c^u = \lambda e_d$ with notation $e_d = [1,...,1]^\top $ for some $\lambda \in (0,\infty )$. In other words, the risk contributions on these assets are equal
	\begin{equation}
		\label{fml:S-ERCP}
		k ^{(i)} = u^{(i)}c^{(i)} = u^{(j)}c^{(j)} = k^{(j)} =  \lambda , \text{ for each } i,j.
	\end{equation}

	When exogenously given a budget $\beta \in \mathbb {R}_+^d$, the risk budgeting problem aims to find a suitable policy $u^\star$ satisfying
	\[u^\star \in \Big\{u\Big | u\odot \dfrac{\partial \rho (v^\top r)}{\partial v}\Big | _{v = u} = \beta  \Big \}.\]
\end{myDefinition}

Obviously, from the above point of view, the variance does not meet those properties deviation risk measures. 
However, Euler's aggregation principle, together with the marginal contribution, still holds and is similar to standard deviations.
Standard deviation is a deviation risk measure, and we can see 
\[u\odot \dfrac{\partial \Var(u^\top r)}{\partial u} = 2\Std (X) \left[ u\odot \dfrac{\partial \Std (u^\top r)}{\partial u} \right]\]
and 
\[\myangle{u, \dfrac{\partial \Var(u^\top r)}{\partial u}} = 2\Var (X).\]
The risk contribution of variance is the scaled one of the standard deviation measure with the coefficient $\dfrac{1}{2\Std (X)}$ and the proportion of $i$-th varaince contribution to the standard deviation contribution is identical to that of $j$-th asset
\[
\left[u\odot \dfrac{\partial \Var(u^\top r)}{\partial u}\right]_i\Big /\left[u\odot \dfrac{\partial \Var(u^\top r)}{\partial u}\right]_j = \left[ u\odot \dfrac{\partial \Std (u^\top r)}{\partial u} \right]_i\Big /\left[ u\odot \dfrac{\partial \Std (u^\top r)}{\partial u} \right]_j.
\]
Euler's principle is also valid for the variance with the scaled coefficient $\dfrac{1}{2}$.
In this sense, it is equivalent to considering the risk contribution of the variance. 
For convenience, we can set the marginal risk contribution of the variance as $\dfrac{1}{2}\dfrac{\partial \Var(u^\top r)}{\partial u}$  and will talk about the continuous-time risk contribution of the terminal variance in the following sections.

\subsection{Technical Preliminaries}
We will characterize the associated risk contribution and marginal risk contribution of the terminal variance acting as the risk measure. 
The marginal risk contribution of a given policy will be firstly derived in a signed measure form whom the integral of the policy with respect to is the terminal variance. 
Then the \RadonNikodym derivative of this signed measure is proved to be a continuous process meaning that total risk is continuously aggregated. 
Some preliminary settings are listed below.

With the time horizon $[0,T]$, we let $(\Omega, \mathcal F, \mathbb P)$ be the random basis equipped with a filtration $\mathbb F = \{\mathcal F_t\}_{t\in [0,T]}$ and $\{B_t\}_{t\in [0,T]}$ is an $m$-dim Brownian motion on this space. 
We also assume that $\mathbb F$ is the natural filtration generated by the Brownian motion $B$ and completed by the collection of $\mathbb P$-null sets $\mathcal N$, i.e., 
$$
\mathcal F^0_t = \sigma (B_s,s\leq t); \mathcal F_t = \mathcal F_t^0\vee \mathcal N
$$
with $\mathcal F = \mathcal F_T$.
The processes we discussed in this paper are assumed to be $\mathbb F$-adapted.  
We also denote $\Sigma_p$ the predictable $\sigma$-algebra  and $H\circ X$ the stochastic integral. 
We use the notation $L^p = L^p(\Omega,\mathcal F;\mathbb{ P })$ for the space of random variables.

\paragraph{Assets}
The value processes of assets are characterized as a sequence of continuous special semi-martingales in 
\[\mathcal H^2_{\mathcal S} := 
\left \{X\left|   \Big \lVert \left\langle M\right\rangle^{1/2}_T\Big \rVert_{L^2} + \Big \lVert\int_0^T |\myd A_s|\Big \rVert_{L^2}< \infty , \text{ with the carnonical decomposition }X = X_0 + A + M.\right.\right \}\]
to make sure that the value of assets admits a finite variance.
Without the loss of generality, the dynamics of assets value are assumed to be in the following form
\[
\label{fml:dynamic_of_assets}
\left \{
\begin{aligned}
\myd & S^{(i)}_t = b^{(i)}_t\myd t + {\sigma_t^{(i)}} \myd B_t\\
 & S^{(i)}_0 = s_0^{(i)}
\end{aligned}
\right., \mbox{ for } i = 1,\dots,d
\]
where the instantaneous diffusion is a multi-dim process $\sigma^{(i)}_t$, the $i$-th row of $\sigma _t = [\sigma^{(i,j)}_t]_{i,j = 1}^{d,m}$. We shall denote the $\mathbb R ^d$-valued process $S = [S^{(1)}, S^{(2)},\dots, S^{(d)} ]^\top$ as the assets for short.

\paragraph{Policy/Control}
The policy $u$ is assumed to be a $\mathbb R^d$-valued predictable process with $u\in \mathbb S^\infty$ where
\[\mathbb S^\infty = \Big\{X\Big|\big \lVert X\big \rVert_{\mathbb S^\infty}:=\big \lVert\sup_{s\leq T}|X_s|\big \rVert_{L^\infty}<\infty\Big\}.\] 
For every $t$, the random variable $u_t$ is the shares we buy at time $t$, and process $u \in \mathbb S^\infty$ implies that we cannot afford to hold infinitely many shares in any case.
\paragraph{Investment Process}
The {\Ito}-type integral of the policy $u$ with respect to the assets $S$ is defined as the value of investment $X^u$ where 
\begin{equation}
\label{eq:investment}
X^u_t =x_0+  (u ^ \top \circ S)_t =x_0 + \int_{0}^{t}u_\tau ^\top\myd S_\tau\text{ , for } t \in [0,T]
\end{equation}
is the value of investment $X^u$ at time $t$ with its initial wealth $x_0$. By the way, the investment is called a self-financing portfolio in addition if
\begin{equation}\label{eq:portfolio}
X_t^u = u_t^\top S_t = \sum_{i = 1}^d u^{(i)}_tS^{(i)}_t
\end{equation}
for each $t$. 
With $u\in \mathbb S^\infty$ and $S \in \mathcal H ^2_{\mathcal S}$, it's easy to check $X^u$ is in $\mathcal H ^2_{\mathcal S}$, and we can say that the investment $X^u$ is still variance-finite. 

\paragraph{Terminal Variance as the Risk Measure}
Here we consider the terminal variance $\text{Var}((u\circ S)_T)$ of the portfolio $X^u$ as its risk measure where
$$
\text{Var}((u\circ S)_T) = \mathbb E[(u\circ S)_T^2]- (\mathbb E (u\circ S)_T)^2.
$$
It's easy to check that  $u \mapsto \Var((u\circ S)_T)$ is a convex functional. Moreover, the space $\mathcal H^2_{\mathcal S}$ ensures that the investment process $X^u$ has a second-order moment implying a finite terminal variance.
\begin{myDefinition}[Non-Degeneration]
	\label{def:ND-0}
	The market is non-degenerate if we always have 
	$$
	\text{Var}((u\circ S)_T) >0
	$$
	for arbitrary policy $u\neq 0$.
\end{myDefinition}
With the slope mapping $\theta \mapsto \dfrac{1}{\theta}[\text{Var}(X_T^{u+\theta v}) - \text{Var}(X_T^{u})] = 2\text{Cov}(X_T^u, X_T^v)+ \theta \text{Var}(X_T^v)$ at the point $u$ in the direction $v$, this non-degeneration condition above ensures that the functional $u\mapsto \text{Var}(X_T^u)$ is  strictly convex.

\subsection{Marginal Risk Measure}
At the first of this subsection, we put $d = 1$, i.e., there is only one risky asset in our sight. 
The result of the $1$-dim case can help us understand how risk contribution behaves on the time axis. 
As an application, volatility-managed portfolios in Subsection~\ref{Section: JF} can be derived in this risk contribution view. 
With the help of the differential technique in this subsection, the multi-dimensional result will not be lengthy when it meets the cross-section risk contributions of various assets.
 
By {\Ito}'s formula, we can rewrite the terminal variance of our investment process $u\circ S$ and divide it into three parts,
\begin{align*}
&\text{Var}\left ((u\circ S)_T\right) \\
=& \mathbb E\left [(u\circ S)_T^2- (u\circ S)_T \int_\Omega (u\circ S)_T(\omega)\mathbb P(\myd \omega)\right]\\
=& \mathbb E\Bigg[\underbrace{\int_0^T 2(u\circ S)_t u_t\myd S_t}_{\text{I}} + \underbrace{\int_0^Tu^2_t\myd \langle  S\rangle_t}_{\text{II}} - \underbrace{\int_0^Tu_t\myd S_t\int_\Omega (u\circ S)_T(\omega')\mathbb P(\myd \omega')}_{\text{III}}\Bigg].
\end{align*}
For the (I) part, we give an operator 
$$
\Phi_{\text{I}} : u \mapsto  \int_0^T 2(u\circ S)_t u_t\myd S_t
$$
whose image is a random variable.
We can also define the operator $\Phi _2\text{ and } \Phi_3$ for the (II) and (III) parts respectively:
$$
\begin{aligned}
\Phi & _{\text{II}}: u \mapsto \int_0^Tu^2_t\myd \langle  S\rangle_t
\\
\Phi &_{\text{III}}: u \mapsto \int_0^Tu_t\myd S_t\int_\Omega (u\circ S)_T(\omega')\mathbb P(\myd \omega')
\end{aligned}
$$
The images of these three operators are equipped with the $L^1$-norm.

\begin{myLemma}[{\Gateaux} differential]\label{lm:gateaux_differential}
Under $L^1$-norm, the G\^{a}teaux differentials of $\Phi_{\text{I}}, \Phi_{\text{II}}$ and $\Phi_{\text{III}}$ are
\begin{align*}
\myd \Phi_{\text{I}}(u;v) = &\int_0^T 2(u\circ S)_tv_t + 2(v\circ S)_tu_t \myd S_t\\
\myd \Phi_{\text{II}}(u;v) = &\int_0^T 2u_tv_t \myd \langle S \rangle_t\\
\myd \Phi_{\text{III}}(u;v) = &(v\circ S)_T\int_\Omega (u\circ S)_T(\omega')\mathbb P(\myd \omega') + (u\circ S)_T \int_\Omega (v\circ S)_T(\omega')\mathbb P(\myd \omega').
\end{align*}
\end{myLemma}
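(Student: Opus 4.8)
The plan is to compute each {\Gateaux} differential directly from its defining $L^1$-limit
$$\myd\Phi(u;v) = \lim_{\theta\to 0}\frac{1}{\theta}\bigl[\Phi(u+\theta v)-\Phi(u)\bigr],$$
exploiting that all three operators are quadratic in $u$, so the difference quotient splits into a $\theta$-independent linear part (the candidate differential) plus an explicit $O(\theta)$ remainder. First I would record the linearity of the stochastic integral, $((u+\theta v)\circ S)_t = (u\circ S)_t + \theta (v\circ S)_t$, and substitute into each operator. For $\Phi_{\text{I}}$, expanding $2[(u\circ S)_t + \theta(v\circ S)_t][u_t+\theta v_t]$ and integrating against $\myd S_t$ produces the linear term $\int_0^T 2[(u\circ S)_t v_t + (v\circ S)_t u_t]\myd S_t$ together with a remainder $\theta\int_0^T 2(v\circ S)_t v_t\myd S_t$; similarly $\Phi_{\text{II}}$ yields $\int_0^T 2u_t v_t\myd\langle S\rangle_t$ plus $\theta\int_0^T v_t^2\myd\langle S\rangle_t$, and, writing $\Phi_{\text{III}}(u)=(u\circ S)_T\,\mathbb E[(u\circ S)_T]$, the bilinear expansion gives the stated linear term plus $\theta(v\circ S)_T\,\mathbb E[(v\circ S)_T]$. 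In each case the $\theta$-independent part matches the asserted formula, so the only thing left is to verify that the remainder, multiplied by $\theta$, tends to $0$ in $L^1$.

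Since each remainder is a fixed random variable $R$ scaled by $\theta$, this reduces to showing $\lVert R\rVert_{L^1}<\infty$, whence $\lVert\theta R\rVert_{L^1}=|\theta|\,\lVert R\rVert_{L^1}\to 0$. This integrability is where the standing assumptions $u,v\in\mathbb S^\infty$ and $S\in\mathcal H^2_{\mathcal S}$ enter, and I expect it to be the main (though routine) obstacle. For the remainder of $\Phi_{\text{I}}$ I would set $Y:=v\circ S$, which lies in $\mathcal H^2_{\mathcal S}$ because $v$ is bounded and $S\in\mathcal H^2_{\mathcal S}$, and split $S=S_0+A+M$ into its canonical decomposition, so that $\langle S\rangle=\langle M\rangle$. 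The martingale part $\int_0^T Y_t v_t\,\myd M_t$ is controlled by the Burkholder-Davis-Gundy inequality together with Cauchy-Schwarz, $\mathbb E[(\int_0^T Y_t^2 v_t^2\,\myd\langle M\rangle_t)^{1/2}]\le \lVert v\rVert_{\mathbb S^\infty}\,\lVert\sup_{t}|Y_t|\rVert_{L^2}\,\lVert\langle M\rangle_T^{1/2}\rVert_{L^2}$, which is finite because Doob's inequality gives $\sup_t|Y_t|\in L^2$ and the definition of $\mathcal H^2_{\mathcal S}$ gives $\langle M\rangle_T^{1/2}\in L^2$; the finite-variation part is bounded pathwise by $\lVert v\rVert_{\mathbb S^\infty}\sup_t|Y_t|\int_0^T|\myd A_t|$, again in $L^1$ by Cauchy-Schwarz using $\int_0^T|\myd A_t|\in L^2$.

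The other two remainders are easier: for $\Phi_{\text{II}}$ one has $0\le\int_0^T v_t^2\,\myd\langle S\rangle_t\le\lVert v\rVert_{\mathbb S^\infty}^2\langle M\rangle_T$ with $\langle M\rangle_T\in L^1$, and for $\Phi_{\text{III}}$ the terminal value $(v\circ S)_T$ lies in $L^2\subset L^1$ while $\mathbb E[(v\circ S)_T]$ is a finite constant, so the product is in $L^1$. The same estimates, applied verbatim to the mixed $u$-$v$ terms, simultaneously confirm that the candidate linear parts are themselves well-defined elements of $L^1$, so the limits are legitimate. Assembling these three computations yields the three claimed differentials.
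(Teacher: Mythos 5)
Your proposal is correct and follows essentially the same route as the paper: expand each quadratic operator along $((u+\theta v)\circ S)=(u\circ S)+\theta(v\circ S)$, read off the $\theta$-independent linear part as the candidate differential, and kill the $O(\theta)$ remainder in $L^1$. The only difference is that you verify the remainder's integrability for $\Phi_{\text{I}}$ in full detail (canonical decomposition plus Burkholder--Davis--Gundy, Doob and Cauchy--Schwarz), whereas the paper simply asserts the bound $\big\lVert \int_0^T 2(v\circ S)_t v_t\,\myd S_t\big\rVert_{L^1}\leq C\lVert v\circ S\rVert^2_{\mathcal H^2_{\mathcal S}}$; your version fills in that step correctly.
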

\begin{proof}[Proof]
	With arbitrary $\theta \in [0,1]$ and $v \in \mathbb S^\infty$, we take the variation of $\Phi _x( x = \text{I,II,III})$.
For (I) part, we have
\begin{align*}
&\Phi_{\text{I}} (u+\theta v) - \Phi _{\text{I}} u \\
=& \int_0^T\Big [ 2\big ((u+\theta v)\circ S\big )_t(u+\theta v)_t - 2(u\circ S)_tu_t \Big ]\myd S_t\\
=& \int_0^T 2\Big [(\theta v\circ S)_tu_t + 2 (u \circ S)_t\theta v_t + 2(\theta v\circ S)_t\theta v_t \Big ]\myd S_t\\
=& \int_0^T \Big [2\theta \big [(v\circ S)_tu_t + (u\circ S)_tv_t\big ] + 2\theta ^2 (v\circ S)_tv_t \Big ]\myd S_t	
\end{align*}
by the linearity of stochastic integral.
Let $\mathcal L_{\text{I}} v := \int_0^T 2 [(v\circ S)_tu_t + (u\circ S)_tv_t] \myd S_t$. Obviously, $\forall v \in \mathbb S^\infty$ the linear operator $\mathcal L_{\text{I}} $ satisfies 
\begin{align*}
&\lim_{\theta \to 0} \left \lVert\dfrac{\Phi_{\text{I}} (u+\theta v) - \Phi_\text{I} u}{\theta} - \mathcal L_\text{I} v\right \rVert _{L^1} \\
= & \lim _{\theta \to 0}\left \lVert \int_0^T 2\theta (v \circ S)_tv_t\myd S_t\right \rVert_{L^1} \\
\leq & \lim _{\theta \to 0}\left \lVert (v\circ S)\right \rVert^2_{\mathcal H^2_{\mathcal S}} =  0 .
\end{align*}
Hence the linear operator $\mathcal L_\text{I}$ is the {\Gateaux} differential of $\Phi _\text{I}$.
Considering that $\mathcal L_\text{I} v$ is associated with $u$, we denote it by $\myd \Phi_\text{I}(u;v)$ in the  fashion of directional derivative.

Similarly, for the (II)part we have the convex variation of $\Phi _\text{II}$
$$
\Phi_\text{II}(u+\theta v) - \Phi _\text{II} u
 = \int _0^T 2\theta u_tv_t + \theta ^2 v_t^2 \myd \langle S \rangle_t
$$
and the {\Gateaux} differential of $\Phi_\text{II}$
$$
\myd \Phi _\text{II} (u,v) = \int_0^T 2u_tv_t \myd \langle S \rangle_t.
$$

As for part (III), we can get
\begin{align*}
&\Phi_\text{III} (u+\theta v) - \Phi _\text{III} u \\
=& (\theta v\circ S)_T \int_\Omega (u\circ S)_T(\omega') \mathbb{P} (\myd w')
+ (u\circ S)_T \int_\Omega (\theta v\circ S)_T(\omega') \mathbb{P} (\myd w')\\
&+ ((\theta v\circ S)_T \int_\Omega (\theta v\circ S)_T(\omega') \mathbb{P} (\myd w').	
\end{align*}
And we take 
$$
\mathcal{L} _\text{III} v := ( v\circ S)_T \int_\Omega (u\circ S)_T(\omega') \mathbb{P} (\myd w')
+ (u\circ S)_T \int_\Omega ( v\circ S)_T(\omega') \mathbb{P} (\myd w').
$$
Then we get 
$$
\lim_{\theta \to 0} \left \lVert\dfrac{\Phi_3 (u+\theta v) - \Phi_3 u}{\theta} - \mathcal A_3 v\right \rVert _{L^1} = 0
$$
since 
\begin{align*}
&\theta ^2 \left \lVert (v\circ S)_T\int_\Omega (v\circ S)_T(\omega') \mathbb{P} (\myd w')   \right \rVert_{L^1}
\\
\leq & \theta ^2 \mathbb E[|(v\circ S)_T|]\left | \int_\Omega (v\circ S)_T(\omega') \mathbb{P} (\myd w') \right |
\\
\leq & \theta ^2 \big ( \mathbb{E} [|(v\circ S)_T|] \big)^2 <\infty.
\end{align*}
Finally the {\Gateaux} differential of $\Phi_\text{III} $ at $u$ is $\myd \Phi_\text{III}(u;v) = \mathcal L_\text{III} v$.
\end{proof}

\begin{myDefinition}[{\Doleans} measure, \cite{cohen2015stochastic}]
	\label{defi:Doleans_measure}
	Given a non-decreasing integrable process $A$, there is a non-negative measure $\mu _A$ defined on $([0,T]\times \Omega, \mathcal B_{[0,T]} \otimes \mathcal F)$ as
	$$
	\mu_A(E) = \mathbb E [(1_E \circ A)_T] = \mathbb E\left[ \int_0^T 1_E\myd A_s \right]
	$$
	for each set $E\in \mathcal B\otimes \mathcal F$. The measure $\mu_A$ is the {\Doleans} measure associated with the process $A$.
\end{myDefinition}

We will propose a signed measure to ensure that the terminal risk is distributed appropriately on $\Omega \times [0,T]$.

\begin{myTheorem}[Marginal Risk Measure - $1$-dim]
\label{thm:marginal_risk_measure_1d}
Given $d = 1$, the terminal variance of the investment $u\circ S$ can be represented as an integral
\begin{equation}
\label{fml:risk_gradient_measure}
\text{Var}((u\circ S)_T)  = \dfrac 1 2\int_{[0,T]\times \Omega}u(t,\omega)\mu(\myd t,\myd \omega)
\end{equation}
where $\mu(E) := \mathbb E [\myd \Phi_\text{I}(u;1_E) + \myd \Phi_\text{II}(u;1_E) + \myd \Phi_\text{III}(u;1_E)] \text{ for the sets } E\text{ in the predictable $\sigma$-algebra } \Sigma_p$. 
$\mu(E)$ is called the marginal risk measure\footnote{Measure $\mu$ is actually induced by $u$, therefore we use the notation $\mu^u$. 
Without causing ambiguity, we prefer $\mu$ omitting $u$.} of $\text{Var}((u\circ S)_T)$ at $u$. 

Moreover, in the non-degenerate market, the signed measure $\mu$ introduced by the mapping $u \mapsto \mu ^u$ above is unique on the support  $\mathrm{supp}(u) = \{u\neq 0\}$.
\end{myTheorem}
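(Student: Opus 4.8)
The plan is to recognise the terminal variance as a quadratic form in the policy and to read off $\mu$ as the measure representing its (linear) differential. Write $Q(u,v):=\Cov((u\circ S)_T,(v\circ S)_T)$, a symmetric bilinear form on $\mathbb S^\infty$, so that $\Var((u\circ S)_T)=Q(u,u)$. Combining the \Ito\ split $\mathrm{I}+\mathrm{II}-\mathrm{III}$ with Lemma~\ref{lm:gateaux_differential} and taking expectations, the total \Gateaux\ differential of $u\mapsto\Var((u\circ S)_T)$ should collapse to $2Q(u,v)$, which is exactly the slope mapping $2\Cov(X_T^u,X_T^v)$ recorded just before Definition~\ref{def:ND-0}. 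This identity is the computational heart, and I would verify it first, paying particular attention to the drift of $S$ when moving the expectation inside the stochastic integrals and to the cancellations between the three assembled differentials.

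With this in hand, set $\mu(E):=2Q(u,1_E)$ for $E\in\Sigma_p$, i.e. the differential of Lemma~\ref{lm:gateaux_differential} assembled in the direction $1_E$. I would then check that $\mu$ is a finite signed measure: finite additivity is immediate from linearity of $Q$ in its second argument; finiteness follows from $u\in\mathbb S^\infty$ and $S\in\mathcal H^2_{\mathcal S}$ via Cauchy--Schwarz, since $|\mu(E)|\le 2\lVert(u\circ S)_T\rVert_{L^2}\lVert(1_E\circ S)_T\rVert_{L^2}$; and countable additivity follows from the $L^2$-continuity of $E\mapsto(1_E\circ S)_T$ (dominated convergence for the stochastic integral over disjoint pieces) combined with the same bound.

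For the representation I would argue by approximation. For a simple predictable $v=\sum_k c_k 1_{E_k}$ with disjoint $E_k$, linearity gives $\int v\,\myd\mu=\sum_k c_k\mu(E_k)=2Q(u,v)$. Since $u\in\mathbb S^\infty$ is bounded and predictable, choose simple predictable $u^{(n)}\to u$ pointwise with $\sup_n\lVert u^{(n)}\rVert_{\mathbb S^\infty}<\infty$; continuity of $v\mapsto(v\circ S)_T$ in $L^2$ gives $Q(u,u^{(n)})\to Q(u,u)$, while dominated convergence against the total-variation measure $|\mu|$ gives $\int u^{(n)}\,\myd\mu\to\int u\,\myd\mu$. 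Passing to the limit yields $\int u\,\myd\mu=2Q(u,u)=2\Var((u\circ S)_T)$, which is \eqref{fml:risk_gradient_measure}. The delicate point here is the interchange of limits: one must dominate the $u^{(n)}$ uniformly (available from the bound on $\lVert u^{(n)}\rVert_{\mathbb S^\infty}$) so that dominated convergence applies to the finite measure $|\mu|$ simultaneously with the $L^2$-convergence of the stochastic integrals.

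Uniqueness is where I expect the real work. If $\nu$ is any signed measure on $\Sigma_p$ arising from the same mapping, i.e. $\int 1_E\,\myd\nu=2Q(u,1_E)$ for every $E$, then $\mu$ and $\nu$ integrate every bounded predictable integrand identically, and density of simple predictable functions forces $\mu=\nu$ on $\Sigma_p$. The subtlety flagged by the statement is the restriction to $\mathrm{supp}(u)=\{u\neq0\}$: the scalar representation \eqref{fml:risk_gradient_measure} constrains $\mu$ only through the product $u\,\myd\mu$ and thus sees nothing on $\{u=0\}$, and it is here that non-degeneration enters. I would use Definition~\ref{def:ND-0} to obtain strict convexity of $u\mapsto\Var((u\circ S)_T)$, hence positive definiteness of $Q$ and $Q(u,u)>0$ on the support; this excludes the degenerate situation in which $u\neq0$ while the representing measure vanishes, and pins the measure down uniquely on $\{u\neq0\}$. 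Making the phrase ``unique on the support'' fully precise --- separating the part of $\mu$ fixed by the integral representation from the part fixed by the differential structure --- is, I expect, the main obstacle of the proof.
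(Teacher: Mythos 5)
Your proposal is correct, and it proves the theorem by a genuinely different route from the paper. The paper's proof never introduces the covariance form $Q$: it treats the three set functions $\mu_{\text{I}},\mu_{\text{II}},\mu_{\text{III}}$ separately, splits $u$, $S$ and its finite-variation part $F$ into positive and negative parts, recognizes each resulting piece as a {\Doleans} measure of an increasing process, and extracts countable additivity from the monotone convergence theorem; the identity $\int u\,\myd \mu = 2\Var(X^u_T)$ and the uniqueness assertion then come from testing against simple predictable policies $u=\sum_i\lambda_i 1_{E_i}$. You instead observe at the outset that the assembled differential is $2Q(u,\cdot)$, so that $\mu(E)=2\Cov\bigl((u\circ S)_T,(1_E\circ S)_T\bigr)$; finiteness and countable additivity then follow from Cauchy--Schwarz plus the $L^2$-continuity of $E\mapsto(1_E\circ S)_T$, and the representation~(\ref{fml:risk_gradient_measure}) follows by approximating $u$ with uniformly bounded simple predictable processes. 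This is exactly the Riesz-representation viewpoint that the paper records in Remark~\ref{rmk:marginal_as_dual} but does not use in its proof. Your route is shorter and easier to audit --- one Hilbert-space estimate replaces the positive/negative-part bookkeeping, and essentially the same bound resurfaces later as Corollary~\ref{coro: Continuity of Contribution wrt Policy}; the paper's route, in exchange, exhibits $\mu$ explicitly as a combination of {\Doleans} measures of concrete increasing processes, which is the structure it leans on again in the absolute-continuity checks of Theorem~\ref{thm:representation_of_RGM} (though your covariance formula would make those checks immediate as well). Two smaller points. First, your insistence on verifying the collapse to $2Q(u,v)$ before anything else is well placed: that computation forces the sign $-\myd\Phi_{\text{III}}$, consistent with the {\Ito} split $\text{I}+\text{II}-\text{III}$, whereas the theorem as printed carries $+\myd\Phi_{\text{III}}$, which cannot reproduce the variance (the paper's proof contains the same slip). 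Second, your uniqueness step lands at the same level of rigor as the paper's: both pin $\mu$ down by testing against indicator directions, both must read the claim as uniqueness of the mapping $u\mapsto\mu^u$ under non-degeneration (Definition~\ref{def:ND-0}) rather than uniqueness of a single measure from the single scalar identity, and both leave ``unique on the support'' somewhat informal --- so your closing caveat is an honest observation about the statement itself, not a gap relative to the paper.
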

\begin{proof}[Proof]
	When we take the direction policy $v$ in the form of $1_E$ where $E$ is a $\Sigma _p$-measurable set, the {\Gateaux} differentials in the Lemma~\ref{lm:gateaux_differential}  give us three set functions with the common domain $\Sigma _p$
	\begin{align*}
	\mu _\text{I} (E):= & \mathbb E[\myd \Phi_\text{I}(u,1_E)],
	\\
	\mu _\text{II} (E):= & \mathbb E[\myd \Phi_\text{II}(u,1_E)],
	\\
	\mu _\text{III} (E):= & \mathbb E[\myd \Phi_\text{III}(u,1_E)]	.
	\end{align*}
	We want to show that $\mu _x(x = \text{I,II,III})$ are signed measures so that we can get the result on representation by integrating our policy $u$ with respect to the summation of these measures $\mu _x$. For the measures $\mu _x(x = \text{I,II,III})$, it is easy to check the following properties:
	\begin{itemize}
		\item $\mu_x (\varnothing) = 0$;
		\item $\mu_x ([0,T]\times \Omega) < \infty$;
		\item (Finite Additivity). For arbitrary disjoint $E_1, E_2 \in \Sigma_p$, we have
		$$\mu_x(E_1 + E_2) = \mu_x(E_1) + \mu_x(E_2).$$
	\end{itemize}
	And it is left us to seek the countably additivity property of $\mu_x$.
	
	For $\mu_\text{I}$ part, we should notice that
	$$
	\mu_\text{I} (E) = \mathbb E[\myd \Phi_\text{I}(u,1_E)] = \mathbb{E}\int_0^T \Big (2(u\circ S)_t1_E + 2(1_E \circ S)_tu_t\Big)\myd F_t
	$$
	where $F$ is the part of finite variation in the canonical decomposition of $S$. With the positive-negative decomposition of $u = u^+-u^-, F = F^+ - F^- $ and $  S = (F^+ + M^+) - (F^- + M^-) = S^+ - S^-$, we can rewrite the first term
	\begin{align*}
	&\nu_1(E) =  \mathbb E \int_{0}^T 2(u\circ S)_t 1_E\myd F_t
	\\
	=& \mathbb{E} \int_0^T 2 \big((u^+-u^-)\circ (S^+-S^-)\big)_t1_E\myd (F^+_t - F^-_t)
	\\
	=& \mathbb{E} \int_0^T  2\big( (u\circ S)_t^+ - (u\circ S)_t^-\big)1_E\myd (F^+_t - F^-_t)
	\\
	=& 2\mathbb E \int_0^T 1_E\Big((u\circ S)_t^+\myd F_t^+ + (u\circ S)_t^- \myd F_t^-\Big) - 1_E\Big((u\circ S)_t^-\myd F_t^+ + (u\circ S)_t^+ \myd F_t^-\Big).
	\end{align*} 
	By Definition~\ref{defi:Doleans_measure}, $\nu_1$ is a combination of {\Doleans} measures induced by four increasing process, hence a signed measure.
	And the decomposition of the second term(denoted by $\nu_2$) shows
	\begin{align*}
	&\nu_2 (E) = \mathbb E \int_0^T 2(1_E\circ S)_tu_t\myd F_t
	\\
	 = & 2\mathbb E \int_0^T \Big( 1_E\circ (S^+ - S^-) \Big)_t(u^+_t - u^-_t)\myd (F^+_t - F^-_t)
	 \\
	  = & 2\mathbb E \int_0^T \Big((1_E\circ S^+)_tu^+_tdF^+_t + (1_E\circ S^+)_tu^-_tdF^-_t + (1_E\circ S^-)_tu^+_tdF^-_t + (1_E\circ S^-)_tu^-_tdF^+_t \Big)
	  \\
	  &+ \Big((1_E\circ S^+)_tu^+_tdF^-_t + (1_E\circ S^+)_tu^-_tdF^+_t - (1_E\circ S^-)_tu^+_tdF^+_t + (1_E\circ S^-)_tu^-_tdF^-_t \Big).
	\end{align*}
	Noticing that $\big\{\sum_{i = 1}^n 1_{E_i} \big\}_n$ is a non-decreasing sequence for disjoint set sequence $\{E_i\}_i$, we take $f_n^{+,+} = \Big(\sum_{i = 1}^{n} 1_{E_i}\circ S^+\Big)u^+$, then $\Big\{ f_n^{+,+}\Big\}_n$ is also a non-decreasing sequence and respectively non-decreasing $\Big\{ -f_n^{+,-}\Big\}_n, \Big\{ -f_n^{-,+}\Big\}_n, \Big\{ f_n^{-,-}\Big\}_n$. 
	Applying monotone convergence theorem, we can get the countably additivity of $\nu_2$. 
	Consequently $\mu_1$ is a signed measure on $\Sigma _p$.
	
	As for $\mu_\text{II}$, it can be considered the difference of two {\Doleans} measures induced by two respected processes $u^+\circ \langle S\rangle$ and $u^-\circ \langle S\rangle$ since
	$$
	\mu_\text{II}(E) = 2\mathbb E\int_0^T 1_E u_t\myd \langle S\rangle _t = 2\mathbb E\int_0^T 1_E (u^+_t - u^-_t)\myd \langle S\rangle _t = 2\mu_{u^+\circ \langle S\rangle}(E) - 2\mu_{u^-\circ \langle S\rangle}(E).
	$$
	Hence $\mu_\text{II}$ is a signed measure.
	Through the decomposition technique we treat $\nu_2$ with, we can see that 
	$$
	\mu_\text{III} (E) = 2\mathbb E [(1_E\circ S)_T]\mathbb{E}[(u\circ S)_T] = 2\mathbb E \left [\big(1_E\circ (F^+ - F^-)\big)_T\right ]\mathbb{E}[(u\circ S)_T].
	$$
	Then $\mu_\text{III}$ is a signed measure on $\Sigma _p$.
	Finally we take $\mu:= \mu_\text{I} + \mu_\text{II} + \mu_\text{III}$, and the integral of $u$ with respect to $\mu$ shows us
	$$
	\int_{[0,T]\times \Omega} u(t,\omega) \mu(\myd t, \myd \omega) = 2\text{Var}(X_T^u).
	$$
	
	If there is an another signed measure $\mu'$ also representing the integral above, then we straightly have a singular property
	$$
	\int_{[0,T]\times \Omega} u(t,\omega)\myd (\mu - \mu') = 0
	$$
	for arbitrary $u$. 
	We can take $u$ in the form $u = \sum_{i}\lambda_i1_{E_i}$ for with $E_i\in \Sigma_p$, then the singular proprty implies $\mu(E) - \mu'(E) = 0$ for any arbitrary predictable set $E$.
	Hence $\mu$ and $\mu'$ must agree on $\Sigma_p$ unless $\Var (X^u) = \int u \myd \mu^u = \int u \myd \mu ' = 0$ with some special policies $u$.
	This singular result together with the non-degenerate condition in Definition~\ref{def:ND-0} ensures the uniqueness of mapping $u \mapsto \mu^u$.
\end{proof}
\begin{myRemark}
\label{rmk:marginal_as_dual}
The concept of marginal risk measure we put here has several meanings.
Given any $C^1$-continuous positive homogeneous function $f(x)$ of order $k$, we have the Euler's homogeneity theorem
$$
kf(x_0) = \left\langle\dfrac{\partial f}{\partial x}\Big |_{x = x_0},x_0\right\rangle
$$
where $\dfrac{\partial f}{\partial x}\Big |_{x = x_0} = \nabla f(x_0)$ is the gradient of $f$ at $x_0$.
The same is true for the marginal risk contribution in Definition~\ref{def:S-Contribution}.
Compared with the property above, Theorem~\ref{thm:marginal_risk_measure_1d} inherits that in the case $k = 2$. 
Further, this marginal risk measure is given by the {\Gateaux} differential which characterize the marginal growth of the terminal variance.
That is the reason why we stress the word marginal.

We should also notice that for arbitrary $v$ the linear mapping
$$
v \mapsto \mathrm{Cov}((u \circ S)_T, (v \circ S)_T)
$$
leads to a Riesz representation in the dual space of $v$, and hence the marginal risk measure $\mu^u$ is identical to this Riesz representation.
\end{myRemark}

\subsection{The Flow Representation of Marginal Risk Measure}
The risk associated with an investment process, specifically the terminal variance in our model, is supposed to be accumulated over the time interval $[0,T]$ and sample space $\Omega$ where the investment process behaves differently.
The terminal variance of our investment process should take the following suspected form
$$
\text{Var}(X^u_T) \overset{?}{=} \mathbb E \int_0^T u_t c_t\myd t
$$
where the measurable process $c(t,\omega)$ can be considered the instantaneous marginal risk contribution to the total risk.
The terminal variance has already been represented as the integral in Theorem~\ref{thm:marginal_risk_measure_1d}.
What we are going to do is the show the relation between $c$ and $\mu$.

\begin{myTheorem}[Flow Representation - $1$-dim]
	\label{thm:representation_of_RGM}
	Given a non-degenerate market, the signed measure $\mu$ on $\Sigma_p$ obtained in Theorem~\ref{thm:marginal_risk_measure_1d} can be uniquely represented as 
	\begin{equation}
	\label{fml:representation_of_RGM}
	\mu (E) = \mathbb{E} \int_0^T 1_E (t,\omega) \myd C(t,\omega) = \mathbb{E} \int_0^T 1_E(t,\omega) c(t,\omega)\myd t
	\end{equation}
	where the measurable set $E$ is taken in $\Sigma_p$ and the process $C$ is continuous in $\mathcal A^1$. As the instantaneous marginal risk contribution, the process $c$ is the \RadonNikodym derivative of $C$ with respect to $t$.
\end{myTheorem}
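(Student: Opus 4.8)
The plan is to establish both the existence and the uniqueness of the flow density by a Radon-Nikodym argument carried out directly on the predictable $\sigma$-algebra, taking as reference the {\Doleans} measure of the deterministic clock $A_t=t$, namely $\nu(E):=\mathbb E\int_0^T 1_E\,\myd t$ restricted to $\Sigma_p$. Since $\nu$ is a finite measure on $([0,T]\times\Omega,\Sigma_p)$ and $\mu$ is a finite signed measure there (Theorem~\ref{thm:marginal_risk_measure_1d}), the whole matter reduces to proving the domination $\mu\ll\nu$.

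First I would record the bilinear form of $\mu$: by Remark~\ref{rmk:marginal_as_dual} (equivalently, by reading off the {\Gateaux} differentials with $v=1_E$), one has $\mu(E)=2\,\mathrm{Cov}\big((u\circ S)_T,(1_E\circ S)_T\big)$ for every $E\in\Sigma_p$. The key observation is then immediate: if $\nu(E)=0$, then $1_E=0$ for $\mathbb P\otimes\myd t$-a.e.\ $(t,\omega)$, so the integrand of the stochastic integral $(1_E\circ S)$ vanishes a.e.\ and hence $(1_E\circ S)_T=0$ a.s.; consequently $\mu(E)=0$. This gives $\mu\ll\nu$ on $\Sigma_p$. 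With domination in hand, the signed Radon-Nikodym theorem supplies a $\Sigma_p$-measurable --- hence predictable --- density $c:=\myd\mu/\myd\nu$ satisfying $\mu(E)=\int_E c\,\myd\nu=\mathbb E\int_0^T 1_E(t,\omega)\,c(t,\omega)\,\myd t$. I would then set $C_t:=\int_0^t c_s\,\myd s$, which is adapted, has absolutely continuous (so continuous) paths, and by construction satisfies $\myd C_t=c_t\,\myd t$ pathwise, whence $\mathbb E\int_0^T 1_E\,\myd C_t=\mathbb E\int_0^T 1_E c_t\,\myd t=\mu(E)$; this is exactly the claimed identity~\eqref{fml:representation_of_RGM}, with $c$ the Radon-Nikodym derivative of $C$ in $t$. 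To place $C$ in $\mathcal A^1$ I would note $\mathbb E\int_0^T|c_s|\,\myd s=|\mu|([0,T]\times\Omega)<\infty$, finite because $\mu$ is a finite signed measure; uniqueness of $C$ up to indistinguishability then follows from uniqueness of the Radon-Nikodym derivative modulo $\nu$-null sets together with the continuity of $C$ (on $\{u\neq0\}$ under the non-degeneracy of Definition~\ref{def:ND-0}, exactly as in Theorem~\ref{thm:marginal_risk_measure_1d}).

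Finally, to give $c$ its meaning as the instantaneous marginal risk contribution --- and to render the density explicit rather than abstract --- I would identify it through the martingale representation theorem, available because $\mathbb F$ is the Brownian filtration. Writing $N_t:=\mathbb E[(u\circ S)_T\mid\mathcal F_t]-\mathbb E[(u\circ S)_T]=\int_0^t\phi_s\,\myd B_s$ and splitting $(1_E\circ S)_T=\int_0^T 1_E b_s\,\myd s+\int_0^T 1_E\sigma_s\,\myd B_s$, a Fubini/tower-property step on the drift part and the {\Ito} isometry on the martingale part give $\mu(E)=2\,\mathbb E\int_0^T 1_E\big(b_s N_s+\phi_s\sigma_s^\top\big)\,\myd s$, so that $c_s=2(b_sN_s+\phi_s\sigma_s^\top)$. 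The main obstacle sits here rather than in the abstract existence: one must justify interchanging $\mathbb E\int_0^T(\cdots)\myd t$ with the inner stochastic integral and must control the drift term $b_sN_s$, for which I would invoke Doob's $L^2$-maximal inequality ($\mathbb E\sup_s N_s^2\le 4\,\mathbb E N_T^2$) together with the $\mathcal H^2_{\mathcal S}$ bounds $\mathbb E[(\int_0^T|b_s|\myd s)^2]<\infty$ and $\mathbb E\int_0^T|\sigma_s|^2\myd s<\infty$ to secure $c\in L^1(\mathbb P\otimes\myd t)$, reconfirming $C\in\mathcal A^1$ independently of the abstract argument.
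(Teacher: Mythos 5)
Your proposal is correct, but it follows a genuinely different route from the paper's. The paper slices $\mu$ in time: it sets $m_t(H):=\mu((t,T]\times H)$ on $\mathcal F_t$, proves $m_t\ll\mathbb P\big|_{\mathcal F_t}$ by checking that each of the three components of $\mu$ vanishes on $(t,T]\times F$ for $\mathbb P$-null $F$, takes a \RadonNikodym derivative $A_t$ for every $t$, and then spends most of its effort showing that the resulting process $A$ has integrable finite variation and continuous paths (right-continuity via continuity of $\mu$ from above, left-continuity via cylinder-set approximation in the Brownian filtration), before setting $C=-\tfrac12 A$ and reading $c$ off as the pathwise density of $C$. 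You instead work once and for all on the product space: with $\nu=\mathbb P\otimes \mathrm{Leb}$ on $\Sigma_p$ as reference measure, you identify $\mu(E)=2\,\mathrm{Cov}\big((u\circ S)_T,(1_E\circ S)_T\big)$ (consistent with Remark~\ref{rmk:marginal_as_dual} and Corollary~\ref{coro:covariance}), observe that $\nu(E)=0$ annihilates both the drift part and, by the {\Ito} isometry, the martingale part of $(1_E\circ S)_T$, hence $\mu\ll\nu$, and then a single application of the signed \RadonNikodym theorem yields a predictable density $c$, with $C_t:=\int_0^t c_s\,\myd s$ continuous and in $\mathcal A^1$ by construction. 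Your route is shorter, makes predictability of $c$ automatic, and --- importantly --- delivers the absolute continuity of $t\mapsto C_t$ (i.e., the very existence of $c$) for free, whereas in the paper this final step is only asserted: continuity plus finite variation of $A$ does not by itself produce a Lebesgue density, so your argument actually tightens a soft spot in the paper's proof. What the paper's construction buys in exchange is the explicit exhibition of the flow structure $\{m_t\}_t$ and an argument that uses only the set-function $\mu$ without invoking the covariance representation. Your closing martingale-representation identification $c_s=2(b_sN_s+\phi_s\sigma_s^\top)$ is a correct supplement not present in the paper (whose remark after Definition~\ref{def:Continuous-time Contribution} gives only the non-unique {\Ito}-formula density, which differs from the density of $\mu$), and the integrability controls you flag (Doob's $L^2$ inequality together with the $\mathcal H^2_{\mathcal S}$ bounds) are the right ones to justify the Fubini and isometry interchanges.
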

\begin{proof}
	The predictable $\sigma$-algebra $\Sigma_p$ is generated by the simple left-continuous processes  in the form $\phi 1_{(s,t]}$ where $s,t \in [0,T]$ and $\phi \in \mathcal{F}_s$. 
	We can just take the simple form $E = (s,t]\times H, H\in \mathcal F_s$ into consideration. 
	The measure of $E$ in the form above can be split as
	$$
	\mu((s,t]\times H) = \mu((s,T]\times H) - \mu((t,T]\times H).
	$$
	We can take a sequence of set functions parametrized by $t$ as
	$$
	m_t(H):= \mu((t,T]\times H)
	$$
	where $m_t$ is defined on the $\sigma$-algebra $\mathcal{F}_t$ for each $t$. 
	
	We claim that $m_t$ is a signed measure absolutely continuous with respect to $\mathbb{P} \big | _{\mathcal{F}_t}$ since it is induced by the signed $\mu$. 
	It's left to show the absolute continuity of $m_t$ for every fixed index $t$.
	For the $\mathbb{P}$-null sets $F\in \mathcal F_t$, we take $E_0 = (t,T]\times F$.
	Those three parts of $\mu$ in Theorem~\ref{thm:marginal_risk_measure_1d} satisfies the following properties.
	\begin{itemize}
		\item For $\mu_\text{I}$ part, we have
		\begin{align*}
		\mu_\text{I}(E_0) =& \mathbb{E} \int_0^T \Big(2(u\circ S)_\tau 1_{E_0}(\tau,\omega)+2(1_{E_0}\circ S)_\tau u_{\tau}\myd S_\tau \Big) 
		\\
		=& \mathbb{E} \int_0^T \Big(2(u\circ S)_\tau 1_{(t,T]\times F}(\tau,\omega)+2(1_{(t,T]\times F}\circ S)_\tau u_{\tau}\myd S_\tau \Big) 
		\\
		=& \mathbb{E} \Big(\int_0^T 2(u\circ S)_\tau 1_{(t,T]}(\tau)1_F(\omega)\myd S_\tau \Big) + 2\mathbb{E} \Big(\int_0^T   \big (1_F\circ (1_{(t,T]}\circ S)\big )_{\tau}u_{\tau}   \myd S_\tau \Big) 
		\\
		 = & \mathbb{E} \Big(1_F \int_t^T 2(u\circ S)_\tau \myd S_\tau \Big)
		  + \mathbb{E} \Big(1_F \int_t^T 2(1_{(t,T]}\circ S)_\tau u_{\tau} \myd S_\tau \Big)
		  \\
		   = & 0 
		\end{align*}
		since $F$ is $\mathcal F_t$-measurable.
		\item In a same manner, $\mu_\text{II}$ part gives
		$$
		\mu _\text{II} (E_0) = \mathbb E\int_0^T1_{E_0}(\tau,\omega)u_{\tau}\myd \langle S\rangle_\tau
		 = \mathbb{E} \Big(1_F\int_t^T u_{\tau}\myd \langle S\rangle_\tau \Big) = 0.
		$$
		\item  As for $\mu _\text{III}$ part, noticing $(1_{E_0} \circ S)_T = 1_F(S_T - S_t)$, we have 
		$$
		\mu_\text{III}(E) - 2\mathbb{E} \big[ (1_{E_0} \circ S)_T\big]\mathbb{E} \big[ (u \circ S)_T\big] = 2\mathbb{E} \big[ 1_F(S_T - S_t)\big]\mathbb{E} \big[ (u \circ S)_T\big] = 0.
		$$

	\end{itemize}
Consequently, for each $t$ and $\mathbb P$-null set $F \in \mathcal F_t$ we have 
$$
m_t(F) = \mu((t,T]\times F) = \mu_1((t,T]\times F) + \mu_2((t,T]\times F) + \mu_3((t,T]\times F) = 0,
$$
hence $m_t \ll \mathbb{P} \big |_{\mathcal{F} _t}$.

For every index $t$, we denote $ A _t $ as the \RadonNikodym derivative of $m_t$ with respect to $\mathbb{P} \big |_{\mathcal{F} _t}$. 
For every $\mathcal F_T$ -measurable set $F$, we have
\begin{align*}
\mathbb E [1_F A_T^+]<\infty
\\
\mathbb E [1_F A_T^-]<\infty
\end{align*}
since the finiteness of $\mu$ and the Hahn-Jordan decomposition of $A$, and hence 
$$
\sup \Big(  \sum_i\left | A_{t_{i+1}}(\omega) - A_{t_{i}}(\omega)\right | \Big)<\infty \text{ for a.a. } \omega
$$
where the supremum is taken over the possible partitions of $[0,T]$. 
Consequently, the process $A$ is an integrable process of finite variation.
To see that $A$ is right-continuous, we should notice 
\begin{align*}
&\left \{      \inf_{r>t}A_r^+\in [c,\infty)        \right \}
\\
 = &\bigcap _{r>t}\left \{ A_r^+\in [c,\infty) \right\} \in \mathcal F_{t+} = \mathcal F_t.
\end{align*}
Let $\{t_n\}_n\downarrow t$ with $H\in \mathcal F_t$(hence $(t_n,T]\times H \uparrow (t,T]\times H$), and then we have
$$
\mathbb E\left[1_H A_t^+\right] = \mu^+ ((t,T]\times H) = \lim _{t_n \downarrow t} m_{t_n}^+(H) = \lim_{t_n \downarrow t} \mathbb E \left[1_HA_{t_n}^+\right] = \mathbb E\left[ \lim_n \left(1_HA_{t_n}^+\right)\right] = \mathbb{E} \left[1_H A_{t+}^+\right].
$$
The process $A$ has a right-continuous version since $\sup_{r>t}A^-_r$ is handled in a same manner.
Living in the natural filtration generalized by Brownian motion $B$, for any arbitrary set $H \in \mathcal F_t$, we have a representation
$$
H = \{\omega|\left(B_{t_1}(\omega), B_{t_2}(\omega),\dots,B_{t_n}(\omega),\dots\right)\in G\}
$$
where $\{t_i\}_i$ is a sequence of countable points with $t_i\leq t$ and $G \in \mathcal B (\mathbb{ R ^\infty})$ with $\mathcal B (\mathbb{ R ^\infty})$ given by $\otimes_{i\in \mathbb N} \mathcal B (\mathbb R)$. 
Here we let $H_n$ be the projection of $H$ in head-most coordinates $(t_1, t_2,\dots,t_n)$, and we have $(t_n',T]\times H_n \downarrow (t,T]\times H$ with $t_n':=\sup_{i\leq n} t_i$ and $H_n \in \mathcal F_{t_n'}$. 
The process $A$ is left-continuous since 
$$
\mathbb E\left[1_H A_t^*\right] = \mu^* ((t,T]\times H) = \lim _n m_{t_n'}^*(H_n) = \lim_n \mathbb E \left[1_{H_n}A_{t_n'}^*\right] = \mathbb E\left[ \lim_n \left(1_{H_n}A_{t_n'}^*\right)\right] = \mathbb{E} \left[1_H A_{t-}^*\right]
$$
by the continuity of $\mu$ from above for $* = +,-$. 
Consequently, we can say that $A$ is an integrable continuous process of finite variation.
We can see that
\begin{align*}
\mu((s,t]\times H) = \mathbb E\left[1_HA_s - 1_HA_t \right] = \mathbb E\left[ - \int_s^t 1_H(\omega) \myd A_\tau(\omega)\right] = \mathbb E\left[-\int_0^T 1_E(\tau,\omega)\myd A_\tau (\omega) \right].
\end{align*}

If there is another process $A'$ satisfying the equation~(\ref{fml:representation_of_RGM}), $A'$ is distinguishable from $A$ since the measure $\mu_{A'}$ introduced by $A$ is identical to $\mu$.

With $C :=-\frac{1}{2}A$ and $c(t,\omega)$ as the \RadonNikodym derivative of $\{C(t,\omega)\}_t$ with respect to Lebesgue measure, we complete the proof.
\end{proof}

\subsection{Multi-Dimension Case: Separation of the Terminal Variation}
When it comes to a generalized multi-asset situation, we should stress the benefits of why we consider investment value processes in equation~(\ref{eq:investment}) instead of a self-financing portfolio usually characterized by a linear stochastic differential equation.
Here are the reasons:
\begin{itemize}
	\item 
	With $d = 1$, the self-financing portfolio acts only in the form $X^u_t = u_0S_t$ where the policy $u_0$ is a constant determined by the investor's initial wealth, instead of the generalized form $X^u = (u\circ S)_t$. 
	Due to money limits, it is difficult to examine the evolution of risk contribution in this constant policy situation using the method described in Theorem~\ref{thm:marginal_risk_measure_1d} and Theorem~\ref{thm:representation_of_RGM}.
	\item 
	With $d>1$, people often prefer a controlled linear stochastic differential equation to character the value process of the self-financing portfolio where control is usually a $(d-1)$-dim process. 
	But as one may expect, the (marginal) risk contribution should be a  $d$-dim process instead of $(d-1)$-dim. 
	And the Riesz representation mentioned in Remark~\ref{rmk:marginal_as_dual} can be viewed as macro description of this point. 
\end{itemize}

A particular case where there is one zero-interest risk-free asset and $d-1$ stocks is of particular interest. 
We should notice that these total $d$ assets and the $d-1$ stocks have the same risk. 
It sounds weird, but that is true in the view of risk.
We cannot assign a risk budget to this coupon with no risk.
However, when we throw light on the risk of these stocks, the risk budgets on the stocks imply the allocation of money that comes from the zero-interest coupon.
In practice, it is proper to do this when facing the risk allocation of risky assets.
The volatility-managed portfolio in \cite{moreira2017volatility} is constructed in this manner where they only focus on the risk of risky assets.
We will give an interpretation to it in the view of the risk budget in Subsection~\ref{Section: JF}.

Compared to the classical single-period case, risk contribution is expected to be similar to the covariance matrix structure where the off-diagonal elements make sense. 
The following result can be considered an extension of Theorem~\ref{thm:marginal_risk_measure_1d} regarding the interaction of assets, as it also provides a characterization of the covariance-like structure.

\begin{myTheorem}[Risk Gradient Measure - Generalized Version]
	\label{thm:risk_gradient_measure*}
	For the multi-assets investing case, its terminal variance can be uniquely expressed as 
\begin{equation}
\label{fml:risk_gradient_measure*}
	\text{Var}(X_T^u) = \dfrac{1}{2}\sum _{i = 1}^d\int_{[0,T]\times \Omega}  u^{(i)}(t, \omega) \mu^{i}(\myd t, \myd \omega) + \dfrac{1}{2}\sum_{i\neq j} \int_{[0,T]\times \Omega}u^{(i)}(t, \omega)\eta^{i,j} (\myd t, \myd \omega)
\end{equation}
	where $\mu^{i}$ is the marginal risk measure of $i$-th asset $S^{(i)}$ mentioned in Theorem~\ref{thm:marginal_risk_measure_1d} and $\eta^{i,j}$ is a new signed measure related to mutual effect between $i$-th and $j$-th assets for each $i,j$, or briefly 
	$$
	\text{Var}(X_T^u) =\dfrac{1}{2} \int_{[0,T]\times \Omega}\langle u(t, \omega),{\mu}(\myd t, \myd \omega)\rangle
	$$
	where $\mu$ is the multi-dim measure defined by
	$$
	\mu:= \left[\mu^1 + \sum_{j\neq 1}\eta^{1,j}, \cdots, \mu^d + \sum_{j\neq d}\eta^{d,j}  \right].
	$$
\end{myTheorem}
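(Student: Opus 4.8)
The plan is to reduce equation~(\ref{fml:risk_gradient_measure*}) to the one-dimensional Theorem~\ref{thm:marginal_risk_measure_1d} via the bilinearity of covariance, and then to treat the genuinely new cross-asset terms with the same {\Doleans}-measure machinery used there. Writing $Y_i := (u^{(i)}\circ S^{(i)})_T$, we have $X_T^u = x_0 + \sum_{i=1}^d Y_i$, so
\begin{equation*}
\text{Var}(X_T^u) = \sum_{i,j=1}^d \text{Cov}(Y_i,Y_j) = \sum_{i=1}^d \text{Var}(Y_i) + \sum_{i\neq j}\text{Cov}(Y_i,Y_j).
\end{equation*}
The diagonal terms are handled verbatim by Theorem~\ref{thm:marginal_risk_measure_1d}: treating $S^{(i)}$ as a single risky asset driven by the scalar policy $u^{(i)}$ gives $\text{Var}(Y_i) = \tfrac12\int_{[0,T]\times\Omega} u^{(i)}\,\mu^i(\myd t,\myd\omega)$ with $\mu^i$ the $1$-dimensional marginal risk measure of $S^{(i)}$. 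It then remains only to represent each off-diagonal covariance.

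For a fixed ordered pair $i\neq j$ I would define the cross measure through the {\Gateaux} differential of the covariance in the $i$-th coordinate,
\begin{equation*}
\eta^{i,j}(E) := 2\,\text{Cov}\big((1_E\circ S^{(i)})_T,\,(u^{(j)}\circ S^{(j)})_T\big),\qquad E\in\Sigma_p,
\end{equation*}
so that bilinearity immediately gives $\tfrac12\int u^{(i)}\,\eta^{i,j} = \text{Cov}(Y_i,Y_j)$ and summation over $i\neq j$ reproduces the off-diagonal block. The substance lies in proving $\eta^{i,j}$ is a signed measure on $\Sigma_p$. Applying the {\Ito} product rule to $Z^iZ^j$ with $Z^i=1_E\circ S^{(i)}$, $Z^j=u^{(j)}\circ S^{(j)}$ and taking expectations, the two stochastic-integral terms reduce (their martingale parts having zero mean under the admissibility classes $u\in\mathbb S^\infty$, $S\in\mathcal H^2_{\mathcal S}$) to integrals against the finite-variation parts $F^{(i)},F^{(j)}$ of $S^{(i)},S^{(j)}$, exactly as $\nu_1$ did in Theorem~\ref{thm:marginal_risk_measure_1d}; the bracket term contributes $\mathbb E\int_0^T 1_E u^{(j)}_t\,\myd\langle S^{(i)},S^{(j)}\rangle_t$; and the mean-correction term is a {\Doleans} measure of $1_E\circ S^{(i)}$ scaled by the constant $\mathbb E[Y_j]$. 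Splitting $u^{(j)}$, $F^{(i)}$, $F^{(j)}$ and $\langle S^{(i)},S^{(j)}\rangle$ into their positive and negative variations turns each piece into a finite combination of {\Doleans} measures in the sense of Definition~\ref{defi:Doleans_measure}, and countable additivity follows from the monotone-convergence argument applied to the non-decreasing partial sums $\{\sum_{k\le n}1_{E_k}\}_n$, precisely as for $\nu_2$ there.

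Assembling the diagonal and off-diagonal pieces produces equation~(\ref{fml:risk_gradient_measure*}). Collecting the $i$-th contributions into $\mu^i+\sum_{j\neq i}\eta^{i,j}$ and noting that integrating $u^{(i)}$ against it equals $2\,\text{Cov}(Y_i,X_T^u)$ yields the compact form $\text{Var}(X_T^u)=\tfrac12\int_{[0,T]\times\Omega}\langle u,\mu\rangle$ with $\mu$ the stated vector measure, which is also the multi-dimensional echo of the Riesz-representation viewpoint in Remark~\ref{rmk:marginal_as_dual}. For uniqueness I would mimic Theorem~\ref{thm:marginal_risk_measure_1d}: if $\mu'$ also represents the variance for every admissible $u$, then $\int\langle u,\myd(\mu-\mu')\rangle=0$ for all $u$; testing against $u=e_d 1_E$ placed in the $k$-th coordinate forces the $k$-th components to agree on $\Sigma_p$, while the non-degeneracy of Definition~\ref{def:ND-0} excludes the degenerate directions, so $\mu$ is unique on $\mathrm{supp}(u)$.

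The hard part will be the second step --- verifying countable additivity of $\eta^{i,j}$. Because the quadratic covariation $\langle S^{(i)},S^{(j)}\rangle$ is only of finite variation rather than non-decreasing, one must first decompose it into its positive and negative variation parts before the {\Doleans}-measure and monotone-convergence argument applies, and keeping the signs consistent across the four product terms is the most error-prone bookkeeping in the whole argument.
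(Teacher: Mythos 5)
Your proposal is correct and takes essentially the same route as the paper's proof: both reduce the diagonal terms verbatim to Theorem~\ref{thm:marginal_risk_measure_1d}, build the cross measures $\eta^{i,j}$ out of exactly the same three {\Ito}-product pieces (the paper's $\Psi^{i,j}_{\text{I}}, \Psi^{i,j}_{\text{II}}, \Psi^{i,j}_{\text{III}}$), prove countable additivity by the positive/negative-part {\Doleans}-measure and monotone-convergence argument, and obtain uniqueness by testing indicator policies under the non-degeneracy condition. Your definition $\eta^{i,j}(E)=2\Cov\bigl((1_E\circ S^{(i)})_T,(u^{(j)}\circ S^{(j)})_T\bigr)$ is an equivalent, slightly cleaner packaging of the paper's {\Gateaux}-differential construction (it is precisely the Riesz-representation viewpoint of Remark~\ref{rmk:marginal_as_dual}), and your explicit Jordan decomposition of $\langle S^{(i)},S^{(j)}\rangle$ makes rigorous a step the paper only asserts by analogy with the one-dimensional case.
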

\begin{proof}
	Noticing $u^\top\circ S = \sum _{i = 1}^d u^{(i)}\circ S^{(i)}$, we can write
	\begin{align*}
	&\text{Var}((u^\top\circ S)_T) \\
	=& \mathbb E\Bigg[\sum_{i = 1}^d\underbrace{\int_0^T ( (u ^{(i)}\circ S^{(i)})_t u^{(i)}_t\myd S^{(i)}_t}_{\text{I-}(i,i)} 
	+ \sum_{i\neq j}\underbrace{\int_0^T (u ^{(i)}\circ S^{(i)})_t u^{(j)}_t\myd S^{(j)}_t}_{\text{I-}(i,j)} 
	\\
	&+  \sum_{i = 1}^d\underbrace{ \int_0^T {u^{(i)}}^2_t\myd \langle  S^{(i)}\rangle_t}_{\text{II-}(i,i)} 
	+ \sum_{i \neq j}\underbrace{ \int_0^T u^{(i)}_tu^{(j)}_t\myd \langle  S^{(i)}, S^{(j)}\rangle_t}_{\text{II-}(i,j)} 
	\\
	&- \sum_{i = 1}^d\underbrace{ \int_0^T u^{(i)}_t\myd S^{(i)}_t\int_\Omega (u^{(i)}\circ S^{(i)})_T(\omega')\mathbb P(\myd \omega')}_{\text{III-}(i,i)} - \sum_{i \neq j}\underbrace{ \int_0^T u^{(i)}_t\myd S^{(i)}_t\int_\Omega (u^{(j)}\circ S^{(j)})_T(\omega')\mathbb P(\myd \omega')}_{\text{III-}(i,j)}\Bigg]
	\\
	=& \sum _{i = 1}^d \dfrac{1}{2}\int_{[0,T]\times \Omega} u(t, \omega) \mu^{i}(\myd t, \myd \omega) + \mathbb E \Bigg[\sum_{i\neq j}\underbrace{\int_0^T (u ^{(i)}\circ S^{(i)})_t u^{(j)}_t\myd S^{(j)}_t}_{\text{I-}(i,j)} 
	\\
	&+ \sum_{i \neq j}\underbrace{ \int_0^T u^{(i)}_tu^{(j)}_t\myd \langle  S^{(i)}, S^{(j)}\rangle_t}_{\text{II-}(i,j)}  - \sum_{i \neq j}\underbrace{ \int_0^T u^{(i)}_t\myd S^{(i)}_t\int_\Omega (u^{(j)}\circ S^{(j)})_T(\omega')\mathbb P(\myd \omega')}_{\text{III-}(i,j)} \Bigg ]
	\end{align*}
	by the {\Ito}'s formula and Theorem~\ref{thm:marginal_risk_measure_1d}. 
	Firstly for convenience, we give some notations of those parts by taking these operators
\begin{align*}
&  \Psi^{i,j}_{\text{I}}u:= \int_0^T (u ^{(i)}\circ S^{(i)})_t u^{(j)}_t\myd S^{(j)}_t,
\\
&
\Psi^{i,j}_{\text{II}}u:=\int_0^T u^{(i)}_tu^{(j)}_t\myd \langle  S^{(i)}, S^{(j)}\rangle_t,
\\
&\Psi^{i,j}_{\text{III}}u:=\int_0^T u^{(i)}_t\myd S^{(i)}_t\int_\Omega (u^{(j)}\circ S^{(j)})_T(\omega')\mathbb P(\myd \omega').
\end{align*}
Hence we can give their related set functions $\myd \Psi_{x}(u;1_E), (x=\text{I,II,III})$ induced by the {\Gateaux} differential manner where we use the notation $1_E(t,\omega) = [1_{E^{(1)}}(t,\omega), \dots, 1_{E^{(d)}}(t,\omega)]^\top$ with $E = \times_{i = 1}^d E^{(i)}$ to treat multi-dim case.
	
Similarly to the Theorem~\ref{thm:risk_gradient_measure*}, the signed measure $\eta_x^{i,j}$ on $\Sigma_p$ induced by $\myd \Psi_{x}^{i,j}(u;v)$ reads
$$
\eta^{i,j}_x(E^{(i)}):= \mathbb E[\myd \Psi_{x}^{i,j}(u;1_E)].
$$
After aggregating the measures $\eta_{x}^{i,j}$($x = \text{I, II, III}$) with fixed pair $(i,j)$, we can get the associated signed measure $\eta^{i,j}:=\sum_x \eta_x^{i,j} $ on $\Sigma_p$. Then for every pair $(i,j)$ we have 
\begin{align*}
&\int _{[0,T]\times \Omega} v^{(i)}(t,\omega)\eta^{i,j}(\myd t, \myd \omega) \Big |_{v=u}
\\
= &\mathbb{ E } \Big[  \underbrace{\int_0^T ( 2(u ^{(i)}\circ S^{(i)})_t u^{(j)}_t\myd S^{(j)}_t}_{\text{I-}(i,j)} 
 + \underbrace{ \int_0^T u^{(i)}_t u^{(j)}_t\myd \langle  S^{(i)}, S^{(j)}\rangle_t}_{\text{II-}(i,j)}
 \\
 &+ \underbrace{ \int_0^T u^{(i)}_t\myd S^{(i)}_t\int_\Omega (u^{(j)}\circ S^{(j)})_T(\omega')\mathbb P(\myd \omega')}_{\text{III-}(i,j)} \Big].
\end{align*}
Finally, with the aggregation of measures $\mu^i$ and $\eta ^{i,j}$ we get (\ref{fml:risk_gradient_measure*}).

The uniqueness of the expression in (\ref{fml:risk_gradient_measure*}) is ensured by taking  arbitrary indicator processes $1_E$ over the $\sigma$-algebra $\Sigma_p$ in a similar manner to that in~Theorem \ref{thm:risk_gradient_measure*}.
\end{proof}

Comparing the result above to the single-period case in Definition~\ref{def:S-Contribution} where 
$$
\dfrac{\partial \rho(v^\top r)}{\partial v}\Big |_{v = u} = 2\Sigma u
$$
gives the structure of the single-period marginal risk contribution, we can say that the $d$-dim measure $\mu$ represent the properties of $\Sigma u$ somehow.
Respectively, $\sigma _{i,j}u_j $ is associated to $\eta ^{i,j}$. 
Actually, whenever $S$ is a multi-dim martingale, the associated product measure $\mu (E)$ can be calculated as  
$$
\mu (E) = 2\mathbb E\Big[ \int_0^T 1_{E}\odot (\sigma_t \sigma^\top_t u_t) \myd t \Big] = 2\mathbb E \Big[\int_0^T 1_E \odot \big \langle u_t, \myd \langle S\rangle_t\big \rangle \Big]
$$ 
for any predictable set $E$.

\begin{myCorollary}[Covariance]
	\label{coro:covariance}
 The terminal covariance of two portfolios $X^u$ and $X^v$ can be calculated as
 $$
 \mathrm{Cov}(X^u_T,X^v_T) = \dfrac{1}{2}\int_{[0,T]\times \Omega}v(t,\omega)\mu^u(\myd t,\myd \omega) = \dfrac{1}{2}\int_{[0,T]\times \Omega}u(t,\omega)\mu^v(\myd t,\myd \omega).
 $$
\end{myCorollary}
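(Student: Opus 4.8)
The plan is to read $\mathrm{Cov}(X^u_T,X^v_T)$ as the symmetric bilinear form that polarizes the quadratic functional $u\mapsto\Var(X^u_T)$, and to exploit the fact --- already anticipated in Remark~\ref{rmk:marginal_as_dual} --- that $\mu^u$ is nothing but the representation of the \Gateaux\ differential of this functional at $u$. The whole corollary then reduces to the single identity
\[
\int_{[0,T]\times\Omega} v(t,\omega)\,\mu^u(\myd t,\myd\omega)=2\,\Cov\big((u\circ S)_T,(v\circ S)_T\big),
\]
valid for every bounded predictable $v$. Dividing by $2$ yields the first displayed equality; the symmetry $\Cov(X^u_T,X^v_T)=\Cov(X^v_T,X^u_T)$ combined with the relabeling $u\leftrightarrow v$ then yields the second equality with no further work.

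First I would pin down the directional derivative. The slope mapping recorded just after Definition~\ref{def:ND-0} reads
\[
\frac1\theta\big[\Var(X^{u+\theta v}_T)-\Var(X^u_T)\big]=2\,\Cov\big((u\circ S)_T,(v\circ S)_T\big)+\theta\,\Var(X^v_T),
\]
so as $\theta\to0$ the \Gateaux\ differential of $\Var(X^{\cdot}_T)$ at $u$ in the direction $v$ equals $2\,\Cov((u\circ S)_T,(v\circ S)_T)$. On the other hand, by Lemma~\ref{lm:gateaux_differential} this differential is exactly the combination of $\myd\Phi_{\text{I}}$, $\myd\Phi_{\text{II}}$, $\myd\Phi_{\text{III}}$ whose expectation, evaluated at $v=1_E$, is used in Theorem~\ref{thm:marginal_risk_measure_1d} to define $\mu^u(E)$. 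Hence for indicator directions I obtain immediately $\int 1_E\,\myd\mu^u=\mu^u(E)=2\,\Cov((u\circ S)_T,(1_E\circ S)_T)$ for every $E\in\Sigma_p$.

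Next I would promote this from indicators to arbitrary $v\in\mathbb S^\infty$. Both sides of the target identity are linear in $v$: the left-hand side because integration against the finite signed measure $\mu^u$ is linear, the right-hand side because $v\mapsto(v\circ S)_T$ is a linear map into $L^2$ and covariance is linear in its second argument. Thus the identity holds for simple predictable $v=\sum_i\lambda_i 1_{E_i}$ by linearity, and it extends to general bounded predictable $v$ by approximating $v$ with a uniformly bounded sequence of simple processes and passing to the limit --- dominated convergence against the finite-variation measure $\mu^u$ on the left, and the $L^2$-continuity of $v\mapsto(v\circ S)_T$ (guaranteed by $S\in\mathcal H^2_{\mathcal S}$) followed by Cauchy--Schwarz for covariance on the right.

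The step I expect to be the main obstacle is this passage to the limit, and specifically the non-local mean-correction term descending from $\Phi_{\text{III}}$. Unlike the $\Phi_{\text{I}}$ and $\Phi_{\text{II}}$ contributions, $\myd\Phi_{\text{III}}(u;v)$ is not a pathwise stochastic integral but carries the global factor $\mathbb E[(u\circ S)_T]$, so I must ensure that the approximation controls $\mathbb E[(v_n\circ S)_T]\to\mathbb E[(v\circ S)_T]$ simultaneously with the local integrals; the $\mathbb S^\infty$ and $\mathcal H^2_{\mathcal S}$ bounds are precisely what force all three pieces to converge together. As a consistency check I would note the polarization identity $\int u\,\myd\mu^v+\int v\,\myd\mu^u=4\,\Cov(X^u_T,X^v_T)$, obtained from the linearity of $u\mapsto\mu^u$ (immediate from Lemma~\ref{lm:gateaux_differential}) together with $\int u\,\myd\mu^u=2\Var(X^u_T)$ applied to $u+v$; this already delivers the symmetrized statement, though on its own it cannot separate the two integrals, which is exactly why the directional-derivative argument above is needed.
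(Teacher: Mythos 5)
Your proposal is correct and in substance coincides with the paper's own argument: the paper's proof likewise rests on the symmetry of the bilinear map $\mathrm{Cov}(\cdot,\cdot)$ combined with the {\Gateaux}-differential/marginal-measure machinery of Lemma~\ref{lm:gateaux_differential} and Theorems~\ref{thm:marginal_risk_measure_1d} and~\ref{thm:risk_gradient_measure*}, which is precisely the Riesz-representation identification $\int v\,\myd\mu^u = 2\,\mathrm{Cov}\big((u\circ S)_T,(v\circ S)_T\big)$ anticipated in Remark~\ref{rmk:marginal_as_dual}. Your write-up merely spells out the steps the paper leaves implicit (the slope-mapping limit and the extension from indicator directions to general $v\in\mathbb S^\infty$), which is a useful but not substantively different elaboration.
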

\begin{proof}
	Noticing the symmetry of bilinear mapping $\text{Cov}(\cdot, \cdot)$, we can easily get the result in the manners same to that in Theorem~\ref{thm:marginal_risk_measure_1d} and Theorem~\ref{thm:risk_gradient_measure*}.
\end{proof}

\begin{myCorollary}[Linearity of Marginal Contribution]
	\label{coro:RGM_linearity}
	The marginal risk measure $ \mu ^u$ induced by policy $u$ in Theorem~\ref{thm:risk_gradient_measure*} is linear in $u$.
\end{myCorollary}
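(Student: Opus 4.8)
The plan is to exploit the covariance characterisation established in Corollary~\ref{coro:covariance} together with the bilinearity of the covariance functional, which reduces the claim to the already-proven linearity of the stochastic integral $u \mapsto (u\circ S)_T$. First I would record that, by Corollary~\ref{coro:covariance} applied with the test policy $v = 1_E$, the marginal risk measure evaluates on each predictable set $E\in\Sigma_p$ as
\[
\mu^u(E) = 2\,\mathrm{Cov}\big(X^u_T, X^{1_E}_T\big),
\]
so that $\mu^u$ is determined, set by set, as twice the terminal covariance of the investment $X^u$ against the ``indicator investment'' $X^{1_E} = (1_E\circ S)_T$. This is exactly the Riesz-representation viewpoint recorded in Remark~\ref{rmk:marginal_as_dual}, and it has the advantage of producing a single canonical value for every $E$ (not merely on the support of $u$), which is what one needs in order to speak of genuine linearity of the map $u\mapsto\mu^u$.

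Next, for scalars $a,b$ and policies $u,w\in\mathbb S^\infty$, the linearity of the {\Ito} integral (already used in Lemma~\ref{lm:gateaux_differential}) gives $X^{au+bw}_T = a X^u_T + b X^w_T$ up to an additive constant, which the covariance annihilates in each argument. Since $\mathrm{Cov}(\cdot,\cdot)$ is bilinear, this yields
\[
\mathrm{Cov}\big(X^{au+bw}_T, X^{1_E}_T\big) = a\,\mathrm{Cov}\big(X^u_T, X^{1_E}_T\big) + b\,\mathrm{Cov}\big(X^w_T, X^{1_E}_T\big),
\]
and multiplying by $2$ gives $\mu^{au+bw}(E) = a\,\mu^u(E) + b\,\mu^w(E)$ for every $E\in\Sigma_p$. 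As two signed measures that agree on all of $\Sigma_p$ coincide, I conclude $\mu^{au+bw} = a\,\mu^u + b\,\mu^w$. In the multidimensional setting of Theorem~\ref{thm:risk_gradient_measure*} the same argument is run coordinate-by-coordinate on the vector measure $\mu = [\mu^1 + \sum_{j\neq 1}\eta^{1,j}, \dots, \mu^d + \sum_{j\neq d}\eta^{d,j}]$, using that each coordinate is again of the covariance form.

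An alternative, fully self-contained route is to read linearity directly off the {\Gateaux} differentials: inspecting the formulas in Lemma~\ref{lm:gateaux_differential} (and the operators $\Psi^{i,j}_x$ of Theorem~\ref{thm:risk_gradient_measure*}), each differential $\myd\Phi_x(u;v)$ is jointly bilinear in $(u,v)$, so for the fixed direction $v=1_E$ it is linear in $u$; taking expectations preserves this, giving linearity of $\mu^u(E)$ in $u$ term by term. I do not anticipate a substantive obstacle here. The only point demanding a little care is to state linearity as an identity of measures on the whole predictable $\sigma$-algebra rather than on the $u$-dependent supports $\{u\neq 0\}$ appearing in the uniqueness clause of Theorem~\ref{thm:marginal_risk_measure_1d}; the covariance representation above sidesteps this neatly, since it defines $\mu^u(E)$ unambiguously for all $E$ regardless of where $u$ vanishes.
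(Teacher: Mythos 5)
Your proposal is correct, and in fact it contains the paper's own proof as your ``alternative route'': the paper's argument is exactly the one-line observation that the {\Gateaux} differentials $\myd \Phi_x(u;v)$ and $\myd \Psi^{i,j}_x(u;v)$ are linear in $u$ for every fixed direction $v$, so that $\mu^u(E)=\mathbb E[\myd\Phi_{\text{I}}(u;1_E)+\myd\Phi_{\text{II}}(u;1_E)+\myd\Phi_{\text{III}}(u;1_E)]$ (and its $\Psi$-analogues) is linear in $u$ term by term. Your primary route --- evaluating $\mu^u(E)=2\,\mathrm{Cov}\big(X^u_T,X^{1_E}_T\big)$ via Corollary~\ref{coro:covariance} and then invoking bilinearity of covariance together with linearity of $u\mapsto(u\circ S)_T$ (with the additive constant $x_0$ correctly absorbed by the covariance) --- is a legitimately different packaging of the same underlying fact, and it has two virtues the paper's terse proof does not make explicit: it defines $\mu^u(E)$ canonically for \emph{every} predictable $E$, neatly sidestepping the support-restricted uniqueness clause of Theorem~\ref{thm:marginal_risk_measure_1d}, and it makes the link to the Riesz-representation viewpoint of Remark~\ref{rmk:marginal_as_dual} transparent. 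The only mild caveat is that Corollary~\ref{coro:covariance} is itself proved ``in the same manner'' as Theorems~\ref{thm:marginal_risk_measure_1d} and~\ref{thm:risk_gradient_measure*}, i.e.\ ultimately by the same {\Gateaux}-differential computation, so the covariance route is not logically more economical --- it just relocates where the bilinearity is checked; your coordinate-by-coordinate reduction in the multi-dimensional case (taking $E^{(j)}=\varnothing$ for $j\neq i$) is the right way to handle the vector measure $\mu=[\mu^1+\sum_{j\neq1}\eta^{1,j},\dots,\mu^d+\sum_{j\neq d}\eta^{d,j}]$.
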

\begin{proof}
	Only noticing the oprators $\myd \Phi(u;v)$ and $\myd \Psi(u;v)$ are both linear to the policy  $u$ for arbitrary directions $v$, we can get the assertion above.
\end{proof}

\begin{myCorollary}
	\label{coro:representation_of_RGM*}
	With the notation $1_E(t,\omega) = [1_{E^{(1)}}(t,\omega), \dots, 1_{E^{(d)}}(t,\omega)]^\top$, the marginal risk measure $ \mu$ in Theorem~\ref{thm:risk_gradient_measure*} can be uniquely represented as the vector integration
	$$
	 \mu (E) = \mathbb{E} \int_0^T 1_E (t,\omega)\odot  \myd  C(t,\omega) = \mathbb{E} \int_0^T 1_E(t,\omega) \odot c(t,\omega)\myd t
	$$
	by a continuous adapted process $C$.
	And $ c$ is the \RadonNikodym derivative of $ C$ with respect to Lebesgue measure $t$.
\end{myCorollary}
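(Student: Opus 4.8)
The plan is to reduce this vector statement to $d$ coordinatewise applications of the one-dimensional Flow Representation (Theorem~\ref{thm:representation_of_RGM}). Write the $d$-dim measure from Theorem~\ref{thm:risk_gradient_measure*} componentwise as $\mu = [\mu^{(1)}, \dots, \mu^{(d)}]$ with $\mu^{(i)} := \mu^i + \sum_{j\neq i}\eta^{i,j}$, and recall that the vector notation $1_E(t,\omega) = [1_{E^{(1)}}, \dots, 1_{E^{(d)}}]^\top$ together with $\odot$ acts coordinatewise. Then the claimed identity $\mu(E) = \mathbb{E}\int_0^T 1_E\odot \myd C = \mathbb{E}\int_0^T 1_E\odot c\,\myd t$ is exactly the assertion that each scalar signed measure $\mu^{(i)}$ on $\Sigma_p$ admits a flow representation $\mu^{(i)}(E^{(i)}) = \mathbb{E}\int_0^T 1_{E^{(i)}}\myd C^{(i)} = \mathbb{E}\int_0^T 1_{E^{(i)}}\,c^{(i)}\,\myd t$ by a continuous adapted finite-variation process $C^{(i)}$ with density $c^{(i)} = \myd C^{(i)}/\myd t$. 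Once each $C^{(i)}$ is produced I simply assemble $C = [C^{(1)}, \dots, C^{(d)}]^\top$, $c = [c^{(1)}, \dots, c^{(d)}]^\top$, and read off the $\odot$-structure.

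For the marginal pieces $\mu^i$ this is precisely Theorem~\ref{thm:representation_of_RGM}, so the only genuinely new content is to handle the cross-measures $\eta^{i,j}$, which Theorem~\ref{thm:representation_of_RGM} did not cover. My approach is to observe that the proof of Theorem~\ref{thm:representation_of_RGM} in fact establishes a \emph{general principle}: any finite signed measure $\nu$ on $\Sigma_p$ that vanishes on $(t,T]\times F$ whenever $F\in\mathcal{F}_t$ is $\mathbb{P}$-null admits a continuous adapted finite-variation flow representation. Indeed, the construction of $A_t$ as the \RadonNikodym derivative of $m_t := \nu((t,T]\times\,\cdot\,)$ against $\mathbb{P}|_{\mathcal{F}_t}$, the finite-variation bound from the Hahn-Jordan decomposition and finiteness of $\nu$, and the right- and left-continuity arguments (the latter using the $\mathbb{R}^\infty$ coordinate representation of $\mathcal{F}_t$ in the natural Brownian filtration, the former using $\mathcal{F}_{t+}=\mathcal{F}_t$) all use only finiteness and continuity from above/below of $\nu$, never the specific form of the integrand.

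It then remains to check that each $\eta^{i,j}$ satisfies the null-set hypothesis, which I expect to be the main (though routine) obstacle. By Theorem~\ref{thm:risk_gradient_measure*} each $\eta^{i,j}$ is already a finite signed measure on $\Sigma_p$, built, exactly as in the scalar case, from three pieces: a stochastic integral against $\myd S^{(j)}$, a covariation integral against $\myd\langle S^{(i)}, S^{(j)}\rangle$, and a product of terminal expectations. Evaluating on $E_0 = (t,T]\times F$ with $\mathbb{P}$-null $F\in\mathcal{F}_t$, I factor $1_F$ out of each $\int_t^T$-integral using $\mathcal{F}_t$-measurability of $F$ and then conclude vanishing from $\mathbb{P}(F)=0$, exactly as in the three bullet computations inside the proof of Theorem~\ref{thm:representation_of_RGM}; the fact that the type-I integrator is now $\myd S^{(j)}$ while the integrand carries $S^{(i)}$-information does not affect this cancellation. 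Hence $m_t^{i,j}\ll \mathbb{P}|_{\mathcal{F}_t}$ for every $t$, the general principle applies to each $\eta^{i,j}$, and summing the resulting continuous finite-variation processes over $j$ (together with the process for $\mu^i$) yields the continuous adapted $C^{(i)}$.

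Finally, uniqueness is inherited coordinatewise: the uniqueness clause of Theorem~\ref{thm:representation_of_RGM} pins down each $C^{(i)}$ up to indistinguishability, hence $C$, and $c$ is then the $\myd t$-a.e. determined \RadonNikodym derivative.
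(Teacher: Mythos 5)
Your proposal is correct and follows essentially the same route as the paper: reduce to coordinates, show each sectional set function $m_t^{(i)}(H)$ is a signed measure absolutely continuous with respect to $\mathbb P\big|_{\mathcal F_t}$, take \RadonNikodym derivatives, and rerun the machinery of Theorem~\ref{thm:representation_of_RGM} to get each continuous finite-variation $C^{(i)}$ before assembling $C = [C^{(1)},\dots,C^{(d)}]^\top$. The only difference is one of explicitness: the paper compresses the treatment of the cross-measures $\eta^{i,j}$ into the phrase ``we can check,'' whereas you spell out that the proof of Theorem~\ref{thm:representation_of_RGM} is really a general principle for any finite signed measure on $\Sigma_p$ vanishing on $(t,T]\times F$ with $F$ null, and you verify that hypothesis for $\eta^{i,j}$ --- which is precisely the check the paper leaves implicit.
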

\begin{proof}
	Taking $E^{(1)} = (t,T]\times H$, the section set of $E$, we can check that 
	$$
	m_t^1(H):= \dfrac{1}{2} \mu \Big(\big((t,T]\times H\big) \times \underbrace{\varnothing \times \cdots \times \varnothing }_{d-1\text{ times}}\Big )
	$$
	is a signed measure over $\mathcal F_t$ absolutely continuous with respect to $\mathbb P\big |_{\mathcal F_t}$. Hence we can write $C^{(1)}_t$ the \RadonNikodym derivative of $m_t^1$. Varying $t$ over time interval $[0,T]$, we can get a process $C^{(1)}$ by what we have done in Theorem~\ref{thm:representation_of_RGM}. Similarly $C^{(i)}$ for $ i = 1,\dots, d$ are calculated in a same manner. Finally we put them together
	$$
	C:=[C^{(1)},\cdots, C^{(d)}]^\top
	$$
	and get our result.
\end{proof}

The significance of the formula above is that it enables us to investigate the structure of terminal variance and the aggregated nature of instantaneous risk contributions.
Corollary~\ref{coro:representation_of_RGM*} gives the density of the marginal risk measure in the multi-asset case.
Finally, with the help of this corollary, we can formalize the concept of continuous-time risk contribution, which is the focus of this section.
\begin{myDefinition}[(Marginal) Risk Contribution - Continuous-time]
	\label{def:Continuous-time Contribution}
	Given a $\mathbb R ^d$-valued policy $u$ within the associated density process $c$ in Corollary~\ref{coro:representation_of_RGM*},
	we can define ${c}^{(i)}(t,\omega)$ as the marginal risk contribution of $i$-th asset at time $t$ in the case $\omega$.
	And $u^{(i)}(t, \omega)c^{(i)}(t,\omega)$ is defined as the risk contribution.
	
	Particularly, the policy $u$ is said to be risk-parity if the associated risk distribution satisfies
	\begin{equation}
	\label{fml:C-ERCP}
	u^{(i)}(t,\omega) c^{(i)}(t,\omega) = \lambda\text{ for a.a. }(t,\omega).
	\end{equation}
	In other words,  $u\odot c$ is a constant vector valued process $\lambda e_d$ with some constant $\lambda > 0$.
\end{myDefinition}	
The risk marginal risk contribution $c$ is derived from the {\Gateaux} differential of the terminal variance.
The definition of the marginal risk contribution here is identical to the definition in \cite{cherny_two_2011}, where it goes by the name of  directional risk contribution. 
The risk of an investment $X^u$ can be represented as 
\begin{equation}
\label{fml:C-Variance}
\text{Var}(X_T^u) = \mathbb{ E } \int_0^T u_t^\top  c_t\myd t = \myangle{u,c}
\end{equation}
implying that the total risk is accumulated continuously on the support of $u$. 
Clearly, this accumulation of risk contributions is described uniquely by Euler's homogeneous decomposition, which appears in the statement (\ref{fml:EularDecomposition}) as a continuous version.

\begin{myRemark}
	The critical point of this theorem is the existence of a density process $c$ for the marginal risk measure $\mu$.
	That means the total risk is continuously accumulated over $\Omega \times [0,T]$ within the density $c$.
	Another critical point of this flow representation is that the transition from marginal risk measure to the continuous process $C$ can make the risk contribution we shall define in Definition~\ref{def:Continuous-time Contribution} predictable.
	If we focus on the existence some $c$ satisfying
	\[\Var (u \circ S)_T = \mathbb {E} \int_0^T u_t c_t \myd t, \]
	one may be confused with this theorem because the process $c_t = 2b_t(u\circ S)_t - u_tb_t \mathbb {E}(u\circ S)_T + u_t\sigma_t^2$ can be easily get from 
	\begin{equation}
	\label{eq:marginal-given-by-itos}
	\Var (u\circ S)_T =  \mathbb {E}\int_0^T u_t \left[ 2b_t(u\circ S)_t - u_tb_t \mathbb {E}(u\circ S)_T + u_t\sigma_t^2\right] \myd t
	\end{equation}
	by {\Ito}'s formula.
	Of course, this result of marginal contribution is correct in this situation, but we also need to be alert to the dangers this manner poses.
	When there are some jumps in the dynamics of assets, the marginal process $c$ can not be given by {\Ito}'s formula in equation~(\ref{eq:marginal-given-by-itos}).
	To be able to understand this point better, we can consider the variance in the single-period case
	\[
	\begin{aligned}
	\Var(w^\top r) &= \mathbb {E}\left[w^\top r - \mathbb {E}(w^\top r)\right]^2
	\\
	&= \mathbb {E}\left[\myangle{w,\mathbb {E}(rr^\top w) - r \mathbb {E}(r^\top w)}\right]
	\\
	&= \myangle{w,\Lambda w}.
	\end{aligned}   
	\]
	If we look only at the results, both the terms $\mathbb {E}(rr^\top w) - r \mathbb {E}(r^\top w)$ and $\Lambda w$ can be considered as the marginal risk contribution.
	It is not hard to check $\mathbb {E}(rr^\top w) - r \mathbb {E}(r^\top w) + \xi$ where $\xi$ is a zero-mean random variable also meets the equation above.
	Taking the marginal risk measure can avoid this disagreement in the expression.
	The first one $\Lambda w$ is accepted for its predictable property.
	So it is necessary and reasonable to take the marginal risk measure and the associated density process $c$ in Theorem~\ref{thm:representation_of_RGM}.
	In particular, when dealing with risk budgeting problems in Section~\ref{Section: Budgeting}, an ideal form of risk contribution should coincide with a given predictable risk budget. 
	Hence, this flow representation appears to be of importance.
\end{myRemark}

\begin{myCorollary}[Continuity of the Marginal Risk Contribution on Policy]
	\label{coro: Continuity of Contribution wrt Policy}
	Suppose we have two policies $u$ and $u'$ with the notation $\delta u := u-u'$. 
	Then we have an estimation of the difference between two marginal risk contributions
	$$
	\mathbb E \Bigg[ \int_0^T 1_E^\top c_t^{\delta u}\myd t \Bigg]\leq K\big \lVert \delta u \big \rVert_{\mathbb S^\infty}
	$$
	for arbitrary predictable sets $E$.
\end{myCorollary}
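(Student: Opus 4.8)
The plan is to identify the left-hand side with a covariance and then dominate it by terminal standard deviations via Cauchy--Schwarz, reducing everything to a single linear-in-policy bound. First, by Corollary~\ref{coro:RGM_linearity} the marginal risk measure is linear in the policy, so $c^{\delta u} = c^u - c^{u'}$ is genuinely the density process attached to the policy $\delta u = u - u'$ in the sense of Corollary~\ref{coro:representation_of_RGM*}. Choosing the (bounded) indicator policy $v = 1_E$ in Corollary~\ref{coro:covariance} and reading off the density through Corollary~\ref{coro:representation_of_RGM*}, I would write
\[
\mathbb E\Bigg[\int_0^T 1_E^\top c_t^{\delta u}\,\myd t\Bigg] = 2\,\Cov\big(X^{\delta u}_T, X^{1_E}_T\big),
\]
where $X^{1_E}$ denotes the investment process driven by the indicator policy $1_E$ (recall that adding the constant $x_0$ does not affect covariance). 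The Cauchy--Schwarz inequality for the covariance bilinear form then gives
\[
\mathbb E\Bigg[\int_0^T 1_E^\top c_t^{\delta u}\,\myd t\Bigg] \le 2\,\Std\big(X^{\delta u}_T\big)\,\Std\big(X^{1_E}_T\big),
\]
so it remains only to bound each terminal standard deviation linearly by the $\mathbb S^\infty$-norm of its policy.

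The core estimate is $\Std(X^{w}_T) \le C\lVert w\rVert_{\mathbb S^\infty}$ for every predictable $w\in\mathbb S^\infty$, with $C$ depending only on $S$. To prove it I would use the canonical decomposition $S = S_0 + F + M$, valid since $S\in\mathcal H^2_{\mathcal S}$, so that $X^w = x_0 + w\circ S$ splits into the local martingale $w\circ M$ and the finite-variation part $w\circ F$. Pointwise one has $\langle w\circ M\rangle_T = \int_0^T |\sigma_t^\top w_t|^2\,\myd t \le \lVert w\rVert_{\mathbb S^\infty}^2\,\langle M\rangle_T$ and $\int_0^T |\myd(w\circ F)_t| \le \lVert w\rVert_{\mathbb S^\infty}\int_0^T |\myd F_t|$. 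Taking $L^2$-norms and invoking the finiteness of $\lVert\langle M\rangle_T^{1/2}\rVert_{L^2}$ and $\lVert\int_0^T|\myd F_t|\rVert_{L^2}$ built into the definition of $\mathcal H^2_{\mathcal S}$ yields $\lVert w\circ S\rVert_{\mathcal H^2_{\mathcal S}} \le C\lVert w\rVert_{\mathbb S^\infty}$. Combining this with $\Var(X^w_T) \le \mathbb E[(w\circ S)_T^2] \le 2\mathbb E[\langle w\circ M\rangle_T] + 2\mathbb E\big[(\int_0^T|\myd(w\circ F)_t|)^2\big]$, where the martingale term is handled by the \Ito\ isometry, gives the desired $\Std(X^w_T)\le C\lVert w\rVert_{\mathbb S^\infty}$.

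Applying this estimate twice finishes the argument: $\Std(X^{\delta u}_T)\le C\lVert\delta u\rVert_{\mathbb S^\infty}$, while $\lVert 1_E\rVert_{\mathbb S^\infty}\le\sqrt d$ uniformly over predictable sets $E$ gives $\Std(X^{1_E}_T)\le C\sqrt d$, so the claim holds with a constant $K$ (of order $C^2\sqrt d$) independent of $E$. I expect the main obstacle to be the bookkeeping inside the $\mathcal H^2_{\mathcal S}$ estimate --- in particular, controlling the finite-variation contribution and justifying the pointwise domination $\langle w\circ M\rangle_T\le\lVert w\rVert_{\mathbb S^\infty}^2\langle M\rangle_T$ for the (trace of the) matrix-valued quadratic variation; once that linear bound on $\Std(X^w_T)$ is in hand, the rest is a direct consequence of Corollaries~\ref{coro:covariance}, \ref{coro:RGM_linearity} and~\ref{coro:representation_of_RGM*}.
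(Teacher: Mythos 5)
Your proof is correct, but it takes a genuinely different route from the paper's. The paper works directly with the three {\Gateaux} differentials of Lemma~\ref{lm:gateaux_differential}: it estimates $\mathbb E[\myd\Phi_{\text{I}}(u;v)]$, $\mathbb E[\myd\Phi_{\text{II}}(u;v)]$ and $\mathbb E[\myd\Phi_{\text{III}}(u;v)]$ one by one, each by $2\lVert u\rVert_{\mathbb S^\infty}\lVert v\rVert_{\mathbb S^\infty}$ times a norm of $S$ (respectively $\big\lVert\int_0^T 2S_t\myd S_t\big\rVert_{L^1}$, $\big\lVert\langle S\rangle_T^{1/2}\big\rVert_{L^2}$ and $\mathbb E^2[|S_T|]$), and then substitutes $u\mapsto\delta u$, $v\mapsto 1_E$ and invokes the representation of Theorem~\ref{thm:risk_gradient_measure*} (the paper also restricts the written argument to $d=1$). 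You instead exploit the bilinear structure: linearity of $u\mapsto\mu^u$ (Corollary~\ref{coro:RGM_linearity}) to identify $c^{\delta u}$ as the density attached to the policy $\delta u$, the covariance identity of Corollary~\ref{coro:covariance} with the indicator policy $v=1_E$ to rewrite the left-hand side as $2\Cov\big(X^{\delta u}_T,X^{1_E}_T\big)$, Cauchy--Schwarz, and a single estimate $\Std(X^w_T)\le C\lVert w\rVert_{\mathbb S^\infty}$ obtained from the canonical decomposition $S=S_0+F+M$ and the integrability built into $\mathcal H^2_{\mathcal S}$. Both arguments rest on the same raw ingredients (boundedness of policies, square-integrability of $\langle M\rangle_T^{1/2}$ and $\int_0^T|\myd F_t|$), but yours is the more structural one: it makes the constant explicit and manifestly uniform in $E$ (of order $C^2\sqrt d$), it is stated in arbitrary dimension from the start, and --- by replacing term-by-term bounds on expectations of stochastic integrals with Cauchy--Schwarz on the covariance form --- it sidesteps the delicate step in the paper's proof where $\sup_{s}|u_s|$ and the factor $v$ are pulled out of an expectation of a signed stochastic integral, a step that, to be fully rigorous, itself needs the martingale/finite-variation split you carry out. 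What the paper's route buys in exchange is independence from Corollary~\ref{coro:covariance} and a bookkeeping of how each of the three components of the marginal risk measure contributes to the constant $K$; your route instead foregrounds the Riesz-representation viewpoint of Remark~\ref{rmk:marginal_as_dual}, under which the corollary is just continuity of a bounded bilinear form.
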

\begin{proof}
	For convenience, we provide the proof here for the case $d = 1$, and the left part for the multi-dim case can be easily derived in the same manner.
	Starting with three {\Gateaux} differential originally appearing in Lemma~\ref{lm:gateaux_differential}, we have
		\begin{align*}
		\mathbb{E} [\myd \Phi_{\text{I}}(u;v)] & \leq 2\mathbb{E}\Bigg [\sup_{s\leq T} \big |u_s\big |\int_0^T\Big (S_tv_t+(v\circ S)_t\Big) \myd S_t \Bigg]\\
		& \leq 2\big \lVert u \big \rVert_{\mathbb S^\infty}  \big \lVert v \big \rVert_{\mathbb S^\infty} \Bigg \lVert \int_{ 0 }^T 2S_t\myd S_t \Bigg \rVert_{L^1},\\
		\mathbb{E} [\myd \Phi_{\text{II}}(u;v)] & \leq 2\mathbb{E} \Big  [\sup_{s\leq T} \big |u_s\big | \sup_{s\leq T} \big |v_s\big | \langle S \rangle_T\Big ]\\
		&= 2\big \lVert u \big \rVert_{\mathbb S^\infty}  \big \lVert v \big \rVert_{\mathbb S^\infty} \Big \lVert \big \langle S \big \rangle_T^{1/2} \Big \rVert_{L^2},\\
		\mathbb{E} [\myd \Phi_{\text{III}}(u;v)] & \leq 2 \big \lVert u \big \rVert_{\mathbb S^\infty}  \big \lVert v \big \rVert_{\mathbb S^\infty} \mathbb{E}^2\big [\big |S_T\big |\big ].
		\end{align*}
	Finally replacing $u$ by $\delta u$ and $v$ by $1_E$, with the representation in Theorem~\ref{thm:risk_gradient_measure*} we get the result 
		$$
		\mathbb E \Big[ \int_0^T 1_E c_t^{\delta u}\myd t \Big]\leq K\lVert \delta u \rVert_{\mathbb S^\infty}, \forall E \in \Sigma_p.
		$$
\end{proof}

\begin{myCorollary}[Risk Contributions of Self-financing Portfolios]
	\label{coro: explicit risk contribution}
	 Given that the dynamic of assets processes are characterized by a multi-dim SDE
\[
	\begin{cases}
		\myd S_t = \mathrm{Diag}(S_t)(b_t \myd  t + \sigma_t \myd W_t)\\
		S_0 = s_0,
	\end{cases}\]
	we can give the marginal risk contribution
	\[
	c_t = 2\mathrm{Diag}(S_t)b_t u_t^\top S_t + \mathrm{Diag}(S_t)\sigma_t \sigma_t^\top  \mathrm{Diag}(S_t)u_t - \mathrm{Diag}(S_t)b_t(\mathbb E X_T^u + x_0).
	\] 
	Moreover, working on the self-financing case, i.e., the portfolio $X^u$ with properties (\ref{eq:investment}) and (\ref{eq:portfolio}),  the related risk contribution of the policy $u$ is given by
	\begin{itemize}
		\item (in a general share manner) $$(u\odot c)_t = \mathrm{Diag}(u_t) c_t,$$
		\item (in a money manner where policies are specified by the money $M$ allocated on each assets)
		$$
		(u\odot c)_t = 2\mathrm{Diag}(M_t)b_t X_t^u + \mathrm{Diag}(M_t) \sigma _t \sigma _t^\top M_t- \mathrm{Diag}(M_t) b_t (\mathbb{ E} X_T^u + x_0),
		$$
		\item (in a weight manner where policies are specified by the weights $w$ allocated on each assets)
		$$
		(u\odot c)_t = 2 \mathrm{Diag}(w_t) b_t X_t^2 + \mathrm{Diag}(w_t) \sigma_t \sigma_t ^\top w_t X_t^2 - \mathrm{Diag}(w_t)b_t X_t (\mathbb E X_T + x_0).
		$$
	\end{itemize}
\end{myCorollary}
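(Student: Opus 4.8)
The plan is to obtain the density $c$ directly from the flow representation of Corollary~\ref{coro:representation_of_RGM*} specialized to the geometric dynamics at hand, since for continuous (jump-free) asset dynamics the unique predictable density produced by that corollary coincides with the one read off from {\Ito}'s formula, exactly as explained in the Remark following Definition~\ref{def:Continuous-time Contribution}. First I would record the two ingredients the representation needs: for $\myd S_t = \mathrm{Diag}(S_t)(b_t\myd t + \sigma_t\myd W_t)$ the finite-variation (drift) part of $S$ is $\mathrm{Diag}(S_t)b_t$, while the instantaneous covariation is $\myd \langle S\rangle_t = \mathrm{Diag}(S_t)\sigma_t\sigma_t^\top\mathrm{Diag}(S_t)\myd t$. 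These are the geometric analogues of the scalar drift and $\myd\langle S\rangle_t$ that enter the one-dimensional density in the Remark.

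Next I would run the {\Ito} computation underlying Theorem~\ref{thm:marginal_risk_measure_1d} and Theorem~\ref{thm:risk_gradient_measure*} on $Y_t := (u^\top\circ S)_t = X_t^u - x_0$. Writing $Y_T^2 = \int_0^T 2Y_t u_t^\top\myd S_t + \int_0^T u_t^\top\myd\langle S\rangle_t u_t$ and $(\mathbb E Y_T)^2 = \mathbb E\int_0^T u_t^\top(\mathrm{Diag}(S_t)b_t)(\mathbb E Y_T)\myd t$, subtracting and factoring $u_t^\top$ out of the integrand yields the predictable process
\begin{equation*}
c_t = 2(X_t^u - x_0)\mathrm{Diag}(S_t)b_t + \mathrm{Diag}(S_t)\sigma_t\sigma_t^\top\mathrm{Diag}(S_t)u_t - \mathrm{Diag}(S_t)b_t(\mathbb E X_T^u - x_0).
\end{equation*}
The step needing care here is the vanishing in expectation of the stochastic-integral part $\int_0^T 2Y_t u_t^\top\myd M_t$ (with $M$ the martingale part of $S$): I would argue it is a true martingale, not merely a local one, using $u\in\mathbb S^\infty$ together with $S,X^u\in\mathcal H^2_{\mathcal S}$ and a Burkholder--Davis--Gundy estimate, so that its mean is zero. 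Uniqueness of the density in Corollary~\ref{coro:representation_of_RGM*} then certifies that this $c_t$ is \emph{the} marginal risk contribution and not merely one representative.

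The delicate part, and what I expect to be the main obstacle, is reconciling this expression with the stated one, because the naive computation carries $(\mathbb E X_T^u - x_0)$ whereas the claim displays $(\mathbb E X_T^u + x_0)$ and $u_t^\top S_t$ in place of $X_t^u - x_0$. Here the self-financing identity $X_t^u = u_t^\top S_t$ from (\ref{eq:portfolio}) is essential: substituting it into the first term produces an extra $-2\mathrm{Diag}(S_t)b_t x_0$, which combines with the $+\mathrm{Diag}(S_t)b_t x_0$ hidden in the last term to flip the sign of the $x_0$ contribution, delivering precisely $c_t = 2\mathrm{Diag}(S_t)b_t\,u_t^\top S_t + \mathrm{Diag}(S_t)\sigma_t\sigma_t^\top\mathrm{Diag}(S_t)u_t - \mathrm{Diag}(S_t)b_t(\mathbb E X_T^u + x_0)$. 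This bookkeeping is where one must be attentive, since outside the self-financing regime the two forms genuinely differ.

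Finally, the three representations of $u\odot c$ follow by change of parametrization and the commutativity of diagonal matrices. The share form $(u\odot c)_t = \mathrm{Diag}(u_t)c_t$ is immediate from Definition~\ref{def:Continuous-time Contribution}. For the money form I would set $M_t = \mathrm{Diag}(S_t)u_t$, so that $\mathrm{Diag}(u_t) = \mathrm{Diag}(S_t)^{-1}\mathrm{Diag}(M_t)$ and the leading $\mathrm{Diag}(S_t)^{-1}$ cancels each $\mathrm{Diag}(S_t)$ factor in $c_t$ while $\mathrm{Diag}(S_t)u_t = M_t$ and $u_t^\top S_t = X_t^u$; this gives the money-manner formula. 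For the weight form I would then substitute $M_t = X_t^u w_t$, i.e. $\mathrm{Diag}(M_t) = X_t^u\mathrm{Diag}(w_t)$, and collect the resulting powers of $X_t^u$ to arrive at the last display. Each of these is a routine diagonal-matrix manipulation once $c_t$ is in hand.
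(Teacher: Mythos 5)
The paper states this corollary without proof, and your route --- expand $\Var((u^\top\circ S)_T)$ by {\Ito}'s formula, identify the predictable integrand, then do the self-financing bookkeeping and the diagonal-matrix reparametrizations --- is exactly the route the paper intends (compare the Remark containing equation~(\ref{eq:marginal-given-by-itos})). Your algebra checks out in full: the drift and bracket of the geometric dynamics, the true-martingale justification via $u\in\mathbb S^\infty$, $S\in\mathcal H^2_{\mathcal S}$ and a BDG estimate, the observation that the displayed $c_t$ already presupposes identity~(\ref{eq:portfolio}) (turning $2(X^u_t-x_0)$ and $-(\mathbb E X^u_T-x_0)$ into $2u_t^\top S_t$ and $-(\mathbb E X^u_T+x_0)$), and the substitutions $M_t=\mathrm{Diag}(S_t)u_t$ and $\mathrm{Diag}(M_t)=X^u_t\mathrm{Diag}(w_t)$ reproduce all three displays.

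The genuine gap is your certification sentence: ``uniqueness of the density in Corollary~\ref{coro:representation_of_RGM*} then certifies that this $c_t$ is \emph{the} marginal risk contribution.'' Uniqueness identifies the density of the marginal risk measure $\mu^u$; to invoke it you must show that your {\Ito} integrand represents $\mu^u(E)$ for \emph{every} predictable set $E$, equivalently that $\Cov(X^u_T,X^v_T)=\mathbb E\int_0^T v_t^\top c_t\,\myd t$ for every bounded predictable direction $v$. The {\Ito} computation gives this only for the single direction $v=u$, and pairing against $u$ alone cannot identify $c$: any predictable $\xi$ with $\mathbb E\int_0^T u_t^\top\xi_t\,\myd t=0$ could be added without disturbing the variance identity. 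Worse, the directional property genuinely fails for the {\Ito} integrand whenever the drift is nonzero, because it would force the symmetry
\begin{equation*}
\mathbb E\int_0^T (v\circ S)_t\, u_t^\top\,\myd F_t \;=\; \mathbb E\int_0^T v_t^\top\,(u^\top\circ S)_t\,\myd F_t ,
\end{equation*}
where $F$ is the finite-variation part of $S$. This is false in general: already in the paper's one-dimensional setting $\myd S_t=b\,\myd t+\myd W_t$ with $u\equiv 1$ and the deterministic direction $v=1_{[0,T/2]}$, the left side equals $3b^2T^2/8$ while the right side equals $b^2T^2/8$.

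Carrying the computation out for a general direction $v$, the cross term $\mathbb E\int_0^T (v\circ S)_t u_t^\top \myd F_t$ must be converted by integration by parts into a $(t,\omega)$-pointwise functional of $v$, and this brings in predictable projections: the density of $\mu^u$ for the dynamics of the corollary is
\begin{equation*}
\mathrm{Diag}(S_t)b_t\bigl(\mathbb E[X^u_T|\mathcal F_t]-\mathbb E X^u_T\bigr)+\dfrac{\myd\langle M,N\rangle_t}{\myd t}+\mathrm{Diag}(S_t)\sigma_t\sigma_t^\top\mathrm{Diag}(S_t)u_t,
\qquad
N_t:=\mathbb E\Bigl[\int_0^T u_s^\top\mathrm{Diag}(S_s)b_s\,\myd s\,\Bigm|\,\mathcal F_t\Bigr],
\end{equation*}
with $M$ the martingale part of $S$. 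This coincides with the corollary's formula only when $b\equiv 0$ (as happens in the paper's SABR and volatility-managed examples, which are driftless or work under a martingale measure). So what your argument actually establishes is that the displayed $c_t$ satisfies the Euler aggregation property~(\ref{fml:C-Variance}); it does not establish that it is the density from Corollary~\ref{coro:representation_of_RGM*}, i.e.\ the marginal risk contribution of Definition~\ref{def:Continuous-time Contribution}, and the uniqueness appeal cannot close this. To be fair, this is the same subtlety that the paper's own Remark glosses over, so your proposal is faithful to the paper's intent --- but as a proof of the statement under the paper's formal definition, the identification step is a real gap, not a routine verification.
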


\section{Risk Budgeting Problem}
\label{Section: Budgeting}
Compared to the classical result in Example~\ref{example:single-period}, the risk contribution and marginal risk contribution in the continuous-time case depends not only on the assets but also on the time $t \in [0,T]$ and the sample points $\omega \in \Omega$.
These latter two newly appearing terms are of our interest for risk allocation over $\Omega \times [0,T]$.
Having depicted the structure of terminal variance in Section~\ref{Section: Contribution}, we are facing the risk budgeting problem. 
We write $\mathcal D$ the collection of risk distributions induced by all feasible policies in $\mathbb S^\infty$ under the assets in a fixed market.
Within fixed assets, we write $\mathcal D$ the collection of risk distributions induced by all possible policies in $\mathbb S^\infty$. 
The elements in $\mathcal D$ can be obtained through the implementation of some policies.

\begin{myProblem}[Risk Budgeting - Continuous-time]
	\label{prob:C-budget}
	With the terminal variance acting as the risk measure of investments and a continuous $\mathbb R^d$-valued process $\beta \in \mathcal D$, the modelers seek to find suitable policies $u^\star$ satisfying
	$$
	u^\star_t \odot c^{u^\star}_t = \beta _t
	$$
	for each $t$.
\end{myProblem}

To obtain the policy associated with the risk budget, we can design some `loss' functions. 
To budget the standard deviation or the variance in  the single-period case, one can take the loss function like
\begin{equation}
\label{eq:residual-sum-squares-single-period}
J(u) = \sum_{i=1}^d\left(u_i\dfrac{\partial \rho(u)}{\partial u_i} - \beta _i\right)^2
\end{equation}
to make the risk contribution near enough to the budget $\beta$.
However the loss functions in a penalty form (for example $J(u) = \mathbb {E}\int_0^T \lVert u_t\odot c_t - \beta_t \rVert ^2\myd t$) can not work well in the continuous-time case, because it spends too much on the calculation of the marginal term $(t,\omega)$-pointwisely.
The result of \cite{maillard2010properties} gives an optimization to obtain the long-only risk parity policy of the variance where the marginal risk contribution does not need to appear explicitly. 
The optimization is a minimization of the variance with the constraints on the weights given by
\begin{equation}
\label{eq:Maillard-optimization}
    \begin{aligned}
        \text{to minimize }& J(x) = x^\top \Lambda x
        \\
        \text{s.t. }& \begin{cases}
            \sum_{i = 1}^d \log x_i \geq C
            \\
            x_i >0, \forall i   .
        \end{cases}
    \end{aligned}
\end{equation}
Then the risk parity weight $w^\star$ is given by the regularization of $x^\star$, i.e., $w^\star = \dfrac{x^\star}{\sum_i x_i}$. We adopt the loss function in this form where the minimization of the variance and the constants of the shares/weight are concerned and develope it in the continuous-time case.

Inspired by the tradeoff between the diversification of policies and the minimization of risk, we modify the loss function in (\ref{eq:Maillard-optimization}) as
\begin{equation}
\label{eq:modified-Mailard-optimization}
J(u) = \sum_{i = 1}^d -\beta_i\log u_i + u^\top \Lambda u
\end{equation}
where $\beta>0$ is the exogenously risk budget.
Here the vector $u$ is the share of assets and the matrix $\Lambda$ is the covariance matrix of the prices instead of returns. Hence the risk budget $\beta$ should be understood as the budget according to the price rather than the yield rate.
\subsection{Relations between Risk-based Policy and Terminal Variation in Continuous-time Case}
As a generalization of the optimization in the equation~(\ref{eq:modified-Mailard-optimization}) the connection between the continuous-time policy with risk budget process $\beta$ and the optimization problem
\begin{equation}\label{fml:C-Optimization}
	\text{to minimize }  J(u) = \mathbb{ E } \Big[ \int_0^T -\sum_{i = 1}^d\beta^{(i)}_t\log u^{(i)}_t \myd t \Big] + \text{Var}[(u\circ S)_T]
\end{equation}
over possible policies $u \in \mathbb S ^\infty$ remains to be explored.

\begin{myTheorem}[Continuous-time Risk Budgeting]
	\label{thm:C-Optimization}
	Suppose the non-degeneration condition  in Definition~\ref{def:ND-0} is satisfied.
	Given a positive $\mathbb R_+ ^d$-valued process $\beta \in \mathcal D$,  the risk budgeting optimization~(\ref{fml:C-Optimization}) has a unique solution $u^*$ in $\mathbb S^\infty$. And the associated risk contribution of the optimal policy $u^*$ meets the pre-given risk budget $\beta$. 
\end{myTheorem}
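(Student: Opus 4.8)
The plan is to read the optimization~(\ref{fml:C-Optimization}) through the lens of convex analysis, so that the first-order stationarity condition of $J$ coincides \emph{verbatim} with the risk-budgeting equation $u_t \odot c_t^u = \beta_t$. First I would establish strict convexity of the objective. The variance functional $u \mapsto \Var[(u\circ S)_T]$ is convex and, under the non-degeneration condition (Definition~\ref{def:ND-0}), strictly convex, exactly as recorded through the slope mapping at the end of the preliminaries; and for each $i$ the summand $u^{(i)} \mapsto -\beta^{(i)}_t \log u^{(i)}_t$ is strictly convex on $(0,\infty)$ since $\beta_t^{(i)} > 0$. Hence $J$ is strictly convex on the convex domain of strictly positive predictable policies in $\mathbb S^\infty$, so that \emph{any} stationary point is automatically the unique global minimizer. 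This disposes of uniqueness before existence is even broached.

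Next I would compute the \Gateaux\ differential of $J$ at an admissible $u$ in a direction $v \in \mathbb S^\infty$. The logarithmic part contributes $-\mathbb E\int_0^T \sum_{i} \beta^{(i)}_t v^{(i)}_t / u^{(i)}_t \,\myd t$, while the variance part contributes $2\Cov(X^u_T, X^v_T)$; by Corollary~\ref{coro:covariance} together with the density of Corollary~\ref{coro:representation_of_RGM*} this equals $\mathbb E\int_0^T v_t^\top c_t^u \,\myd t$. Setting the sum to zero for every direction $v$ and invoking the usual fundamental-lemma argument over the predictable $\sigma$-algebra $\Sigma_p$ yields $c_t^{u,(i)} = \beta_t^{(i)}/u_t^{(i)}$ for almost every $(t,\omega)$, i.e. $u_t \odot c_t^u = \beta_t$. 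Thus stationarity and budget-matching are literally the same equation, and the second assertion of the theorem is free once a minimizer is produced.

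For existence, the hypothesis $\beta \in \mathcal D$ is the lever. By the definition of $\mathcal D$ it furnishes a feasible policy $\hat u \in \mathbb S^\infty$ realizing $\hat u \odot c^{\hat u} = \beta$; provided the feasibility encoded in $\mathcal D$ keeps $\hat u$ strictly positive, so that the logarithmic term is finite and $\hat u$ lies in the admissible domain, this $\hat u$ already satisfies the stationarity condition of the previous paragraph, whence strict convexity identifies it as the unique minimizer $u^\star \in \mathbb S^\infty$. This simultaneously delivers existence, uniqueness, and the budget-matching property. Should one prefer an intrinsic route, I would instead run the direct method: the barrier $-\log u^{(i)}$ forces $J \to +\infty$ as any coordinate approaches $0$, while the strictly positive variance penalizes large policies, so $J$ is coercive and bounded below; a minimizing sequence then admits a weak limit along which $J$ is lower semicontinuous.

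The main obstacle I anticipate is existence, not the stationarity computation. In the feasibility route one must justify that $\hat u$ can be taken strictly positive with $\mathbb E\int_0^T \beta_t^{(i)}|\log \hat u_t^{(i)}|\,\myd t < \infty$, which is delicate because the marginal contribution $c^{\hat u}$ depends on the whole policy $\hat u$ and sign adjustments cannot be made coordinatewise without altering $c^{\hat u}$. In the direct-method route the difficulty migrates to topology: $\mathbb S^\infty$ is not reflexive, so compactness must be extracted in a weaker space of square-integrable predictable processes, after which one must argue that the candidate limit is genuinely bounded and bounded away from zero in order to return to $\mathbb S^\infty$ and keep the logarithmic penalty finite. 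Securing these integrability and boundedness estimates for the optimizer is where the real work lies.
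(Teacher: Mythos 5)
Your stationarity computation and your uniqueness argument coincide with the paper's: the {\Gateaux} derivative of $J$ at $u$ in direction $v$ is $\mathbb E\int_0^T \bigl(-\beta_t\odot v_t\oslash u_t + v_t\odot c^u_t\bigr)\myd t$ (via Corollary~\ref{coro:covariance} and the density representation), setting it to zero over all predictable directions gives $u\odot c^u=\beta$, and strict convexity (which the paper runs as a level-set contradiction) gives uniqueness. The genuine gap is existence, and you have located it yourself but not closed it. Your primary route --- ``$\beta\in\mathcal D$ furnishes $\hat u$ with $\hat u\odot c^{\hat u}=\beta$, hence $\hat u$ is a stationary point'' --- fails as stated: the identity $\hat u^{(i)}c^{\hat u,(i)}=\beta^{(i)}>0$ only forces each coordinate of $\hat u$ to share the sign of its own marginal contribution, not to be positive, so $\hat u$ may lie outside the set where the barrier $\mathbb E\int_0^T-\sum_i\beta^{(i)}_t\log \hat u^{(i)}_t\myd t$ is even finite, and then it is not a stationary point of $J$ at all. (The paper's remark following the theorem makes this concrete: $u$ and $-u$ induce the same risk distribution, so realizing policies of either sign exist, and coordinatewise sign corrections change $c^{\hat u}$, exactly the obstruction you concede.) Your fallback direct method is likewise only a sketch: the compactness and lower-semicontinuity work in the non-reflexive $\mathbb S^\infty$ is precisely the part left open, so neither route produces a minimizer.

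What the paper does instead is a recession-function argument, which is the coercivity statement your sketch is missing and which requires no realizing policy at all. From the slope inequality
\begin{equation*}
\frac{J(u+\theta v)-J(u)}{\theta}\;\geq\; \mathbb E\int_0^T -\sum_{i=1}^d \frac{\beta^{(i)}_t v^{(i)}_t}{u^{(i)}_t}\,\myd t \;+\; \Cov(X_T^u,X_T^v)\;+\;\theta\,\Var(X_T^v),
\end{equation*}
the recession functional $(J0^+)(v)=\sup_{\theta>0}\theta^{-1}\left[J(u+\theta v)-J(u)\right]$ equals $+\infty$ in every direction $v$ with $\Var(X_T^v)>0$, and the non-degeneration condition of Definition~\ref{def:ND-0} makes that every $v\neq 0$. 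Consequently all level sets of $J$ are bounded, closed and convex (closedness coming from the choice of $\mathcal H^2_{\mathcal S}$ and $\mathbb S^\infty$), so the infimum is attained, and only then is the first-order condition invoked to conclude budget matching. In the paper's proof, positivity of $\beta$ and $\beta\in\mathcal D$ enter only to keep the barrier proper and strictly convex, not to manufacture a candidate minimizer. To rescue your feasibility route you would have to show separately that $\beta$ admits a strictly positive realizing policy with $\mathbb E\int_0^T\beta^{(i)}_t|\log\hat u^{(i)}_t|\myd t<\infty$, which is essentially as hard as the existence statement itself.
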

\begin{proof}
	The positive process $\beta$ ensures the strict convexity of the first term $u\mapsto \mathbb{ E } \Big[ \int_0^T -\sum_{i = 1}^d\beta^{(i)}_t\log u^{(i)}_t \myd t \Big]$.
    As for the terminal variance, it is naturally convex as a risk measure.
    Hence $J$ is a convex functional.
    To reach the minimum,  we need to check the closedness and the recession direction of $J$.
    The closedness is guaranteed by the space $\mathcal {H}_S^2$ of assets and the space $\mathbb {S}^\infty$ of policies.
    To see the recession functioanl $(J0^+)(v):=\sup_{\theta >0}\dfrac{J(u+\theta v) - J(u)}{\theta}$ of $J$, we notice the slope function of $J$ satisfies
    \[
    \begin{aligned}
        &\dfrac{J(u+\theta v) - J(u)}{\theta}
         \\
         =&\left[\mathbb {E}\int_0^T -\sum_{i = 1}^d\dfrac{\beta_t^{(i)}(\log (u_t^{(i)}+v_t^{(i)}) - \log (u_t^{(i)})}{\theta } + \Cov(X_T^u, X_T^v) + \theta \Var (X_T^v)\right]
         \\
         \geq & \left[\mathbb {E}\int_0^T -\sum_{i = 1}^d\dfrac{\beta_t^{(i)}v_t^{(i)}}{u_t^{(i)}} + \Cov(X_T^u, X_T^v) + \theta \Var (X_T^v)\right]
    \end{aligned}
    \]
    for any point $u\in \mathbb {S}^\infty$.
    On the directions with $\Var (X_T^v)>0$, the recession $(J0^+)(v) = \infty$. 
    But on the directions with $\Var(X_T^v) = 0$, the recession functional $J0^+$ may be less equal to zero(depends on $\Cov(X_T^u, X_T^v)$).
    The non-degenerate condition in Definition~\ref{def:ND-0} guarantees $\Var(X_T^v)>0$ for any direction $v\neq0$.
    Hence what left us to see is the local minimum of $J$.

    By the slope function above and Corollary~\ref{coro:covariance}, we can see the first order condition of the functional $J(\cdot)$ at the local minimum $u^\star$ which satisfies\footnote{For two non-zero $\mathbb R^d$vectors $a$ and $b$, $a\oslash b = \left [\dfrac{a^{(1)}}{b^{(1)}}, \dots, \dfrac{a^{(d)}}{b^{(d)}}\right ]^\top$.}
        $$
        \lim_{\theta\downarrow 0}\dfrac{J(u^\star + \theta v) - J(u^\star)}{\theta} = \mathbb E \Big[\int_{[0,T]\times \Omega} -\beta(t,\omega)\odot v(t,\omega)\oslash u^\star(t,\omega)+ v(t,\omega)\odot c^{u^\star}(t,\omega) \myd t\Big] = 0
        $$
    for arbitrary policies $v$. Hence we have
        $$
        -\beta^{(i)}(t,\omega) + {u^\star}^{(i)}(t,\omega){c^{u^\star}}^{(i)}(t,\omega) = 0,\text{ for a.a. }(t,\omega)\text{ in }\Sigma_p \text{ and } i = 1,\dots,d
        $$
    and the equation above also implies
    \[\mathbb {E}\int_0^T \sum_{i = 1}^d\lVert u^\star_t\odot c_t^{u^\star} - \beta _t \rVert^2\myd t=0.\]
    
    The level sets $\lev _\alpha J=\{u | J(u)\leq \alpha\}, \alpha\in \mathbb {R}$ are certain bounded closed convex sets and also for $\alpha=\inf J$. Now suppose $u$ and $u'$ are two differnet policies in the minimum set $\lev _{\inf J}(J)$. 
    We can take $v = u'-u$ and the policies $u + \theta v$ are also in the minimum set.
    The slope function between $u$ and $u+\theta v$ shows
    \[\dfrac{J(u+\theta v) - J(u)}{\theta} = \mathbb {E}\int_0^T -\sum_{i = 1}^d\dfrac{\beta_t^{(i)}v_t^{(i)}}{u_t^{(i)}} + \Cov(X_T^u, X_T^v) + \theta \Var (X_T^v) >0\]
    which implies a contradiction.
    Hence the optimal solution is unique.
\end{proof}

Now we want to make an interpretation of the term $\mathbb{ E } \Big[ \int_0^T -\sum_{i = 1}^d\beta^{(i)}_t\log u^{(i)}_t \myd t \Big]$.
Actually, the result of \cite{maillard2010properties} takes the optimization~\ref{eq:Maillard-optimization} as alternative to the residual sum of squares in \ref{eq:residual-sum-squares-single-period}.
The risk parity portfolio is similar to a minimum variance portfolio subject to a diversification constraint on the weights of its components.
Here we want to give the economic interpretation of the diversification term.
The derivation of the term $\mathbb{ E } \Big[ \int_0^T -\sum_{i = 1}^d\beta^{(i)}_t\log u^{(i)}_t \myd t \Big]$ is given step by step.

First, we can consider a basic single-period optimization with constants on shares, namely
\begin{equation}
    \begin{aligned}
        \text{to minimize }& u^\top \Lambda u
        \\
        \text{s.t. }& u\in \mathcal {C}.
    \end{aligned}
\end{equation}
The aim of the area $\mathcal {C}\subset \mathbb {R}_+^d$ is to prevent the shares from concentrating.
We can appoint linear constraints $u_i \geq a_i>0$ for $i = 1, \cdots, d$.
The optimal policy of this linear constraint has a constant marginal risk contribution by a simple calculation.
The boundary $a$ is exogenously given and usually follows the varying trend of the prices.
In other words, the boundary $a$ may have an exponential growth which implies an exponential boundary of shares.
To avoid this, one can take logarithmic constraints of the shares, i.e.,
\begin{equation}
\mathcal {C} = \{u| \log u_i \geq a_i \text{ for }i = 1,\cdots,d.\}.
\end{equation}
It is easy to check that the risk contribution of the optimal policy happens to be the Lagrangian multipliers of the optimization with the logarithmic constraints.
As a special case, the optimal policy with the constraint $\sum_i \log u_i \geq a $, also the constraint in \cite{maillard2010properties}, shows an equal risk distribution.
Having noticed the connection between the Lagrangian multipliers of the log-constrained optimization and the risk contribution of the associated solution, we can generalize the Lagrangian to the function
\[J(u) = -\sum_{i = 1}^d \beta_i\log u_i + u^\top \Lambda u\]
where the multipliers are replaced by the budget $\beta$.
The optimization solution above can give the policy with the risk budget $\beta$.
Finally, with the help of the results in Section~\ref{Section: Contribution}, we put it in a continuous-times style and get the optimization~(\ref{fml:C-Optimization}).

In perspective on the constraints, the continuous-time optimization problem (\ref{fml:C-Optimization}) can also be considered as a constrained convex minimization of terminal variance, i.e., 
    \begin{align*}
    &\text{minimize } \text{Var}(X_T^u)\\
    &\text{s.t. } u\in  C,
    \end{align*}
where convex constraint area $C$  is given by 
    $$
    C=\{(t,\omega) | \log u^{(i)}(t,\omega)\geq \alpha^{(i)}(t,\omega) \text{ for a.a. }(t,\omega) \text{ in } \Sigma_p \text{ and }i = 1,\dots, d\}.
    $$
This convex constraint area $C$ may be thought of as a regulatory mechanism that prevents the concentration of the shares~$u$ from nearing zero $(t,\omega)$-pointwisely.
When changing the shares~$u(t,\omega)$ for each pair~$(t,\omega)$, the Lagrangian multiplier can be regarded as the tolerance of the exponential growth in the level of $\alpha(t,\omega)$.

Now that the risk has a distribution over $([0,T]\times \Omega)$, we can also think about the constraint term in the way of distribution.
To see this, we can consider the {\Doleans} measures $p_u$ and $ p_{\beta}$ introduced by the multi-dim processes $u$ and the process $\beta$ over $[0,T]\times \Omega$ by 
    \begin{align*}
    p_u^{(i)}(E)& := \mathbb E \int_0^T 1_E u^{(i)}_t\myd t
    \\
    p_{\beta}^{(i)}(E)& := \mathbb{ E }\int_0^T 1_E \beta^{(i)}_t \myd t
    \end{align*}
for each $i$ and predictable set $E$. 
$p_u$ is the distribution of the undetermined shares and $p_\beta$ is the distribution of risk budget.
And then the Kullback-Leibler divergence---also known as the relative entropy---from $p_u$ to $p_{\beta}$ implies
    $$
    D_{\text{KL}}(p_{\beta}||p_u) = \sum_{i = 1}^d \int _{[0,T]\times \Omega}\log\left (\dfrac{\myd p^{(i)}_{\beta}}{\myd p^{(i)}_{u}}\right )\myd p^{(i)}_{\beta}
    =\mathbb E\left [\int _0^T \sum_{i = 1}^d \left( -\beta_t^{(i)}\log u_t^{(i)}  + \beta_t^{(i)} \log \beta_t^{(i)}\right)\myd t\right ]
    $$
which  measures the divergence between $p_u$ and $p_\beta$.
Thus, given a positive risk budget process $\beta$, the objective of reducing both the K-L divergence $D_{\text{KL}}(p_{\beta}||p_u)$ and the terminal variance $\Var (X_T^u)$ is identical to that in (\ref{fml:C-Optimization}).
As a particular example, taking a naive risk budget $\beta(t,\omega)\equiv \lambda$ for some positive constant $\lambda$ and a.a. $(t,\omega)$, this implies the K-L divergence from undetermined $p_u$ to a uniform distributed $p_\lambda$ over $[0,T]\times \Omega$.
Overall, the risk budgeting optimization is equal to balancing the diversification of the policy $u$ and the minimization of the total risk $\Var (X_T^u)$.
\begin{myRemark}
	Theorem~\ref{thm:C-Optimization} discusses the continuous-time risk parity/budgeting cases within a positive share setting in order to establish the connection between admissible set $\mathcal D$ and possible budget processes $\beta$.
Those risk budgeting problems with negative supports $C^-\subset ([0,T]\times \Omega)^d$ where the policies $u(t,\omega)$ are negative-valued remain interesting.
    In light of the regulation of the shares, we can divide the constraint area into two sections
        $$
        \begin{cases}
        -\log u^{(j)}(t,\omega)\leq  \alpha^{(j)}(t,\omega) \\  \text{ for some }(t,\omega)\in C^+ \text{ and }j = 1,\dots, d \text{ with positive budget;}\\
        -\log -u^{(k)}(t',\omega')\leq \alpha^{(k)}(t',\omega') \\ \text{ for some }(t',\omega')\in C^- \text{ and }k = 1,\dots, d \text{ with negative budget.}
        \end{cases}
        $$
     Then we can deduce a generalized optimization
        $$
        \text{minimize }  J(u) = \mathbb{ E } \Big[ \int_0^T -\sum_{i = 1}^d\beta^{(i)}_t\log (\delta^{(i)}_tu^{(i)}_t) \myd t \Big] + \text{Var}[(u\circ S)_T]
        $$
    where the risk budget $\beta$ can be negative on $C^-$ and $\{1,-1\}$-valued sign process $\delta$ coinciding with $\beta$ is also negative on $C^-$. Additionally, we can observe that the policy $u$ and its inverse $-u$ have the same risk distribution.

    The behavior of the optimal policy is also of interest when there is no constraint on some area $C^0$ or the risk budget satisfies $\beta =0 $ on the area $C^0 \in \Sigma_p$. In this case, the optimization~\ref{thm:C-Optimization} still has an optimal solution and the associated policy $u^\star$ also meets the budget $\beta$ for a.a. $(t,\omega)$., i.e., 
    \[u^\star_t\odot c^{u^\star}_t = \beta_t.\]
    The equation above implies that the risk contribution of $u^\star$ is zero on $C^0$ and also implies the marginal risk contribution or the share itself is zero on that zero-risk-budget area.
    The policy $u^\star$ is one of the policies with risk contribution identical to $\beta$ on the complement of $C^0$, and $u^\star$ is also the minimized total variance of these policies.
    The zero-valued marginal risk contribution and the share on $C^0$ make sure that is true.
\end{myRemark}


\subsection{Ramifications}

In simple terms, both the policy and the exogenous risk budget depend on the information, and the information is equal to the measurability of the processes in the stochastic language.
The critical point here is how the risk budgeted policy behaves when given rough information.
When dealing with less information, it can be challenging to establish a policy whose risk contribution is the same as the budget $\beta$. 
Suppose we can only make a choice at the start of $[0,T]$ and can not change it. 
Given the budget $\beta$, how to determine the best policy in this case?
The corollary following gives the relation between the risk budget and the optimal policy with less information.

\begin{myCorollary}[Projection]\label{coro:Projection}
	Given an integrable positive risk budget $\beta$ and the admissible policy set $\mathcal U = \mathbb S^\infty \cap L^0(\Sigma_p')$ where $\Sigma_p'$ is a sub $\sigma$-algebra of $\Sigma_p$, the solution of the optimization
		$$
		\min_{u\in \mathcal U}  J(u) = \mathbb{ E } \Big[ \int_0^T -\sum_{i = 1}^d\beta^{(i)}_t\log u^{(i)}_t \myd t \Big] + \text{Var}[(u\circ S)_T]	
		$$
	gives a policy $u^\star$ of which the associated risk distribution is the projection of $\beta$ onto $\Sigma_p'$.
\end{myCorollary}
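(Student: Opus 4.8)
The plan is to reuse the variational machinery of Theorem~\ref{thm:C-Optimization}, the only structural change being that the admissible directions are now confined to the sub-$\sigma$-algebra $\Sigma_p'$. First I would settle existence and uniqueness of the minimiser $u^\star$ over $\mathcal U = \mathbb S^\infty \cap L^0(\Sigma_p')$. Since $\mathcal U$ is the intersection of $\mathbb S^\infty$ with the closed linear subspace $L^0(\Sigma_p')$, it is convex, and the restriction of $J$ to $\mathcal U$ inherits the strict convexity and closedness established in Theorem~\ref{thm:C-Optimization}. The non-degeneracy condition of Definition~\ref{def:ND-0} again forces the recession functional to be $+\infty$ along every direction $v\neq 0$ in $\mathcal U$ (because $\Var(X_T^v)>0$), so the level sets are bounded and a unique $u^\star\in\mathcal U$ attains the infimum. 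By construction $u^\star$ is $\Sigma_p'$-measurable, and the logarithmic barrier with $\beta>0$ keeps it positive.

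The key step is to extract the first-order condition using \emph{exponential} rather than additive perturbations. For any bounded $\Sigma_p'$-measurable process $g$, the componentwise-perturbed policy $u^\star_\theta := u^\star \odot e^{\theta g}$ again lies in $\mathcal U$ for every $\theta$ (it is $\Sigma_p'$-measurable, bounded, and positive), so $\theta\mapsto J(u^\star_\theta)$ is minimised at $\theta=0$. This choice makes the computation clean: since $\log {u^\star_\theta}^{(i)} = \log {u^\star}^{(i)}+\theta g^{(i)}$, the logarithmic term differentiates to $-\mathbb E\int_0^T\sum_i \beta^{(i)}_t g^{(i)}_t\,\myd t$, sidestepping the awkward factor $1/u^\star$, while the derivative of the terminal variance along the direction $u^\star\odot g$ is $\mathbb E\int_0^T (u^\star_t\odot g_t)^\top c^{u^\star}_t\,\myd t$ exactly as computed via Corollary~\ref{coro:covariance} in the proof of Theorem~\ref{thm:C-Optimization}. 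Setting $\tfrac{d}{d\theta}J(u^\star_\theta)\big|_{\theta=0}=0$ and testing one coordinate at a time yields, for every bounded $\Sigma_p'$-measurable $g^{(i)}$,
\[\mathbb E\int_0^T g^{(i)}_t\big({u^\star}^{(i)}_t {c^{u^\star}}^{(i)}_t-\beta^{(i)}_t\big)\,\myd t=0.\]

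Finally I would read off the projection statement. The displayed identity asserts precisely that, under the product measure $\myd t\otimes\mathbb P$ on $[0,T]\times\Omega$, the risk-contribution process $u^\star\odot c^{u^\star}$ and the budget $\beta$ carry the same conditional expectation given $\Sigma_p'$; equivalently, for every $E\in\Sigma_p'$ one has $\mathbb E\int_0^T 1_E\,{u^\star}^{(i)}_t {c^{u^\star}}^{(i)}_t\,\myd t = \mathbb E\int_0^T 1_E\,\beta^{(i)}_t\,\myd t$, i.e.\ the \Doleans{} measure of the risk contribution of $u^\star$ agrees with $p_\beta$ on $\Sigma_p'$, which is the definition of its risk distribution being the projection of $\beta$ onto $\Sigma_p'$. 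I expect the main obstacle to lie in justifying this variational step rather than in the algebra: one must verify admissibility $u^\star_\theta\in\mathcal U$ and interchange differentiation with the expectation and time integral (by dominated convergence, using the boundedness of $u^\star$ and $g$ together with the $\mathcal H^2_{\mathcal S}$ structure of $S$), and then recognise the resulting orthogonality relation as the characterisation of the $L^2(\myd t\otimes\mathbb P)$-projection onto $\Sigma_p'$. It is worth emphasising why the conclusion is a projection and not the pointwise identity $u^\star\odot c^{u^\star}=\beta$ of Theorem~\ref{thm:C-Optimization}: the marginal contribution $c^{u^\star}$ is in general not $\Sigma_p'$-measurable, so the test directions $g$ can only probe the $\Sigma_p'$-measurable part of $u^\star\odot c^{u^\star}-\beta$.
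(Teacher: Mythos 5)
Your proposal is correct and follows essentially the same route as the paper: both derive the first-order condition of $J$ at the minimiser $u^\star$ tested only against $\Sigma_p'$-measurable directions, and then read the resulting orthogonality relation as saying that the projections (conditional expectations under $\myd t\otimes \mathbb P$) of $u^\star \odot c^{u^\star}$ and of $\beta$ onto $\Sigma_p'$ coincide. The only cosmetic difference is that the paper perturbs additively by indicators $1_E$, $E\in \Sigma_p'$, obtaining a condition involving $\beta \oslash u^\star$ and then implicitly using the $\Sigma_p'$-measurability and boundedness of $u^\star$ to pull it inside the projection, whereas your multiplicative perturbation $u^\star\odot e^{\theta g}$ (equivalently, testing the additive direction $u^\star\odot g$) cancels the $1/u^\star$ factor at the outset --- a cleaner but equivalent bookkeeping.
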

\begin{proof}
	The assertion can be quickly get from the first-order condition
		$$
		\mathbb E \Big[\int_{[0,T]\times \Omega} -\beta(t,\omega)\odot 1_E(t,\omega)\oslash u^\star(t,\omega)+ 1_E(t,\omega)\odot c^{u^\star}(t,\omega)    \myd t\Big] = 0
		$$
	with arbitrary $\Sigma_p'$-mesurable set $E$, and hence 
		$$
		-\bar \beta^{(i)}(t,\omega) + {u^\star}^{(i)}(t,\omega){c^{u^\star}}^{(i)}(t,\omega) = 0,\text{ for a.a. }(t,\omega)\text{ in }\Sigma_p' \text{ and } i = 1,\dots,d
		$$
	where $\bar \beta$ is the projection from $\Sigma_p$ to $\Sigma_p'$ within measure $\mathbb P \times \mathrm{Lebesgue}$.
\end{proof}
The projection corollary above also shows that the single-period risk budgeting problem is the degenerate case of continuous-time risk budgeting problems where the risk budgets are constants and the policy is constrained in the naive $\sigma$-algebra.

The optimization (\ref{fml:C-Optimization}), unlike the well-known expected utility problems, is not a pure stochastic control problem.
The embedding lemma is typically employed to solve the continuous-time mean-variance frontier problem where the mean-variance object is converted to a stochastic control problem. 
The lemma following gives a similar method to solve the risk budgeting problem in a stochastic control framework.
\begin{myLemma}[Embedding, original version in \cite{zhou2000continuous}]
\label{lm:embedding}
Suppose $u^\star$ is the optimal policy in Theorem~\ref{thm:C-Optimization}.
Denote $u_\gamma$  the solution of the following parametrized auxiliary problem
	$$
	\text{minimize }  J_\gamma(u) = \mathbb{ E } \Big[ \int_0^T -\sum_{i = 1}^d\beta^{(i)}_t\log u^{(i)}_t \myd t      + \gamma (u\circ S)_T + (u\circ S)_T^2\Big]
	$$
with same condition in Theorem~\ref{thm:C-Optimization}. 
Then we have 
	$$
	u^\star \in \bigcup _{\gamma\in \mathbb R}u_\gamma.
	$$
Moreover, $u^\star$ is identical to $u_{\gamma^\star}$ where $\gamma^\star = -2\mathbb E[(u^\star\circ S)_T]$.
\end{myLemma}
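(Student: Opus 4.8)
The plan is to exploit the only feature that prevents $J$ from being a genuine stochastic control cost, namely the squared mean $(\mathbb{E}[(u\circ S)_T])^2$ buried inside $\text{Var}[(u\circ S)_T]$. Writing $\mu(u):=\mathbb{E}[(u\circ S)_T]$ and expanding the variance as $\text{Var}[(u\circ S)_T]=\mathbb{E}[(u\circ S)_T^2]-\mu(u)^2$, a direct comparison of the two objectives records the algebraic identity
$$
J_\gamma(u)=J(u)+\mu(u)^2+\gamma\,\mu(u),
$$
valid for every admissible $u\in\mathbb{S}^\infty$ and every $\gamma\in\mathbb{R}$. Thus $J_\gamma$ differs from $J$ only through the scalar $\mu(u)$, and the auxiliary cost is now separable in the sense required for a stochastic-control (linear-quadratic with running penalty) formulation.

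Next I would fix $\gamma^\star=-2\mu(u^\star)$ by a completion-of-square argument. With this choice the identity becomes
$$
J_{\gamma^\star}(u)=J(u)+\mu(u)^2-2\mu(u^\star)\mu(u)=J(u)+(\mu(u)-\mu(u^\star))^2-\mu(u^\star)^2,
$$
so that $J_{\gamma^\star}(u^\star)=J(u^\star)-\mu(u^\star)^2$ and, for an arbitrary admissible $u$,
$$
J_{\gamma^\star}(u)-J_{\gamma^\star}(u^\star)=\big(J(u)-J(u^\star)\big)+(\mu(u)-\mu(u^\star))^2\geq 0,
$$
because $u^\star$ minimizes $J$ by Theorem~\ref{thm:C-Optimization} and the squared term is non-negative. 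Hence $u^\star$ minimizes $J_{\gamma^\star}$, which already yields $u^\star\in\bigcup_{\gamma}u_\gamma$. The self-consistency relation $\gamma^\star=-2\mathbb{E}[(u^\star\circ S)_T]$ then emerges automatically as the value of $\gamma$ forcing the cross term to vanish at $u^\star$.

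To upgrade the inclusion to the identity $u^\star=u_{\gamma^\star}$, I would verify that the auxiliary problem has a unique minimizer, so that the minimizer just exhibited must equal $u_{\gamma^\star}$. Strict convexity follows from the strictly convex log-penalty $-\sum_{i}\beta^{(i)}_t\log u^{(i)}_t$ (using $\beta>0$) together with the convexity of $u\mapsto\mathbb{E}[(u\circ S)_T^2]$ and the linearity of $u\mapsto\gamma\mu(u)$, while existence follows from the same coercivity and closedness reasoning as in Theorem~\ref{thm:C-Optimization}. The main obstacle I anticipate is precisely this well-posedness step rather than the algebra: one must confirm that replacing the concave term $-\mu(u)^2$ by the surrogate $\mu(u)^2+\gamma\mu(u)$ preserves coercivity on $\mathbb{S}^\infty$ and that the free parameter $\gamma$ cannot drive the infimum to $-\infty$. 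Here the bound $\mathbb{E}[(u\circ S)_T^2]\geq\text{Var}[(u\circ S)_T]$ lets the quadratic dominate the linear term $\gamma\mu(u)$, and the non-degeneracy condition of Definition~\ref{def:ND-0} rules out nontrivial recession directions; once these are in hand, uniqueness and the completion-of-square identity together close the argument.
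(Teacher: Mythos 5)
Your proof is correct and is essentially the paper's argument: the paper applies the supergradient (concavity) inequality to $f(x,y,z)=x-y^2+z$ evaluated at the moments of $u$, and that inequality for the quadratic term $-y^2$ is exactly your completion-of-squares identity $J_{\gamma^\star}(u)-J_{\gamma^\star}(u^\star)=\bigl(J(u)-J(u^\star)\bigr)+\bigl(\mu(u)-\mu(u^\star)\bigr)^2\geq 0$, with $\gamma^\star=-2\mu(u^\star)$ arising as the supergradient in both cases. You present it directly rather than by contradiction and are more explicit about uniqueness and coercivity of the auxiliary problem (which the paper leaves implicit), but the underlying mechanism is the same Zhou--Li embedding.
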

\begin{proof}	
	We can define a function 
		$$
		f(x,y,z) = x-y^2+z
		$$
	which is a concave function with 
		$$
		f\Big(\mathbb E\big[(u\circ S)_T^2\big], \mathbb E\big[(u\circ S)_T\big], \mathbb E\Big[\int_0^T-\sum_{i = 1}^d\log u_t^{(i)}\myd t\Big]\Big ) = J(u).
		$$
	If $u^\star$ is not optimal for $J_{\gamma^\star}$, there will be a $u'$ optimal for $ J_{\gamma^\star}$ satisfying
		\begin{align*}
		\mathbb{ E } \Big[ \int_0^T -\sum_{i = 1}^d\beta^{(i)}_t\log {u^\star} ^{(i)}_t \myd t      + \gamma (u^\star\circ S)_T + (u^\star\circ S)_T^2\Big] 
		\\> \mathbb{ E } \Big[ \int_0^T -\sum_{i = 1}^d\beta^{(i)}_t\log u'^{(i)}_t \myd t      + \gamma (u'\circ S)_T + (u'\circ S)_T^2\Big].
		\end{align*}
	On the one hand, the convexity of $f$ implies
	\begin{align*}
		f&\Big(\mathbb E\big[(u^\star\circ S)_T^2\big], \mathbb E\big[(u^\star\circ S)_T\big], \mathbb E\Big[\int_0^T-\sum_{i = 1}^d\log {u_t^\star}^{(i)}\myd t\Big]\Big ) 
		\\
		&+ 2\mathbb E\big[(u^\star\circ S)_T\big]\Big( \mathbb E\big[(u'\circ S)_T\big] - \mathbb E\big[(u^\star\circ S)_T\big]  \Big)
		\\
		&+\Big( \mathbb E\big[(u'\circ S)_T^2\big] - \mathbb E\big[(u^\star\circ S)_T^2\big] \Big) + \Big(\mathbb E\Big[\int_0^T-\sum_{i = 1}^d\log u_t'^{(i)}\myd t\Big] - \mathbb E\Big[\int_0^T-\sum_{i = 1}^d\log {u^\star_t}^{(i)}\myd t\Big]  \Big)
		\\
		\geq&  	f\Big(\mathbb E\big[(u'\circ S)_T^2\big], \mathbb E\big[(u'\circ S)_T\big], \mathbb E\Big[\int_0^T-\sum_{i = 1}^d\log u_t'^{(i)}\myd t\Big]\Big )
	\end{align*}
which induces a contradiction to the original problem, and hence we get the result $ J(u^\star) \leq J(u')$.
\end{proof}

Lemma~\ref{lm:embedding} implies that risk budgeting problems can be accomplished by solving a family of parameterized stochastic control problems. 
The optimal solution of optimization~\ref{fml:C-Optimization} is embedded in the parameterized solutions of those auxiliary problems.
Usually, these auxiliary problems can be solved identically.
To find the optimal solution, we can check $\gamma^\star = -2\mathbb E[(u^\star\circ S)_T]$.

Additionally, the item $\mathbb E [\gamma (u\circ S)_T ]$ appearing in the auxiliary functional $J_\gamma$ can give a parameterized signed measure 
$$
\mu_\gamma(E) = \mathbb{ E } [\gamma \int_0^T 1_E\myd S_t].
$$
This parameterized measure can be considered as a guess of $\mu_{\text{III}}(E) = \mathbb E[\myd \Phi_{\text{III}}(u;1_E)]$ established in Theorem~\ref{thm:marginal_risk_measure_1d}.
And when we get the right guess $\gamma ^\star$, the related $\hat \mu_{\text{III}}$ meets true $\mu_{\text{III}}$. 
Actually, in the language of risk contribution, this embedding technique in \cite{zhou2000continuous} gives a modification of the marginal risk contribution.

\section{Examples}

It is natural to extend the risk contribution idea from the classical covariance matrix structure to the continuous-time setting. 
This section will provide three examples to demonstrate the concepts of risk contribution and risk budgeting.
\begin{itemize}
	\item In the first subsection, we derive a strategy based on a concrete risk budget. 
    This derived strategy is identical to the volatility-managed portfolio in \cite{moreira2017volatility} where the portfolio is based on the volatility anomaly in the time axis.
    We will derive this strategy in a risk budget manner without the tenet of risk anomaly and give the average return at long-term growth.
	\item The risk contributions of policies depend on the concrete model of the dynamics.
	As for the second example, we will relate the risk contribution to the coefficients in the SABR model, which is commonly used for option modeling.
	The risk contribution can be derived through the options of the underlying assets.
	\item  As a tool to detect the concentration of the risk, the final example gives the risk contribution of the continuous-time mean-variance portfolio in \cite{zhou2000continuous}.	
\end{itemize}
\subsection{Volatility-Managed Portfolios}
\label{Section: JF}
Unlike consensus that higher risk always brings a higher return, low-risk anomaly states that the stocks with low risks have higher returns than the stocks with higher risk.
This smile risk-return curve happens not only in the cross-section but also in the time axis.
The volatility-managed portfolio in \cite{moreira2017volatility} suggests enlarging the investment on the risky assets when their risk is at a lower level.
To be more specific, they investigate the relationship between the lagged volatility factor and the current return of stocks.
Empirical results show that 
\begin{itemize}
	\item the current month's average returns have a weak relationship with the lagged realized volatility;
	\item whereas the current month's volatility has a strong relationship with lagged realized volatility.
\end{itemize}
Consequently, the return-volatility ratio has a negative correlation with the volatility because of the volatility clustering.
In other words, the efficiency of the assets is higher when the volatility is at a lower level.
The difference in the return-volatility ratio in the time axis implies different risk budgets at different times.
The agents can time volatility by shorting risky assets during periods of high volatility and accepting risk during periods of low volatility to gain from the mean-variance trade-off.

How does the volatility-managed portfolio work?
In \cite{moreira2017volatility}, volatility-managed portfolios are designed by scaling monthly returns by the inverse of the previous month's realized variance, therefore reducing risk exposure when the realized variance was at a high level and vice versa.
This managed strategy can be summarized as 
	\begin{equation}
	\label{eq:JF2017}
	f_{t_{i}}^\sigma  = \dfrac{\hat c}{\hat \sigma _{t_{i-1}}^2}f_{t_i}
	\end{equation}
where $f_{t_i}$ is the buy-and-hold portfolio return of current month, $\hat \sigma_{t_{i-1}}^2$ represents the proxy for portfolio's instantaneous variance and constant $\hat c$ controls the average risk exposure of this managed strategy. 
The buy-and-hold portfolio can be a capital-weighted market index or a simple stock.
The investor can borrow money from a bank account (risk-free assets) to scale this buy-and-hold portfolio.
Through this simple scheme, though volatility does not forecast returns, the efficiency of the investment is improved by time-selected volatility.

This scheme implies the idea of the risk budget in the time axis:
the risk budget can be taken at a high level when the volatility is lower.
We will give a continuous-time risk budgeting interpretation of this scheme.
In this subsection, we assume that the dynamic of the underlying asset is in a linear form
	\begin{equation*}
		\begin{cases}
		\myd S_t = r_tS_t \myd t + \sigma_t S_t\myd W_t\\
		S_0 = s
		\end{cases}
	\end{equation*}
where $W$ is a $1$-dim Brownian motion under the physical probability measure $\mathbb P$. 
The coefficients return rate $r$, volatility rate $\sigma$ are locally bounded processes such that this Lipschitz SDE can be well-posed in $\mathcal H^2_{\mathcal S}$ and satisfies the non-degenerate condition in Definition~\ref{def:ND-0}. 
The solution to this SDE can be considered the price of a single stock or the market portfolio, e.g., market stock indices.

The reference probability measure $\mathbb Q$ appearing in budgeting optimization can be identified as the risk-neutral measure constructed by a \Doleans-Dade exponential martingale
	\begin{equation}
	\dfrac{\myd \mathbb Q}{\myd \mathbb P} \Big| _{\mathcal F_t} = \mathcal E(\sigma_t^{-1}b_t\circ W)_t = \mathcal E(\theta \circ W)_t 
	\end{equation}
under which $S$ is a martingale and $\theta_t := \sigma^{-1}_tr_t$ is called the market price of risk. 
Hence under this reference probability $\mathbb Q$, the terminal variance can be calculated as
\begin{equation}
\mathrm{Var}^{\mathbb Q}(X_T^u) = \mathbb E ^{\mathbb Q} [\langle u\circ S \rangle _T] = \mathbb E^{\mathbb Q}\left[\int_0^Tu_t^2S_t^2\sigma_t^2 \myd t \right ].
\end{equation}
Since $S$ is in exponential growth and the terminal variance is homogeneous of order $2$, the parity risk budget $\beta$ we will take should be in a regularized form
\begin{equation*}
\beta _t^{1/2} \propto \dfrac{S_t}{\sigma_t}
\end{equation*} 
for each $t$. 
Without the loss of generality, we can take 
	\begin{equation*}
	\beta_t ^{1/2}=\dfrac{\hat c}{\sigma_t}S_t > 0
	\end{equation*}
where this positive proportion coefficient $\hat c$ plays the same role in the formula~(\ref{eq:JF2017}) to control the total risk level. 

Given this regularized risk parity budget, the optimal condition of the associated convex risk budgeting optimization 
    \begin{equation*}
    J(u) = \mathbb E^{\mathbb Q}\left( \int_0^T-\beta _t \log u_t \myd t \right) + \dfrac{1}{2} \mathrm{Var}^{\mathbb Q}(X_T^u)
    \end{equation*}
is given by
    \begin{equation*}
    -\dfrac{\beta_t}{u^\star_t} + S_t^2 u^\star_t \sigma_t^2=0.
    \end{equation*}
The optimal policy $u^\star$ can be straightly calculated by
    \begin{equation*}
    u^\star _t = \dfrac{\hat c}{\sigma_t ^2}
    \end{equation*}
which is identical to the core equation~(\ref{eq:JF2017}) in \cite{moreira2017volatility} and derives the optimal investment process
    \begin{equation}
    \begin{aligned}
    \myd X_t^\star & = u_t^\star \myd S_t
    \\
    & = \dfrac{\hat c}{\sigma_t^2}S_t (r_t \myd t + \sigma _t \myd W_t).
    \end{aligned}
    \end{equation}
We can rewrite this investment process in the form 
    \begin{equation}
    \myd X_t^\star = \underbrace{\beta_t^{1/2}}_{\text{amplifier}} (\underbrace{\theta _t \myd t}_{\text{signal}} + \underbrace{\myd W_t}_{\text{noise}}).
    \end{equation}
Having been back to our physical world $\mathbb P$, our optimal investment captures the gain from the market price of risk $\theta$ under a unit noise with square-rooted risk budget $\beta$ as the amplifier. 
Notice that the market price of risk is high when the volatility is low and the budget is high at the same time.
Hence we call the term $\beta ^{1/2}$ as an amplifier to make the investment more efficient.

We also care about the long-term return of $X^\star$ and have the proposition following.
\begin{myProposition}
	Suppose that $X^\star$ has the semi-martingale decomposition $X^\star=F+M$ where $M$ is the continuous martingale part. Then bounded volatility $\sigma_t \in [\underline{\sigma}, \overline{\sigma}]$ together with $||X||_{\mathcal H_{\mathcal S}^2}<\infty$ can ensure that  $M$,  the martingale part of $X^\star$, satisfies 
	\begin{equation*}
	\lim_{t\to \infty} \dfrac{\langle M\rangle _t \log \log t}{t^2}  = \lim_{t\to \infty} \dfrac{(\int_0^t  \hat c ^2 X_s^2 \frac{1}{\sigma_s^2}\myd s) \log \log t }{t^2}= 0
	\end{equation*} 
	and implies the long-term behaviour of $X^\star$
	\begin{equation*}
	\lim_{t\to \infty} \dfrac{M_t}{t}=0 \text{ and } \lim_{t\to \infty} \dfrac{X_t^\star}{t} = \lim _{t\to \infty}\dfrac{\int_0^t \beta_s^{1/2}\theta_s \myd s}{t}.
	\end{equation*}
\end{myProposition}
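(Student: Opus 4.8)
The plan is to read the claim as a Law-of-the-Iterated-Logarithm (LIL) statement for the continuous martingale part of $X^\star$, together with a Kronecker-type estimate for the drift. First I would extract from the optimal dynamics $\myd X_t^\star = \beta_t^{1/2}(\theta_t\myd t + \myd W_t)$ the canonical decomposition $X^\star = x_0 + F + M$, where the martingale part is $M_t = \int_0^t \beta_s^{1/2}\myd W_s$ and the drift is $F_t = \int_0^t \beta_s^{1/2}\theta_s\myd s$. The defining property of the {\Ito} integral then gives the quadratic variation $\langle M\rangle_t = \int_0^t \beta_s\myd s = \int_0^t \hat c^2 S_s^2\sigma_s^{-2}\myd s$, which is exactly the integral appearing in the first displayed limit.

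The second step is to establish $\langle M\rangle_t\,\log\log t\,/\,t^2 \to 0$ almost surely. Here I would use both hypotheses: bounded volatility $\sigma_t\in[\underline{\sigma},\overline{\sigma}]$ bounds $\sigma_s^{-2}\le\underline{\sigma}^{-2}$, so $\langle M\rangle_t \le (\hat c^2/\underline{\sigma}^2)\int_0^t S_s^2\myd s$, while membership of $X^\star$ in $\mathcal H_{\mathcal S}^2$ controls the $L^2$-size of this integral. The aim is to upgrade this to an almost-sure subquadratic growth bound $\langle M\rangle_t = o(t^2/\log\log t)$, which is precisely the rate the LIL will consume.

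The third step applies the LIL. By the Dambis--Dubins--Schwarz theorem there is a Brownian motion $B$ with $M_t = B_{\langle M\rangle_t}$. On the event $\{\langle M\rangle_\infty<\infty\}$ the martingale $M$ converges almost surely, so $M_t/t\to0$ trivially; on $\{\langle M\rangle_\infty=\infty\}$ the Brownian LIL yields $\limsup_{t\to\infty}|M_t|/\sqrt{2\langle M\rangle_t\log\log\langle M\rangle_t}=1$ almost surely. Since Step 2 forces $\langle M\rangle_t$ to grow at most polynomially, one may replace $\log\log\langle M\rangle_t$ by $\log\log t$ up to a bounded factor, and then $|M_t|/t$ is dominated by a constant multiple of $\sqrt{\langle M\rangle_t\,\log\log t\,/\,t^2}\to0$, giving $\lim_{t\to\infty}M_t/t=0$.

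Finally, dividing $X_t^\star = x_0 + F_t + M_t$ by $t$, the initial-wealth term and the martingale term both vanish in the limit, leaving $\lim_{t\to\infty}X_t^\star/t = \lim_{t\to\infty}\tfrac1t\int_0^t\beta_s^{1/2}\theta_s\myd s$, which is the asserted conclusion. I expect the main obstacle to be Step 2: the space $\mathcal H_{\mathcal S}^2$ encodes only finite-horizon $L^2$ integrability, so turning it into an almost-sure growth bound valid for all large $t$ requires a genuine argument---for instance a maximal-inequality-plus-Borel--Cantelli estimate over dyadic time blocks, or first bounding $\mathbb E[\langle M\rangle_t]$ and then removing the expectation through a strong law for the integrand. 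The passage from $\log\log\langle M\rangle_t$ to $\log\log t$ is a secondary technical point that hinges on the same polynomial growth control.
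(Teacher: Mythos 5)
Your route is essentially the paper's. The paper proves this proposition by citation alone: it invokes the strong law of large numbers for continuous martingales (Lemma~1.3.2 of \cite{fernholz_stochastic_2002}), whose hypothesis is precisely the first displayed limit and whose textbook proof is exactly your Step 3 (Dambis--Dubins--Schwarz time change plus the Brownian LIL, with the trivial branch on $\{\langle M\rangle_\infty<\infty\}$). So your Steps 1, 3 and 4 simply unpack the lemma the paper cites; your identification $\langle M\rangle_t=\int_0^t \hat c^2 S_s^2\sigma_s^{-2}\myd s$ is also the correct one (the $X_s$ inside the paper's display should indeed be the asset price $S_s$, since $\myd X^\star_t = (\hat c/\sigma_t)S_t(\theta_t\myd t+\myd W_t)$).

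Where you and the paper both stop short is your Step 2, and there your diagnosis needs correcting rather than completing. If $\lVert X\rVert_{\mathcal H^2_{\mathcal S}}<\infty$ is read horizon-by-horizon (each finite $T$), then no dyadic Borel--Cantelli or maximal-inequality argument can close the gap, because the claim is then false: for geometric Brownian motion with constant coefficients and $2r>\sigma^2$ (e.g.\ the paper's own baseline $r=6\%$, $\sigma=15\%$), one has $S_t^2= s^2\exp\bigl((2r-\sigma^2)t+2\sigma W_t\bigr)$, which grows exponentially a.s.\ since $W_t/t\to 0$, so $\langle M\rangle_t\log\log t/t^2\to\infty$. The hypothesis must therefore be read on the infinite horizon, i.e.\ $\mathbb E[\langle M\rangle_\infty]<\infty$, with bounded volatility serving to make the integrals $\int S_s^2\sigma_s^2\myd s$, $\int S_s^2\myd s$ and $\int S_s^2\sigma_s^{-2}\myd s$ comparable, whichever process the norm is placed on. But under that reading your obstacle dissolves into one line: $\langle M\rangle_\infty<\infty$ a.s., so $\langle M\rangle_t\log\log t/t^2\le \langle M\rangle_\infty\log\log t/t^2\to0$, and $M$ is an $L^2$-bounded continuous martingale, hence converges a.s., giving $M_t/t\to0$ with no LIL at all --- only the trivial branch of your Step 3 is ever active. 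In short: under the only reading that makes the proposition true, the "main obstacle" you flagged is immediate and the LIL machinery (yours, and the paper's citation) is dispensable; under the reading you were implicitly working with, the obstacle is not technical but fatal.
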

The limitation above is a straight result of the strong law of large numbers for Brownian motions and martingales. 
Readers can refer to Lemma~1.3.2 in \cite{fernholz_stochastic_2002} for the martingale case.
This proposition states that the long-term average return of $X^\star$ depends on the time-averaged signal $\beta^{1/2}_t\theta_t$.

\subsection{Volatility Budgeting: SABR Model}
\label{Section: SABR}
In this subsection, we demonstrate the relationship between the smile of volatility and the continuous-time risk contribution established Definition~\ref{def:Continuous-time Contribution} by using a risk-budgeted forward interest rate example. 
In light of the smile of the implied volatility, we choose the SABR model to describe the dynamics of this future interest rate.

The SABR model, a stochastic volatility model, is frequently used in the financial industry, particularly in interest rate derivative markets. It was first proposed in \cite{hagan2002managing} to capture the volatility smile for Markovian local volatility model improvement.
Just as the name \emph{'stochastic, alpha, beta, rho'} implies, we consider a single forward interest rate with the stochastic-volatile underlying dynamic
	$$
	\begin{cases}
		\myd F_t = \sigma_t F_t^{\hat \beta} \myd W_{1,t}
		\\
		F_0 = f
	\end{cases}, 
	\begin{cases}
		\myd \sigma_t = \hat \alpha \sigma _t \myd W_{2,t}
		\\
		\sigma_0 = s
	\end{cases}
	$$
where $\myd \langle W_1, W_2\rangle_t = \hat \rho \myd t$. 
Controlling the implied at-the-money volatilities of European options, the parameter $\hat \alpha \geq 0$ acts as the volatility of volatility for the underlying $F$. 
The power coefficient $\hat \beta \in [0,1]$ characterizes the curvature of implied at-the-money volatilities. 
And the remained constant $\hat \rho \in (-1, 1)$ reflects the correlation of two risk sources $W_1$ and $W_2$. 
For convenience, a $2$-dim equivalent Brownian motion $(B_1, B_2)$  is defined by
	\begin{align*}
	\myd W_{1,t} & = \myd B_{1,t},
	\\
	\myd W_{2,t} & = \hat \rho \myd B_{1,t} + \sqrt{1-\hat \rho^2}\myd B_{2,t}.
	\end{align*}

The risk contribution of the specified hedging policy $u$ is now investigated.
The associated terminal variance is given by
	\begin{equation}
	\text{Var}((u\circ F)_T) = \mathbb{ E}[(u\circ F)_T^2]  = \mathbb E [\int_0^T u_t^2 \sigma_t^2 F_t^{2\hat \beta}\myd t]
	\end{equation}
from which we can get the related marginal risk contribution $(c_t)_t$
	\begin{equation}
	c_t = u_t \sigma^2_t F_t^{2\hat \beta} = u_ts^2\mathcal E^2(\hat \alpha \hat \rho B_1)_t\mathcal E^2(\hat \alpha \sqrt{1-\hat \rho ^2}B_2)_tF_t^{2\hat \beta }.
	\end{equation}
Then we can also give the risk contribution $(k_t)_t$
	\begin{equation}
	k_t =u_t c_t = u^2_t s^2\mathcal E^2(\hat \alpha \hat \rho B_1)_t\mathcal E^2(\hat \alpha \sqrt{1-\hat \rho ^2}B_2)_tF_t^{2\hat \beta }
	\end{equation}
where $\mathcal E (\cdot)$ represents the {\Doleans}-Dade exponential martingale of processes.
The parameters in the SABR model are primarily used to calibrate the volatility smile; nevertheless, they can have an impact on our risk contribution in a variety of ways:
\begin{itemize}
    \item The fundamental level of stochastic volatility $s$ has a positive correlation with the contribution of marginal risk;
    \item The stochastic exponential yields a log-normal distribution of volatility changing through $(t,\omega)$ with a scale factor of $\alpha$;
    \item  The correlation $\rho$ reconciles the volatility produced by the rewritten risk resources $B_1$ and $B_2$;
    \item  Naturally, the price of the underlying $F$ has the right to assert that a higher price implies a greater risk contribution;
    \item  Controlling the curvature of implied skewness when confronted with the calibration of options, the parameter $\hat \beta$ also distributes the influence of $F$ across different price levels in an exponential way.
\end{itemize}

Risk-budgeted policies within different information are also of interest, particularly in the case mentioned in Corollary~\ref{coro:Projection}, where we are restricted to some sub-$\sigma$-algebras. 
Suppose that we are dealing with the policy limited to a fixed risk level $\lambda>0$, i.e., 
	$$
	\text{Var}(F_T^u) = \lambda.
	$$
Assume that $\rho = 0$ and a shrinkage filtration $\mathbb H := \{\mathcal H_t\}_t$ generated by $B_1$. The following four cases are projected with less information one-by-one.
\begin{enumerate}
    \item (Purely Risk-Parity) In the view of Theorem~\ref{thm:C-Optimization}, we can indicate the optimal policy associated to risk-parity budget $\dfrac{\lambda}{T}$ for almost all $(t,\omega)$ in $\mathbb F$ by
            $$
            u_t\cdot \big(u_ts^2\mathcal E^2(\hat \alpha  B_2)_tF_t^{2\hat \beta}\big) = \dfrac{\lambda}{T}.
            $$
          Then purely risk-parity policy with total risk $\lambda$ is given by
            $$
            u^\star_t = \sqrt{\dfrac{\lambda}{T}}\dfrac{1}{s\mathcal E(\hat \alpha B_2)_tF_t^{\hat \beta}}.
            $$
    \item  ($\mathbb H$-Projection) Considering that stochastic volatility of underlying $F$ is not easy to perceive, we have a second-best solution restricted on $\mathbb H$ by Corollary~\ref{coro:Projection}
            $$
            u^\prime _t = \sqrt{\dfrac{\lambda}{T}}\dfrac{1}{s\bar F_t^{\hat \beta}}
            $$
          where $\bar F$ is the associated projection of the underlying asset. Comparing this solution to $u^\star$, we cannot distinguish the policy among different $\mathcal E(\hat \alpha B_2)_t(\omega)$ namely the volatility rate $\sigma_t(\omega)$ and $(\sigma_t)_t$ is substituted by its expectation $s$ in the formula above. This projected policy is substantially equivalent to that of the Black-Scholes model, where the volatility rate is assumed constant.
    \item (Non-stochastic Time-varying) Furthermore another shrunk $\sigma$-algebra $\mathcal F_0 \otimes \mathcal B_{[0,T]}$ gives a non-stochastic time-varying policy $u''$ by
          $$
              u''(t)^2 \mathbb E [s^2\mathcal E^2(\hat \alpha  B_2)_tF_t^{2\hat \beta}] = \dfrac{\lambda}{T}
          $$
          for each $t$.
    \item (Naive Single-Period) Finally, a buy-and-hold policy $\bar u$ can give the degenerated single-period solution by the equation
          $$
              \bar u ^2 \mathbb E [\int_0^T s^2\mathcal E^2(\hat \alpha  B_2)_tF_t^{2\hat \beta} \myd t] = \bar u ^2 \text{Var}(F_T) = \lambda.
          $$
\end{enumerate}

These four policies illustrate the distinctions between the single-period case and the continuous-time one. 
Similar to the extended mean value theorem of integrals, single-period policies act like a low-resolution version of continuous-time ones. 
This risk budgeting technique implies that adjusting the position of the underlying, continuous-time risk budgeting can produce more exquisite $(t,\omega)$-pointwise policies associated with more robust targeted risk distributions.

\subsection{Risk Contribution of Continuous-Time Mean-Variance Policy}
\label{Section: Continuous-time MV Example}
 Mean-variance investors are facing a Markowitz's mean-variance minimization target
	$$
	\text {to minimize }J_{MV}(u) = -\mathbb E (X_T^u) + \tau \Var(X_T^u)
	$$
with the coefficient $\tau>0$ representing the risk tolerance to make a balance between the return $\mathbb E (X_T^u)$ and the risk $\Var (X_T^u)$. 
Having found the approach to get the risk contribution of a policy in Section~\ref{Section: Contribution}, we are also interested in how the related risk contribution of continuous-time mean-variance policy behaves.

For simplicity, here we assume the dynamics of two assets are
	\[
	\begin{cases}
	\myd S_t^{0} = r S_t^0\myd t \\
	S_0^0 = s^0,
	\end{cases}
	\begin{cases}
	\myd S_t^1 = S_t^1b\myd t + S_t^1 \sigma \myd W_t
	\\
	S_0^1 = s^1.
	\end{cases}
	\]
Then the wealth process of the self-financing portfolio is specified as
	$$
	\begin{cases}
	\myd X_t = (rX_t + (b-r)M^1_t)\myd t + \sigma M^1_t \myd W_t
	\\
	X_0 = x_0 >0
	\end{cases}
	$$
where $M^1$ is the amount of money allocated on $S^1$. 
According to \cite{zhou2000continuous}, the optimal policy of object $J_{MV} $ is given in a linear feedback manner by
	\begin{align*}
	M^1_t & = M^1(t,X_t)=(\sigma \sigma^\top)^{-1}(b-r)[\dfrac{2\tau x_0 e^{rT} + e^{(b-r)^2(\sigma \sigma ^\top)^{-1}T}}{2\tau}e^{-r(T-t)}-X_t],
	\\
	M^0_t & =M^0(t,X_t) =X_t - M^1_t.
	\end{align*}
For convenience, we denote $M = [M^0, M^1]^\top$ in a vector style where $M^0 = X- M^1$ is the money on the bond $S^0$.
One can also derive the expected return of terminal wealth
	$$
	\mathbb E [X_T] = x_{0} \Big(1 - e^{- \frac{T (b - r)^{2}}{\sigma\sigma^\top}}\Big) e^{\frac{T \left(b - r\right)^{2}}{\sigma\sigma^\top}} + x_{0} e^{T \big(r - \frac{\left(b - r\right)^{2}}{\sigma\sigma^\top}\big)}
	$$
With the help of Corollary~\ref{coro: explicit risk contribution}, we can give the related (marginal) risk contribution of the continuous-time mean-variance policy
	\begin{align*}
	c_t          & = \left[\begin{matrix}2 X r - r \left(\mathbb {E}[X] + x_{0}\right) \left(- P + X\right)\\2 X b + \sigma\sigma^\top \left(P X + Q_t\right)^{2} - b \left(\mathbb {E}[X] + x_{0}\right) \left(P X + Q_t\right)\end{matrix}\right]
	\\
	(u\odot c)_t & =
	\left[\begin{matrix}X r \left(- 2 P + 2 X\right) - r \left(\mathbb {E}[X] + x_{0}\right) \left(- P + X\right)\\X b \left(2 P X + 2 Q_t\right) + \sigma\sigma^\top \left(P X + Q_t\right)^{2} - b \left(\mathbb {E}[X] + x_{0}\right) \left(P X + Q_t\right)\end{matrix}\right]
	\end{align*}
where we denote $P:=-\dfrac{b-r}{\sigma \sigma ^\top}$
and
$Q_t:= \frac{P  e^{- r \left(T - t\right)}}{2  \tau}\left(2 \tau x_{0} e^{T r} + e^{\frac{T \left(b - r\right)^{2}}{\sigma\sigma^\top}}\right)$ for convenience.

Hence the risk contribution $K:= u\odot c$  of two assets has the elements
	\begin{align*}
	K_t^0 =&  \dfrac{2 r \left(\sigma\sigma^\top + b - r\right)}{\sigma\sigma^\top}X_t^2
			+ \dfrac{r}{\sigma \sigma ^\top}\left[ (\sigma \sigma ^\top -b + r)(r^{\rho T}+ e^{(r-\rho)T}) + 2(r-b)e^{rt} \right]x_0 X_t 
		\\
			&+ \dfrac{r \left(- b + r\right) e^{T \rho - T r + r t}}{\sigma\sigma^\top} \dfrac{X_t}{\tau}
				+ \dfrac{br-r^2}{\sigma \sigma ^\top}e^{rt}\left(r^{\rho T}+ e^{(r-\rho)T} \right)x_0^2
		\\  &+ \dfrac{r(b-r)e^{rt}}{2\sigma \sigma ^\top}\left( 1+ e^{2\rho T -rT} \right)\dfrac{x_0}{\tau}
	\end{align*}
and
	\begin{align*}
	K_t^1 = & \dfrac{r^2 - b^2}{\sigma \sigma ^\top} X_t^2
				+  \dfrac{b-r}{\sigma \sigma ^\top}(be^{\rho T} + be^{(r-\rho)T} + 2re^{rt})x_0 X_t
	\\		& +\dfrac{br-r^2}{\sigma \sigma ^\top}e^{\rho T - r(T-t)}\dfrac{X_t}{\tau}
		+ \dfrac{e^{rt}}{\sigma \sigma ^\top} \left((b-r)^2e^{rt} - (b^2 - br)(e^{\rho T}+ e^{(r-\rho)T})  \right)x_0^2 
	\\
	& + \dfrac{e^{-r(T-t)}}{2\sigma \sigma ^\top}\left(2(b-r)^2e^{\rho T + rt} -(b^2 - br)(e^{2\rho T}+ e^{rT})  \right)\dfrac{x_0}{\tau}
	+ \dfrac{1}{\tau ^2}\dfrac{(b-r)^2e^{2(\rho T - rT + rt)}}{4\sigma \sigma ^\top}.
	\end{align*}

In order to have a better understanding on $K$, we illustrate the impact of the exogenous parameters $x_0$ and $ \tau$ in Fig.~\ref{fig:RiskContribution_DifferentParameters}.
	\begin{figure}[H]
	\centering 
	\subfigure[Impact of $x_0$]{
		\includegraphics[width = 2.5 in]{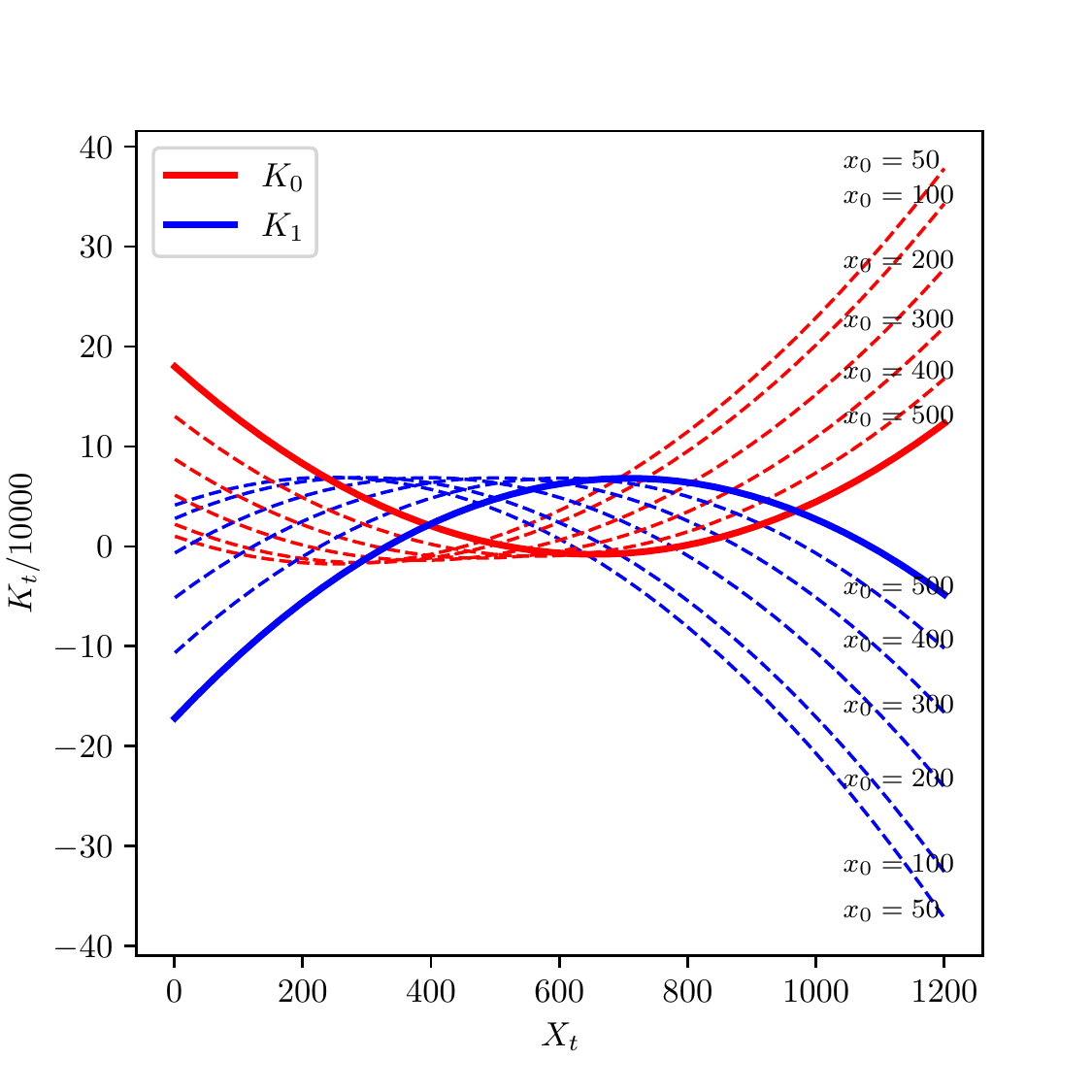}}
	\subfigure[Impact of $\tau$]{
		\includegraphics[width = 2.5 in]{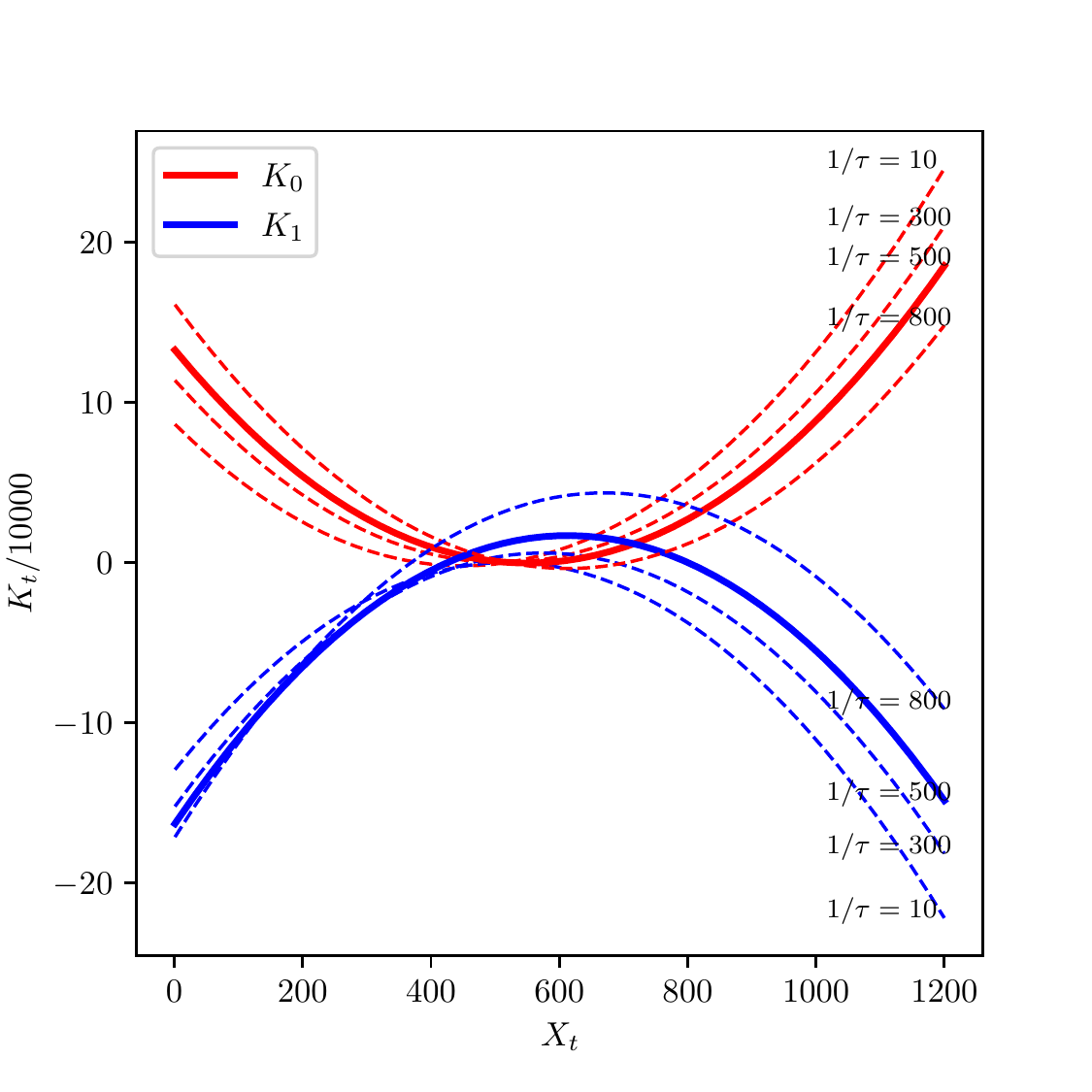}}
	\caption{The Risk Contribution with Different Exogenous Parameters (Baseline Setting: $b = 12\%, \sigma \sigma^\top  = (15\%)^2, r = 6\%, T=1, t = 0.5$)}
	\label{fig:RiskContribution_DifferentParameters}
	\end{figure} 
We can summarize:
 \begin{itemize}
 	\item The risk contributions on each asset are quadratically distributed with respect to the spot price $X_t$. 
     The recession directions of $K^0_t$ and $K^1_t$ are different and specified by the market parameters $r, b, \sigma$ instead of the tolerance $\tau$. 
     As the deviation risk measure suggests, risk contributions of the terminal variance are at a low level around the average of the spot price $\mathbb E X_t$ and a high level far away from the average price.  
    \item In the central area of the spot price $X_t$, $K^1_t$ has a positive risk contribution and vice versa with $K^0_t$. 
     In the extreme circumstances on the spot price $X_t$, however, it shows that the risk contribution $K^1_t$ on the stock is negative, which exceeds our expectation that risky assets are supposed to contribute more in extreme circumstances. 
     This conflict is due to shorting the stock when the spot price $X_t$ is high, or in other words, negative shares on the stock with a positive marginal contribution $c^{1}_t$ lead to the negative cases of $K^1_t$.
    \item When the spot price $X_t$ is varied, the risk tolerance $\tau$ does not affect the second-order state.
     Indeed, when placed in the same position as $\dfrac{1}{x_0}$, $\tau$ can merely contribute to the shift of $K$.
     One can expect that a higher initial wealth parameter $x_0$ brings us a whole lift of risk level without changing the structure of risk contribution, and $\tau$ does the same.
    \item  It appears that policies in the affine feedback form of $X_t$, as well as the mean-variance policy as a specific case, cannot prevent the concentration of risk. 
 \end{itemize}

\section{Concluding Remarks}
\label{Section: Concluding}

We have generalized the concept of risk contribution to the continuous-time case in a differential manner.
The primary purpose of this paper is to give a proper definition of the continuous-time risk contribution for the terminal variance.
The marginal risk contribution is defined from the {\Gateaux} derivative of terminal variance.
The non-degeneration condition ensures the uniqueness of the marginal risk contribution.

The continuous-time risk budget problem is then raised.
Different from the single-period case, the risk budget together with the risk contribution depends not only on the assets but also on $(t,\omega)\in ([0,T],\Omega)$.
The risk-budgeted investment is related to an optimization object, a tradeoff between the minimization of risk and the diversification of shares.
We give an economic interpretation of this optimization in the view of constraints.

We also put three examples to stress the application prospect of the continuous-time risk contribution/budgeting.
\begin{itemize}
    \item A price-regularized risk-parity strategy reproduces the volatility-managed portfolios in \cite{moreira2017volatility}.
    With the help of this risk-budgeted strategy, we give the long-term averaged return of this strategy.
    \item  The risk contribution depends on the concrete dynamics of the assets. 
    We give the risk contribution of SABR dynamics in the second example to describe how the risk contribution is related to the coefficients, which can also be implied by the options.
    \item  As a tool to measure the risk concentration, the risk contribution of the continuous-time mean-variance portfolio is calculated in the last example.
    The result shows that the continuous-time mean-variance policy has a concentration indeed.	
\end{itemize}

\section{Acknowledgements}
\label{Section: Acknowledgment}
The authors would like to thank the referees for their helpful comments.

\bibliographystyle{apalike}
\bibliography{MyBib}

\begin{thebibliography}{}

\bibitem[Best and Grauer, 1991a]{best1991sensitivity1}
Best, M.~J. and Grauer, R.~R. (1991a).
\newblock On the sensitivity of mean-variance-efficient portfolios to changes
  in asset means: some analytical and computational results.
\newblock {\em The review of financial studies}, 4(2):315--342.

\bibitem[Best and Grauer, 1991b]{best1991sensitivity2}
Best, M.~J. and Grauer, R.~R. (1991b).
\newblock Sensitivity analysis for mean-variance portfolio problems.
\newblock {\em Management Science}, 37(8):980--989.

\bibitem[Cherny and Orlov, 2011]{cherny_two_2011}
Cherny, A. and Orlov, D. (2011).
\newblock On two approaches to coherent risk contribution.
\newblock {\em Mathematical Finance}, 21(3):557--571.

\bibitem[Cohen and Elliott, 2015]{cohen2015stochastic}
Cohen, S.~N. and Elliott, R.~J. (2015).
\newblock {\em Stochastic calculus and applications}.
\newblock Springer.

\bibitem[Fernholz, 2002]{fernholz_stochastic_2002}
Fernholz, E.~R. (2002).
\newblock {\em Stochastic {Portfolio} {Theory}}.
\newblock Springer New York, New York, NY.

\bibitem[Hagan et~al., 2002]{hagan2002managing}
Hagan, P.~S., Kumar, D., Lesniewski, A.~S., and Woodward, D.~E. (2002).
\newblock Managing smile risk.
\newblock {\em The Best of Wilmott}, 1:249--296.

\bibitem[Kalkbrener, 2005]{kalkbrener_axiomatic_2005}
Kalkbrener, M. (2005).
\newblock An {Axiomatic} {Approach} to {Capital} {Allocation}.
\newblock {\em Mathematical Finance}, 15(3):425--437.

\bibitem[Li and Ng, 2000]{li_optimal_2000}
Li, D. and Ng, W.-L. (2000).
\newblock Optimal {Dynamic} {Portfolio} {Selection}: {Multiperiod}
  {Mean}-{Variance} {Formulation}.
\newblock {\em Mathematical Finance}, 10(3):387--406.

\bibitem[Litterman, 1996]{litterman1996hot}
Litterman, R. (1996).
\newblock Hot spots and hedges.
\newblock {\em Journal of Portfolio Management}, page~52.

\bibitem[Maillard et~al., 2010]{maillard2010properties}
Maillard, S., Roncalli, T., and Te{\"\i}letche, J. (2010).
\newblock The properties of equally weighted risk contribution portfolios.
\newblock {\em The Journal of Portfolio Management}, 36(4):60--70.

\bibitem[Merton, 1972]{merton1972analytic}
Merton, R.~C. (1972).
\newblock An analytic derivation of the efficient portfolio frontier.
\newblock {\em Journal of financial and quantitative analysis},
  7(4):1851--1872.

\bibitem[Merton, 1992]{merton1992Continuous}
Merton, R.~C. (1992).
\newblock Continuous-time finance.
\newblock {\em Journal of Finance}, 52(4).

\bibitem[Moreira and Muir, 2017]{moreira2017volatility}
Moreira, A. and Muir, T. (2017).
\newblock Volatility-managed portfolios.
\newblock {\em The Journal of Finance}, 72(4):1611--1644.

\bibitem[Pearson, 2002]{pearson_risk_2002}
Pearson, N.~D. (2002).
\newblock {\em Risk budgeting: portfolio problem solving with {Value}-at-risk}.
\newblock Wiley finance series. J. Wiley, New York.

\bibitem[Qian, 2005]{qian2005risk}
Qian, E.~E. (2005).
\newblock Risk parity portfolios: Efficient portfolios through true
  diversification.
\newblock {\em Panagora Asset Management}.

\bibitem[Rockafellar et~al., 2006]{rockafellar_generalized_2006}
Rockafellar, R.~T., Uryasev, S., and Zabarankin, M. (2006).
\newblock Generalized deviations in risk analysis.
\newblock {\em Finance and Stochastics}, 10(1):51--74.

\bibitem[Samuelson, 1975]{samuelson1975lifetime}
Samuelson, P.~A. (1975).
\newblock Lifetime portfolio selection by dynamic stochastic programming.
\newblock In {\em Stochastic Optimization Models in Finance}, pages 517--524.
  Elsevier.

\bibitem[Zhou and Li, 2000]{zhou2000continuous}
Zhou, X.~Y. and Li, D. (2000).
\newblock Continuous-time mean-variance portfolio selection: A stochastic {LQ}
  framework.
\newblock {\em Applied Mathematics and Optimization}, 42(1):19--33.

\end{thebibliography}
\end{document}